\theoremstyle{plain}
\newtheorem{theorem}{Theorem}[section]
\newtheorem{lemma}[theorem]{Lemma}
\newtheorem{corollary}[theorem]{Corollary}
\theoremstyle{definition}
\newtheorem{definition}[theorem]{Definition}
\newtheorem{example}[theorem]{Example}
\theoremstyle{remark}
\def\eqref#1{equation~\ref{#1}}
\def\1{\bm{1}}
\providecommand{\norm}[1]{\lVert#1\rVert}
\def\vzero{{\bm{0}}}
\def\vx{{\bm{x}}}
\DeclareMathAlphabet{\mathsfit}{\encodingdefault}{\sfdefault}{m}{sl}
\SetMathAlphabet{\mathsfit}{bold}{\encodingdefault}{\sfdefault}{bx}{n}
\def\sR{{\mathbb{R}}}
\newcommand{\E}{\mathbb{E}}
\algnewcommand{\LeftComment}[1]{\Statex \(\triangleright\) #1}
\DeclareRobustCommand{\vol}{\mathrm{vol}}
\DeclareRobustCommand{\opt}{\mathrm{OPT}}
\DeclareRobustCommand{\cost}{\mathrm{cost}}
\title{Learning-Augmented Streaming Algorithms \\ for Correlation Clustering}
\author{
Yinhao Dong\thanks{School of Computer Science and Technology, University of Science and Technology of China, Hefei, Anhui Province, China. Email: {yhdong@mail.ustc.edu.cn}}
\and
Shan Jiang\thanks{School of Computer Science and Technology, University of Science and Technology of China, Hefei, Anhui Province, China. Email: {js\_@mail.ustc.edu.cn}}
\and
Shi Li\thanks{School of Computer Science, Nanjing University, Nanjing, Jiangsu Province, China. New Cornerstone Science Laboratory. Email: {shili@nju.edu.cn}}
\and
Pan Peng\thanks{Corresponding author. School of Computer Science and Technology, University of Science and Technology of China, Hefei, Anhui Province, China. Email: {ppeng@ustc.edu.cn}}
}
\date{}
\begin{document}

\maketitle

\begin{abstract}
  We study streaming algorithms for Correlation Clustering. Given a graph as an arbitrary-order stream of edges, with each edge labeled as positive or negative, the goal is to partition the vertices into disjoint clusters, such that the number of disagreements is minimized. In this paper, we give the first learning-augmented streaming algorithms for the problem on both complete and general graphs, improving the best-known space-approximation tradeoffs. Based on the works of Cambus et al. (SODA'24) and Ahn et al. (ICML'15), our algorithms use the predictions of pairwise distances between vertices provided by a predictor. For complete graphs, our algorithm achieves a better-than-$3$ approximation under good prediction quality, while using $\tilde{O}(n)$ total space. For general graphs, our algorithm achieves an $O(\log |E^-|)$ approximation under good prediction quality using $\tilde{O}(n)$ total space, improving the best-known non-learning algorithm in terms of space efficiency. Experimental results on synthetic and real-world datasets demonstrate the superiority of our proposed algorithms over their non-learning counterparts.
\end{abstract}

\newpage
\section{Introduction}
\label{sec:intro}
Correlation Clustering is a fundamental problem in machine learning and data mining, and it has a wide range of applications, such as image segmentation \citep{KYNK14}, community detection \cite{SDELM21,Veldt18community}, automated labeling \citep{CKP08}, etc. 
Given a graph $G=(V,E=E^+\cup E^-)$, where each edge is labeled as positive ($+$) or negative ($-$), the goal is to find a clustering $\mathcal{C}$, i.e., a partition of $V$ into disjoint clusters $C_1,C_2,\dots,C_t$, where $t$ is arbitrary, that minimizes the following cost:
\begin{align}
\label{eq:cost}
   \operatorname{cost}_G(\mathcal{C}) := |\{(u,v)\in E^+: \exists i\neq j,~u\in C_i, v\in C_j \}| 
   +|\{(u,v)\in E^-: \exists i,~u,v\in C_i \}|.
\end{align}
That is, the number of positive edges between different clusters, plus the number of negative edges within the same cluster. (We often refer to this as the number of \emph{disagreements}.)

The most commonly studied version of this problem is on complete graphs, introduced by Bansal et al.~\cite{BBC04}, and known to be NP-hard and even APX-hard~\citep{CGW05}. Hence, significant efforts have been dedicated to designing approximation algorithms for this setting \citep{BBC04,CGW05,ACN08,CMSY15,CLN22,CLLN23,CCLLNV24,CLPTYZ24,CCL+25}, culminating in a $1.437$-approximation via a linear program (LP) based rounding \citep{CCLLNV24,CCL+25}. 
In contrast, the problem on general graphs has received less attention. Charikar et al.~\cite{CGW05} and Demaine et al.~\cite{DEFI06} proposed an $O(\log n)$-approximation algorithm via ball-growing based LP rounding. They also showed that this version is equivalent to the minimum multicut problem, and is thus APX-hard and unlikely to admit better-than-$\Theta(\log n)$ approximations.

Partially due to storage limitations and the rapidly growing volume of data, \emph{streaming algorithms} for Correlation Clustering have received increasing attention recently.
In this setting, a graph is represented as a sequence of edge insertions or deletions, known as a \emph{graph stream}. The objective is to scan the sequence in a few number of passes, ideally, $1$ pass and find a high-quality clustering of the vertex set, 
while minimizing space usage. If the sequence contains only edge insertions, it is referred to as an \emph{insertion-only} stream; if both insertions and deletions are allowed, it is referred to as a \emph{dynamic} stream. Since the output of the clustering inherently requires $\Omega(n)$ space (as each vertex needs a label to indicate its cluster membership), most previous research has primarily focused on the \emph{semi-streaming} model~\cite{Feigenbaum05semi}, i.e., the algorithm is allowed to use $\tilde{O}(n):= O(n\operatorname{polylog}n)$ space.\footnote{On the other hand, Assadi et al.~\cite{ASW23} studied streaming algorithms using $\operatorname{polylog}n$ bits of space for estimating the optimum Correlation Clustering \emph{cost}, while their algorithms do \emph{not} find the clustering.}
We note that the space used by the streaming algorithm can be further divided into the space for updating data structures during the stream and the space for post-processing.

A long line of prior work has focused on the complete graph setting ~\cite{ACGMW21,CLMNP21,AW22,BCMT22,BCMT23,CM23,CKLPU24,CLPTYZ24,AKP25}. 
If the total space of the algorithm is restricted to $\tilde{O}(n)$, then
the best-known space-approximation tradeoff in dynamic streams is a $(3+\varepsilon)$-approximation algorithm~\cite{CKLPU24}, which uses $\tilde{O}(\varepsilon^{-1}n)$ total space. 
If only the space used for updating during the stream is restricted to $\tilde{O}(n)$, then
the best-known space-approximation tradeoff in dynamic streams is achieved by an $(\alpha_{\mathrm{BEST}}+\varepsilon)$-approximation algorithm~\cite{AKP25}, where $1.042 \le \alpha_{\mathrm{BEST}} \le 1.437$~\cite{CCLLNV24} denotes the best approximation ratio of any polynomial-time classical algorithm. However, this algorithm does not bound the space for post-processing, which can be significantly larger.
For general graphs, the only known dynamic streaming algorithm is due to Ahn et al.~\cite{ACGMW21}, which uses the multiplicative weight update method on the sparsified graph to achieve a $3(1+\varepsilon)\log|E^-|$-approximation while using $\tilde{O}(\varepsilon^{-2}n+|E^-|)$ space.

In this paper, we explore new approaches that enable improved space-approximation tradeoffs on both complete and general graphs by leveraging ideas from \emph{learning-augmented algorithms}. A learning-augmented algorithm uses \emph{predictions} to enhance its performance. These algorithms stem from practical scenarios where machine learning techniques exploit data structure to exceed the worst-case guarantees of traditional algorithms. 
Our learning-augmented algorithms fit into the category of learning-augmented streaming algorithms \citep{frequency,JLLRW20,CEILNRSWWZ22,ACNSV23,Dong25maxcut,ACGSW25}.
It is worth mentioning that both our work and previous efforts on learning-augmented streaming algorithms mainly focus on using predictors to improve the corresponding space-accuracy tradeoffs.

Now, we describe the prediction we are considering. We assume that the algorithm has oracle access 
to a predictor $\Pi: \tbinom{V}{2} \rightarrow [0,1]$ that predicts the \emph{pairwise distances}
\footnote{Note that one can directly treat $1-d_{uv}$ as the pairwise \emph{similarity} between $u$ and $v$.} 
$d_{uv}$ between any two vertices $u$ and $v$ in $V$. 
We believe such predictors are natural and arise in many situations. 

\begin{example}[Multiple graphs on the same vertex set]
It is common to define \emph{multiple} graphs over the same vertex set. For example, in healthcare, patients can be represented as vertices, and different networks may capture relationships such as shared medical conditions (disease networks), visits to the same providers (provider networks), or participation in clinical trials. In biology, vertices might represent genes or proteins, with different networks reflecting protein-protein interactions, gene co-expression, or signaling pathways.
Machine learning or data mining techniques can then be used to learn pairwise distances between nodes across these networks. If two patients or genes/proteins are similar in one network, they often exhibit similar behavior in others.
\end{example}

\begin{example}[Temporal graphs] 
A similar situation occurs in temporal graphs, where a sequence of graphs shares the same vertex set but has different edge sets over time. Pairwise distances learned from past graphs can help extract structural insights in the present or future.
\end{example}

\emph{Leveraging these distances across networks can greatly aid in exploring the cluster structure of any newly defined network over the same vertex set}. 
We also remark that several other works have considered similar oracles for pairwise distance in different contexts, such as the query model~\citep{Silwal23kwikbucks, kuroki2024query}.

\subsection{Our results}

Our results are summarized in \cref{tab:prior-work}. Using the above predictions, we give the first learning-augmented streaming algorithms for Correlation Clustering on both complete and general graphs. Specifically, for complete graphs, our algorithm beats the $3$-approximation if the predictions are good, while still achieving a $(3+\varepsilon)$-approximation even if the predictor behaves poorly. For general graphs, our algorithm achieves an $O(\log |E^-|)$-approximation under good quality while using less space than the existing non-learning algorithm.   Furthermore, our algorithms are simple and easily implementable. 
We will use a parameter $\beta \ge 1$ to measure the quality of our predictor. 
Informally, we call a predictor \emph{$\beta$-level} if the cost of the clustering induced by the predictions is at most a $\beta$ factor of the cost of the optimal solution. 
We refer to \cref{sec:predictor} for the formal definition. 
In short, the smaller $\beta$ is, the higher the quality of the predictor. 

\begin{table*}[t!]
\centering
\caption{Comparison of our main results with the best-known space-approximation tradeoffs in polynomial time. Here, $n=|V|,\varepsilon \in (0,1)$ and $\beta \geq 1$. All algorithms are single-pass, and space is measured in words.}
\label{tab:prior-work}
\renewcommand{\arraystretch}{1.1}
\begin{tabular}{|c|c|c|}
\hline
\textbf{Setting} & \makecell{\textbf{Best-known space-approximation} \\ \textbf{tradeoffs (without predictions)}}  & {\bf Our results}  \\ \hline
\multirow[c]{4}{*}{\makecell{complete graphs,\\ dynamic streams}}  & \makecell{$(3+\varepsilon)$-approximation,\\ $\tilde{O}(\varepsilon^{-1}n)$ total space  
\citep{CKLPU24}}  & \multirow[c]{4}{*}{\makecell{$(\min\{2.06\beta, 3\}+\varepsilon)$-approximation,\\ $\tilde{O}(\varepsilon^{-2}n)$ total space (Theorem~\ref{thm:main-result-dynamic})}} \\ \cline{2-2}
& \makecell{$(\alpha_{\mathrm{BEST}}+\varepsilon)$-approximation,\\ $\tilde{O}(\varepsilon^{-2}n)$ space for updating,\\$\operatorname{poly}(n)$ space for post-processing \citep{AKP25}} & \\ \hline
\makecell{general graphs,\\ dynamic streams}  & \makecell{$O(\log|E^-|)$-approximation,\\ $\tilde{O}(\varepsilon^{-2}n+|E^-|)$ total space  \citep{ACGMW21}} & \makecell{$O(\beta \log |E^-|)$-approximation,\\ $\tilde{O}(\varepsilon^{-2}n)$ total space (Theorem~\ref{thm:main-result-general})} \\ \hline
\end{tabular}
\end{table*}

For the complete graph setting, we have the following theorem. (In the following and throughout the paper, ``with high probability'' refers to the probability of at least $1-1/n^c$ for some constant $c>0$.)

\begin{theorem}
\label{thm:main-result-dynamic} 
Let $\varepsilon\in (0,1/4)$ and $\beta \geq 1$. Given oracle access to a $\beta$-level predictor, there exists a single-pass streaming algorithm that, with high probability, achieves an expected $(\min\{2.06\beta, 3\}+\varepsilon)$-approximation for Correlation Clustering on complete graphs in dynamic streams. The algorithm uses $\tilde{O}(\varepsilon^{-2}n)$ words of space. 
\end{theorem}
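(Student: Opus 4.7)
The plan is to combine two parallel streaming procedures and output whichever of their clusterings has smaller estimated cost. The first procedure ignores the predictor and runs the Cambus et al.\ (SODA'24) streaming algorithm, which in $\tilde{O}(\varepsilon^{-1}n)$ total space produces a clustering $\gC_1$ with $\mathbb{E}[\cost(\gC_1)]\le(3+\varepsilon)\opt$. The second procedure exploits the predictor: I would view the predicted distances $\{d_{uv}\}$ as a fractional LP solution and apply a CMSY15-style pivot rounding in post-processing, i.e., draw a uniform random ordering of $V$, greedily pick pivots, and assign each unassigned vertex $u$ to the current pivot $v$ with probability $f(d_{uv})$ for the CMSY choice of $f:[0,1]\to[0,1]$. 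The standard CMSY analysis bounds the expected cost of the resulting clustering $\gC_2$ by $2.06\cdot\mathrm{LP}(d)$, where $\mathrm{LP}(d) := \sum_{(u,v)\in E^+} d_{uv} + \sum_{(u,v)\in E^-}(1-d_{uv})$; unpacking the $\beta$-level definition in \cref{sec:predictor} should yield $\mathrm{LP}(d)\le\beta\cdot\opt$, so $\mathbb{E}[\cost(\gC_2)]\le 2.06\beta\cdot\opt$.

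Crucially, the rounding of $\gC_2$ never touches the graph itself---only the predictor---so it runs entirely in post-processing once the stream terminates and requires no dedicated streaming storage beyond the random ordering. What the stream must retain is enough information to \emph{estimate} $\cost(\gC_1)$ and $\cost(\gC_2)$ so that we can output the better of the two. For this I would maintain, for each vertex $v$, $\tilde{O}(\varepsilon^{-2})$ $\ell_0$-samples of both its positive and negative adjacency lists (which dynamic streams support in $\tilde{O}(\varepsilon^{-2}n)$ total space), together with exact counts of $|E^+|$ and $|E^-|$. A standard Chernoff/union-bound argument over the two candidate clusterings then gives $(1\pm\varepsilon)$ multiplicative estimates of their costs. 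Returning the clustering with smaller estimate yields expected cost at most $(1+O(\varepsilon))\cdot\min\{3+\varepsilon,\;2.06\beta\}\cdot\opt$, and rescaling $\varepsilon$ establishes the stated bound.

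\textbf{The main obstacle} is the reduction from the $\beta$-level guarantee to a bound on the fractional quantity $\mathrm{LP}(d)$: the $\beta$-level definition presumably concerns the cost of an \emph{integral} clustering derived from $d$ (for instance via thresholding), whereas CMSY's analysis requires control of the fractional sum $\mathrm{LP}(d)$. Closing this gap is where the argument needs real care; one route is to show that the integral-clustering cost and $\mathrm{LP}(d)$ differ by at most a constant factor under mild regularity of $d$ so that the $2.06$ factor is preserved, and another is to reformulate the CMSY potential so its guarantee is phrased directly against an integral benchmark. A secondary concern is that the $2.06$ rounding is stated for LP-feasible distances satisfying triangle inequalities, which a raw predictor need not obey; this can be handled by an initial projection/triangle-fixing step on $d$ performed in post-processing, which inflates the approximation by only $1+O(\varepsilon)$ and does not affect the streaming space complexity.
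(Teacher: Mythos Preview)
Your plan has a genuine gap, and it is not the obstacle you flagged. The CMSY pivot rounding for unweighted complete graphs is \emph{not} graph-oblivious: the probability that a non-pivot $u$ joins the cluster of pivot $v$ is $1-f(d_{uv})$ where $f=f^{+}$ if $(u,v)\in E^{+}$ and $f=f^{-}$ if $(u,v)\in E^{-}$, and the $2.06$ constant depends precisely on using these two distinct functions. Thus your sentence ``the rounding of $\gC_2$ never touches the graph itself---only the predictor'' is false, and without access to edge signs you cannot run the rounding that yields $2.06\beta\cdot\opt$. This is the real difficulty: how do you execute a CMSY-style rounding when the graph is only available as a dynamic stream and you have $\tilde O(n)$ space?

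The paper's resolution is to \emph{truncate}. It uses $\ell_0$-samplers to recover all positive edges incident to the ``interesting'' vertices (those whose rank $\pi_u$ is small relative to their degree), giving a stored subgraph $G_{\textnormal{store}}$ on which both the ordinary \textsc{Pivot} and the sign-dependent CMSY rounding can be run verbatim. Uninteresting vertices are handled separately (they try to join a low-rank pivot, else become singletons), and the extra singleton cost is charged back against the pivot-cluster cost via the CKLPU analysis. The space bound comes from the fact that interesting vertices collectively have $\tilde O(n/\varepsilon)$ positive edges. Cost comparison is done with a spectral sparsifier of $G^{+}$, which gives a $(1\pm\varepsilon)$ estimate uniformly over \emph{all} clusterings---this matters because $\gC_1,\gC_2$ are random, so your ``union bound over the two candidate clusterings'' is not literally two fixed events.

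Separately, the concerns you list at the end are non-issues. By \cref{def:beta-level-predictor}, a $\beta$-level predictor already satisfies the triangle inequality and the bound $\sum_{(u,v)\in E^{+}}d_{uv}+\sum_{(u,v)\in E^{-}}(1-d_{uv})\le\beta\cdot\opt$, which is exactly your $\mathrm{LP}(d)\le\beta\cdot\opt$; no thresholding or triangle-fixing is required.
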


Note that our algorithm achieves a better-than-$3$ approximation in dynamic streams under good prediction quality, whereas the previous best-known semi-streaming algorithm in dynamic streams is a $(3+\varepsilon)$-approximation due to Cambus et al.~\cite{CKLPU24}.
That is, our algorithm is both consistent and robust, 
as desired for most natural learning-augmented algorithms \citep{MV21}.
We further highlight that the recent $(\alpha_{\mathrm{BEST}}+\varepsilon)$-approximation algorithm~\cite{AKP25} does not bound the space usage during the post-processing phase, which may be larger than $\tilde{O}(n)$.
On the other hand, while there exists a single-pass $(1+\varepsilon)$-approximation algorithm for dynamic streams \citep{ACGMW21,BCMT23}, it requires exponential post-processing time and is thus impractical.
In contrast, our algorithm uses polynomial post-processing time.

Furthermore, we also obtain an algorithm for complete graphs in insertion-only streams (see \cref{sec:omitted-details-insertion}), which differs from our algorithm in dynamic streams but achieves the same approximation ratio. It is simpler and more practical than the existing $1.847$-approximation algorithm~\cite{CLPTYZ24}. 

For general graphs, we present the following result on the approximation-space trade-off for streaming Correlation Clustering, using a slightly different type of predictor.

\begin{theorem}
\label{thm:main-result-general} 
Let $\varepsilon\in (0,1/4)$ and $\beta \geq 1$. Given oracle access to an adapted 
$\beta$-level predictor, there exists a single-pass streaming algorithm that, with high probability, achieves an $O(\beta \log |E^-|)$-approximation for Correlation Clustering on general graphs in dynamic streams. The algorithm uses $\tilde{O}(\varepsilon^{-2}n)$ words of space.
\end{theorem}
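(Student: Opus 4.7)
The plan is to follow the high-level template of Ahn et al.~\cite{ACGMW21}: obtain a fractional solution $\{d_{uv}\} \in [0,1]^{\binom{V}{2}}$ to the standard Correlation Clustering LP (pairwise distances satisfying the triangle inequality, with objective $\sum_{(u,v)\in E^+} d_{uv} + \sum_{(u,v)\in E^-}(1-d_{uv})$), then round it via the ball-growing technique of Demaine et al., which incurs an $O(\log |E^-|)$ rounding loss. The key novelty is to replace the multiplicative-weight-update LP solver that is responsible for the $\tilde{O}(|E^-|)$ term in the space bound of \cite{ACGMW21} by simply \emph{using the adapted predictor's distances as the fractional solution}: by the adapted $\beta$-level guarantee, these distances form a feasible fractional solution whose LP objective is at most $\beta\cdot\mathrm{OPT}$, and querying the predictor in post-processing costs no streaming space.

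During the stream I would maintain $(1\pm\varepsilon)$-cut sparsifiers of $G^+$ and of $G^-$ via standard dynamic-stream linear sketches, for a total of $\tilde{O}(\varepsilon^{-2}n)$ words; no explicit list of negative edges is kept. In post-processing I would run the ball-growing rounding: repeatedly pick any unclustered vertex $v$, consider balls $B(v,r)=\{u: d_{uv}\le r\}$ in the predictor metric, grow $r$ from $0$ to $1/2$, and stop at the first radius where the boundary cost (positive edges with exactly one endpoint in the ball) is at most $O(\log|E^-|)$ times the enclosed LP volume plus an initial potential set to (total LP mass)$/|E^-|$. Output $B(v,r)$ as a cluster and recurse on the remaining vertices. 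The classical integral argument guarantees such a radius exists in $(0,1/2)$, so summed over clusters the total cost is $O(\log|E^-|)$ times the LP objective, which is $O(\beta\log|E^-|)\cdot\mathrm{OPT}$; the multiplicative $(1\pm\varepsilon)$ sparsifier error is absorbed into the hidden constant.

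The main obstacle is executing ball-growing using only the two cut sparsifiers plus predictor queries: while the boundary is literally a cut in $G^+$ and thus directly estimable, the analysis also needs the enclosed LP volume $\sum_{(u,w)\in E^+\cap B} d_{uw}+\sum_{(u,w)\in E^-\cap B}(1-d_{uw})$, which is a weighted subset sum rather than a cut value. The plan is to discretize $[0,1]$ into $O(\varepsilon^{-1}\log n)$ geometric scales and express each scale's contribution as a difference of $G^{\pm}$-cuts induced by thresholding on the predictor distances, so that cut sparsifiers suffice. A secondary issue is metric-ness of the predictor output: if it does not already satisfy the triangle inequality I would take its shortest-path closure and show, by a short direct argument, that this does not inflate the LP objective by more than a constant factor, preserving the adapted $\beta$-level guarantee up to constants in the final approximation ratio.
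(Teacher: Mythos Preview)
Your high-level plan---use the predictor's distances as the fractional LP solution and round by ball-growing on a sparsifier of $G^+$---is exactly what the paper does. But two pieces of your execution are off.

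\textbf{The $\log|E^-|$ factor.} Your initial potential $F/|E^-|$ (with $F$ the total LP mass) does not yield the claimed bound. The ball-growing argument guarantees a stopping radius with cut $\le O(\log|E^-|)\cdot\vol$, but when you sum over clusters you pick up $\sum_i F/|E^-| = (\#\text{balls})\cdot F/|E^-|$ from the initial potentials. In correlation clustering you must cluster \emph{every} vertex, so the number of balls can be as large as $n$; when $|E^-|\ll n$ this extra term blows up and you do not get $O(\log|E^-|)\cdot F$. The paper avoids this entirely by a case split: it stores negative edges during the stream as long as they fit in $\tilde O(\varepsilon^{-2}n)$ words. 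If they fit, it runs Ahn et al.'s post-processing verbatim and inherits their $O(\log|E^-|)$ bound. If they overflow, then $|E^-|\ge n$, so the ball-growing (done with initial potential $V^*/n$ and yielding the natural $O(\log n)$ factor) already gives $O(\log|E^-|)$. You are missing this split.

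\textbf{Unnecessary machinery.} You propose a cut sparsifier of $G^-$ and a discretization scheme to estimate enclosed LP volume from cut values. Neither is needed. The sparsifier $H^+$ is stored \emph{explicitly} with $\tilde O(\varepsilon^{-2}n)$ edges, so in post-processing you can iterate over its edges and compute $\partial_{H^+}(B)$ and $\vol_{H^+}(B)$ directly; the paper's volume is defined purely on $H^+$, with no $E^-$ term. The algorithm's stopping rule never touches $G^-$; negative edges enter only the analysis, via the bound $\cost^-(\mathcal C)\le 3\sum_{(u,v)\in E^-}(1-d_{uv})$ (using radius $<1/3$, not $1/2$), which requires no stream access. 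Finally, the adapted $\beta$-level predictor already satisfies the triangle inequality by definition, so your shortest-path closure step is moot.
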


We remark that the above algorithm can be extended to work under the previous notion of predictors for a broad class of graphs (see \Cref{cor:general-reff}). Note that the best-known streaming algorithm for general graphs is an $O(\log|E^-|)$-approximation while using $\tilde{O}(\varepsilon^{-2}n+|E^-|)$ words of space~\cite{ACGMW21}. In contrast, our learning-augmented algorithm attains a comparable approximation ratio under good prediction quality, while using less space. 
We also note that it is standard to assume that the space required to implement/store the predictor is \emph{not} included in the space usage of our algorithms, as is common in learning-augmented streaming algorithms \citep{frequency,JLLRW20,CEILNRSWWZ22,ACNSV23,Dong25maxcut,ACGSW25}.

To complement our theoretical results, we conduct comprehensive experiments to evaluate our algorithms on both synthetic and real-world datasets. Experimental results demonstrate the superiority of our learning-augmented algorithms. All the missing proofs are deferred to appendix. 

\subsection{Technical overview}
\subsubsection{Technical overview of our algorithms for complete graphs}
Our algorithms for complete graphs (i.e., \cref{alg:dynamic-stream} and \cref{alg:insertion-only}) rely on the influential \textsc{Pivot} algorithm by Ailon et al.~\cite{ACN08} and the LP rounding algorithm by Chawla et al.~\cite{CMSY15}. The \textsc{Pivot} algorithm begins by selecting a random permutation $\pi$ over the vertices of the graph. It then iteratively forms clusters by choosing the vertex with the smallest rank according to $\pi$, along with its neighbors in the graph. Once a cluster is formed, it is removed from the graph. This process continues until all vertices have been assigned to clusters. The LP rounding algorithm first solves an LP corresponding to Correlation Clustering, and then applies a \textsc{Pivot}-based algorithm, using the LP solution to form clusters.

Next, we describe our algorithms. The high-level idea is to incorporate the above LP rounding approach with the ``truncated'' \textsc{Pivot} algorithms \citep{CKLPU24,CM23}, where our predictions correspond to a feasible LP solution in some sense. Specifically, for dynamic streams, we maintain a certain number of $\ell_0$-samplers during the stream and use them to derive a truncated subgraph at the end of the stream. Then we run the \textsc{Pivot} algorithm and the LP rounding algorithm on the subgraph respectively and obtain two clusterings. Finally, we output the clustering with the lower cost.
For insertion-only streams, we employ two different methods to store at most $k$ neighbors for each vertex during the stream, 
then run the \textsc{Pivot} algorithm on the two stored subgraphs. We obtain two clusterings and output the one with the lower cost.

The analysis is non-trivial, even in insertion-only streams.
We categorize all clusters into pivot clusters and singleton clusters, and analyze their costs respectively.
Our key observation is that the truncated version of the LP rounding algorithm is equivalent to the algorithm that first samples a subgraph $G'$ according to the predictions and then runs the ``truncated'' \textsc{Pivot} algorithms on $G'$.   Our main technical contribution is to prove that
1) the cost of pivot clusters produced by the truncated version of the LP rounding algorithm is at most $2.06\beta$ times the cost of optimal solution (\cref{lem:seq-pairwisediss-pivot} and \cref{lem:pivot-cost}); 
2) the optimal solution on $G'$ does not differ from the optimal solution on the original graph $G$ by a lot (\cref{lem:opt-opt'}). In this way, our algorithms can keep the space small while achieving an approximation ratio better than $3$ under good prediction quality.

\subsubsection{Technical overview of our algorithms for general graphs}
Our algorithm for general graphs (i.e., \cref{alg:general}) is inspired by the $O(\log |E^-|)$-approximation algorithm by Ahn et al.~\cite{ACGMW21}, which sparifies the positive subgraph $G^+$ to $H^+$, stores all negative edges $E^-$ during the stream, and applies the multiplicative weight update method to approximately solve a linear program. The resulting LP solution then guides an influential ball-growing procedure \cite{CGW05,DEFI06} on the sparsifier $H^+$.

The high-level idea of our algorithm is to incorporate the above algorithm with our pairwise distance predictions, which, in a sense, form a feasible LP solution. Specifically, we also maintain a sparsifier $H^+$ for $G^+$ during the stream and, in the post-processing phase,  perform ball-growing on $H^+$ using the predictions as distance metrics. Notably, we no longer need to store $E^-$ during the stream, leading to improved space complexity.

However, the approximation guarantee of this straightforward approach includes a $\log n$ term, whereas our goal is to replace it with a tighter $\log |E^-|$ term. This motivates us to further refine the algorithm. Specifically, during the stream, we maintain the sparsifier $H^+$ as before, while simultaneously storing the arriving negative edges and tracking their space usage. If at any point the space used to store negative edges exceeds $\tilde{O}(\varepsilon^{-2}n)$, we immediately stop storing them. After the stream ends (at which point the sparsifier is ready), we proceed with the ball-growing procedure as described above. Otherwise (i.e., if the space used by the negative edges never exceeds the threshold), we run the post-processing phase of the algorithm by Ahn et al.~\cite{ACGMW21}. In this way, our algorithm keeps the space small while achieving a near $O(\log |E^-|)$-approximation under good prediction quality.

\subsection{Further related work}

\textbf{Correlation Clustering.} 
In this paper, we study streaming algorithms for Correlation Clustering.
In the era of big data, other sublinear models for this problem have also received considerable attention in recent years, including sublinear-time algorithms~\cite{AW22,CLPTYZ24,CCL+25}, the Massively Parallel Computation (MPC) model~\cite{CLMNP21,BCMT22,CHS24,DMM24,CLPTYZ24,CCL+25}, and vertex/edge fully dynamic models~\cite{BCCGM24,DMM24}.
From another perspective, we focus on the minimization version of Correlation Clustering, which aims to minimize the number of disagreements. 
Other variants of this problem have also been studied, such as maximizing agreements~\cite{Swamy04}, or minimizing certain norms—or all norms—of the disagreement vector~\cite{PM18,CGS17,KMZ19,HIA24,DMN24,CLY25}.

\textbf{Learning-augmented algorithms.} Learning-augmented algorithms, also known as algorithms with predictions, have been actively studied in the context of online algorithms \citep{Purohit18improving,Bamas20primal,Scheduling,Im21online,caching,Antoniadis23online,Antoniadis23secretary,Angelopoulos24online}, data structures \citep{Mitzenmacher18model,FV20,Vaidya21partitioned,LLW22,Sato23fast}, graph algorithms \citep{Dinitz21faster,faster-graph,Banerjee23graph,LSV23,Davies23predictive,liu23predicted,Brand24dynamic,Henzinger24complexity,DePavia24learning}, sublinear-time algorithms \citep{Indyk19learning,support,Li23learning},  approximation algorithms \citep{ICLR22,ICLR23,CDGLP24,Ghoshal24constraint,Braverman24learning,AEPV25}, and mechanism design~\cite{XuL22,GkatzelisKST22,AgrawalBGOT22,LuW024,Caragiannis024,Colini-Baldeschi24}, among others.
Our work fits into the category of learning-augmented (graph)
streaming algorithms.

\section{Preliminaries}
\label{sec:preliminaries}
\textbf{Notations.} Throughout the paper, we let $G=(V,E)$ be an undirected and unweighted graph with $|V|=n$, $|E|=m$, where each edge is labeled as positive or negative (i.e., $E=E^+ \cup E^-$). 
In some places of the paper, we identify the input graph only with the set of positive edges, i.e., $G^+=(V,E^+)$ and the negative edges are defined implicitly.
For each vertex $u\in V$, let $N(u)$ be the set of all neighbors of $u$ and
$N^+(u)$ be the set of positive neighbors of $u$ (i.e., vertices that are connected by a positive edge).
Correspondingly, let $\deg(u):=|N(u)|$ be the degree of $u$, and similarly, $\deg^+(u):=|N^+(u)|$.
We use $\operatorname{cost}_G(\mathcal{C})$ to denote the cost of the clustering $\mathcal{C}$ on $G$, as defined in Eq.~(\ref{eq:cost}).
We say an algorithm achieves an $\alpha$-approximation if it outputs a clustering $\mathcal{C}$ on $G$ such that $\mathrm{OPT} \leq \operatorname{cost}_G(\mathcal{C}) \leq \alpha \cdot \mathrm{OPT}$, where $\mathrm{OPT}$ denotes the cost of an optimal solution on $G$.

We introduce additional technical preliminaries in \cref{sec:tools}.

\section{The $\beta$-level predictor}
\label{sec:predictor}
In this section, we give the formal definition of a $\beta$-level predictor.
\begin{definition}[$\beta$-level predictor]
\label{def:beta-level-predictor}
For any $\beta\geq 1$, we call a predictor \emph{$\beta$-level}, if it predicts the pairwise distances $d_{uv}\in [0,1]$ between any two vertices $u$ and $v$ in $G$ such that 
\begin{enumerate}[label=(\arabic*), ref=(\arabic*)]
    \item (triangle inequality) $d_{uv} + d_{vw} \ge d_{uw}$ for all $u, v, w\in V$,
    \item \label{item:beta-predictor} $\sum_{(u, v) \in E^{+}} d_{u v}+\sum_{(u, v) \in E^{-}}(1-d_{u v}) \leq \beta \cdot \mathrm{OPT}$. 
\end{enumerate}
\end{definition}

Intuitively, a smaller $\beta$ indicates a higher-quality predictor, and in this case $d_{uv}$ can be used to determine how likely $u$ and $v$ are in the same cluster of the optimal solution. However, we point out that the predictions can be completely independent of the input graph. In the worst case, the predictions can be arbitrarily bad, which is allowed for learning-augmented algorithms since robustness is a desired goal.
We remark that \cref{def:beta-level-predictor} is inspired by the metric LP formulation of Correlation Clustering. In a sense, the $\beta$-level predictor corresponds to a feasible solution to the LP.

Furthermore, for a general graph, we say a pairwise distance predictor is 
an \emph{adapted $\beta$-level predictor} if it satisfies the triangle inequality and $\sum_{(u,v)\in E_H^+}w'_{uv}d_{uv}+\sum_{(u,v)\in E^-}(1-d_{uv}) \le \beta \cdot \opt$, where $H^+:=(V,E_H^+,w')$ is an $\varepsilon$-spectral sparsifier of $G^+=(V,E^+)$, which approximates all the cuts in $G^+$ within a $(1\pm \varepsilon)$ factor.

\textbf{Practical consideration of the predictor.} As mentioned earlier, to cluster a graph, we may use ML models such as graph neural networks (GNNs) to learn pairwise distances from related networks defined on the same vertex set. These models can be trained to extract meaningful distances, for example, by learning node embeddings that map vertices to a Euclidean space. The distances between these embeddings naturally serve as pairwise distances and satisfy the triangle inequality. Although the second condition in \cref{def:beta-level-predictor} is a technical requirement for theoretical analysis, our algorithms remain applicable even when the given pairwise distances do not strictly satisfy this condition. In practice, as long as the distances are meaningful, they can be directly incorporated into our framework. We refer to \cref{sec:exp} for empirical results.

\section{Our algorithm for complete graphs in dynamic streams}
\label{sec:alg-dynamic}

\subsection{Offline version}
\label{subsec:non-streaming}
\textbf{Overview.}
To better illustrate the algorithmic ideas, we first describe the offline version of our algorithm (see \cref{alg:parallel-pivot} in \cref{sec:omitted-pseudocodes-dynamic}). The overall framework is similar to the algorithm proposed by Cambus et al.~\cite{CKLPU24}. Our algorithm takes $G^+=(V,E^+)$ as input.
Initially, we pick a random permutation $\pi$ over the vertices.
Then we divide all vertices into interesting and uninteresting vertices based on the relationship between the rank and the positive degree of a vertex. Specifically, a vertex $u$ is uninteresting if $\pi_u \geq \tau_u$ where $\tau_u := \frac{c}{\varepsilon}\cdot\frac{n\log n}{\deg^+(u)}$ (or equivalently $\deg^+(u)\geq \sigma_u$ where $\sigma_u := \frac{c}{\varepsilon}\cdot\frac{n\log n}{\pi_u}$), and interesting otherwise. Here, $\varepsilon \in (0,1/4)$ and $c$ is a universal large constant.
We run two pivot-based algorithms on the subgraph $G_\textnormal{store}$ induced by the set of interesting vertices and obtain two clusterings. Finally, we output the clustering with the lower cost.

Specifically, the two pivot-based algorithms used in the clustering phase are described as follows.

\textbf{Algorithm \textsc{TruncatedPivot}.}
This algorithm simulates the Parallel Truncated-Pivot algorithm by Cambus et al.~\cite{CKLPU24} and produces the same clustering.
This algorithm proceeds in iterations. Let $U^{(t)}$ denote the set of unclustered vertices in $G_\textnormal{store}$ at the beginning of iteration $t$.
Initially, all the interesting vertices are unclustered. At the beginning of iteration $t$, if $U^{(t)}\neq \emptyset$, then we pick the vertex $u$ from $U^{(t)}$ with the smallest rank. Then we mark it as a pivot and create a pivot cluster $S^{(t)}$ containing $u$ and all of its unclustered positive neighbors in $G_\textnormal{store}$. At the end of iteration $t$, we remove all vertices clustered in this iteration from $U^{(t)}$. Then the algorithm proceeds to the next iteration. If $U^{(t)}= \emptyset$ at the beginning of iteration $t$, then we know that all the interesting vertices are clustered. Now it suffices to assign each uninteresting vertex to a cluster. Each uninteresting vertex $u$ joins the cluster of pivot $v$ with the smallest rank if $(u,v)\in E^+$ and $\pi_v < \tau_u$. Then each unclustered vertex $u\in V$ creates a singleton cluster.
Finally, we output all pivot clusters and singleton clusters.
We defer its pseudocode (\cref{alg:pivot}) to \cref{sec:omitted-pseudocodes-dynamic}.

\textbf{Algorithm \textsc{TruncatedPivotWithPred}.}
This algorithm has oracle access to a $\beta$-level predictor $\Pi$. 
The algorithm closely resembles Algorithm \textsc{TruncatedPivot}. The differences are as follows: (1) At iteration $t$, we create a pivot cluster $S^{(t)}$ containing $u$ and add all the unclustered vertices $v$ in $G_\textnormal{store}$ to $S^{(t)}$ with probability $(1-p_{uv})$ independently, where $p_{uv}=f(d_{uv})$ and $d_{uv} = \Pi(u,v)$. If $(u,v)\in E^+$, then $f(d_{uv})=f^{+}(d_{uv})$; otherwise $f(d_{uv})=f^{-}(d_{uv})$.
We set $f^{+}(x)$ to be $0$ if $x<a$, $(\frac{x-a}{b-a})^2$ if $x \in[a, b]$, and $1$ if $x > b$, 
where $a=0.19$ and $b=0.5095$; we set $f^{-}(x)=x$. 
(2) Each uninteresting vertex $u$ joins the cluster of pivot $v$ in the order of $\pi$ with probability $(1-p_{uv})$ independently, if $\pi_v < \tau_u$. 
We defer its pseudocode (\cref{alg:CMSY-pivot}) to \cref{sec:omitted-pseudocodes-dynamic}. 

We have the following approximation guarantee of the offline algorithm.
\begin{lemma}
\label{lem:parallel} 
Let $\varepsilon\in (0,1/4)$ and $\beta \geq 1$. Given oracle access to a $\beta$-level predictor, the offline algorithm (\cref{alg:parallel-pivot}) achieves an expected $(\min\{2.06\beta, 3\}+\varepsilon)$-approximation.
\end{lemma}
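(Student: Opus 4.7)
The plan is to exploit the fact that the offline algorithm returns the better of two clusterings, so it suffices to bound each separately and take the minimum: show that \textsc{TruncatedPivot} achieves expected cost at most $(3+\varepsilon)\cdot\opt$, and that \textsc{TruncatedPivotWithPred} achieves expected cost at most $(2.06\beta+\varepsilon)\cdot\opt$. Taking the minimum then yields the $\min\{2.06\beta,3\}+\varepsilon$ bound. For the first bound I would invoke the analysis of Cambus et al.~\cite{CKLPU24}, whose Parallel Truncated-Pivot is shown to be simulated exactly by \textsc{TruncatedPivot} on $G_{\textnormal{store}}$, so no fresh work is required beyond verifying the simulation.

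The heart of the proof is bounding the cost of \textsc{TruncatedPivotWithPred}. I would first establish that running this algorithm is distributionally equivalent to (a) sampling an auxiliary graph $G'$ whose edge signs are flipped/kept according to the probabilities $p_{uv}=f(d_{uv})$ prescribed by the predictor, and then (b) running a \emph{standard} truncated \textsc{Pivot} on $G'$. This ``sample-then-pivot'' reformulation (appealing to the lemmas advertised as \cref{lem:seq-pairwisediss-pivot} and \cref{lem:pivot-cost}) is the standard route for analyzing CMSY-style rounding: decompose the expected disagreement cost into a sum over triples/edges, and for each positive (resp.\ negative) edge bound the probability it is cut (resp.\ kept inside a pivot cluster) by a constant multiple of its LP contribution $d_{uv}$ (resp.\ $1-d_{uv}$). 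Because the predictor is $\beta$-level, the sum $\sum_{E^+}d_{uv}+\sum_{E^-}(1-d_{uv})$ is at most $\beta\cdot\opt$, and Chawla et al.'s choice $a=0.19$, $b=0.5095$ with $f^+(x)=((x-a)/(b-a))^2$ on $[a,b]$ and $f^-(x)=x$ is exactly what yields the $2.06$ factor per unit of LP cost, giving $2.06\beta\cdot\opt$ on pivot-cluster cost.

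The next step is to absorb the truncation. Splitting the final clustering into pivot clusters (whose cost is controlled by the charging argument above) and singleton clusters (coming from uninteresting vertices that never find a pivot of smaller rank within the truncation window), I would argue, following Cambus et al., that the singleton cost is at most $\varepsilon\cdot\opt$ in expectation. The key statistic is that for any vertex $u$, the probability that $u$ ends up a singleton while it is incident to many positive edges is small, because its threshold $\tau_u\propto n\log n/(\varepsilon\deg^+(u))$ was chosen precisely so that the expected number of pivots within the first $\tau_u$ ranks covering $u$ is $\Omega(\log n/\varepsilon)$. Combining the two contributions yields the $(2.06\beta+\varepsilon)$ bound; symmetrically, the same truncation bookkeeping gives $(3+\varepsilon)$ for \textsc{TruncatedPivot}. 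Finally, since we only ever cluster interesting vertices in $G_{\textnormal{store}}$, I would cite \cref{lem:opt-opt'} to replace $\opt_{G_{\textnormal{store}}}$ with $\opt_G$ up to a negligible term.

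The main obstacle I expect is the pivot-cost bound for \textsc{TruncatedPivotWithPred}: proving that the triple-by-triple charging with the specific functions $f^+,f^-$ yields the sharp constant $2.06$ in the \emph{truncated/parallel} setting, rather than the classical sequential one. The delicate point is that the pivot for any given edge is now chosen conditionally on both vertices being interesting \emph{and} the random coin flips $(1-p_{uv})$, so the usual sequential pivot analysis of \cite{CMSY15} must be re-derived in the parallel framework; all other steps (triangle inequality of $d$, linearity of expectation to the $\beta$-level bound, and the singleton $\varepsilon$-accounting) are comparatively routine.
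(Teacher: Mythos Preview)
Your high-level plan matches the paper: bound $\E[\cost_G(\mathcal{C}_1)]$ and $\E[\cost_G(\mathcal{C}_2)]$ separately (via \cref{lem:dynamic-cost-3} and \cref{cor:dynamic-cost-2.06}) and take the minimum. Two points deserve correction, though.

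First, your invocation of \cref{lem:opt-opt'} is misplaced. That lemma does \emph{not} relate $\opt_{G_{\textnormal{store}}}$ to $\opt_G$; it relates $\opt$ on the prerounded graph $G'$ to $\opt$ on $G$, and it appears only in the insertion-only analysis (\cref{sec:omitted-details-insertion}) to bound the \emph{singleton} cost there. In the proof of \cref{lem:parallel} it plays no role at all: the pivot-cost bound $\E[P]\le 2.06\beta\cdot\opt$ is already in terms of $\opt_G$, because the CMSY charging is against $L=\sum_{E^+}d_{uv}+\sum_{E^-}(1-d_{uv})$ over the \emph{full} edge set of $G$, and $L\le\beta\cdot\opt_G$ by the $\beta$-level property. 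There is no $\opt_{G_{\textnormal{store}}}$ anywhere in the argument.

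Second, the ``main obstacle'' you flag is not one. The paper's point in \cref{lem:seq-pairwisediss-pivot} is that \emph{conditionally on being a pivot iteration}, the pivot $w^{(t)}$ is uniform in $U^{(t)}$ and the cluster is formed exactly as in \cite{CMSY15}; hence the triple-by-triple charging of \cite{CMSY15} applies verbatim with no re-derivation in the parallel/truncated setting. The singleton iterations only \emph{remove} vertices before they can be decided; they never alter the conditional distribution within a pivot iteration. Once $\E[P]\le 2.06\beta\cdot\opt$ is established, the paper lifts it to the full cost via \cref{lem:seq-cost} (from \cite{CKLPU24}), which gives $\E[W]\le(1+4\varepsilon)\E[P]+o(1)$; note this yields a singleton contribution of order $\beta\varepsilon\cdot\opt$, not $\varepsilon\cdot\opt$, which is why the final statement reads $2.06\beta+8.24\beta\varepsilon$ before the $\varepsilon$-rescaling. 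Your ``sample $G'$ then truncated \textsc{Pivot}'' framing is the insertion-only route (\cref{cla:eq-prerounding}); it is valid but unnecessary here.
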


\begin{algorithm}[t!]
\begin{algorithmic}[1]
\Require Graph $G^+ =(V,E^+)$ as an arbitrary-order dynamic stream of edges, oracle access to a $\beta$-level predictor $\Pi$
\Ensure Clustering/Partition of $V$ into disjoint sets
\LeftComment{\textbf{Pre-processing phase}}
\State Pick a random permutation of vertices $\pi: V \rightarrow \{1,\dots,n\}$.
\For{each vertex $u\in V$}
\State Let $\deg^+(u) \leftarrow 0$. Mark $u$ as unclustered and interesting.
\State Let $\sigma_u := \frac{c}{\varepsilon}\cdot \frac{n\log n}{\pi_u}$, where $c$ is a universal large constant.
\State Initialize $10c\log n\cdot \sigma_u$ independent $\ell_0$-samplers (with failure probability $1/10$) for the adjacency vector of $u$ (the row of the adjacency matrix of $G^+$ that corresponds to $u$).
\EndFor
\LeftComment{\textbf{Streaming phase}} 
\For{each item $(e_i=(u,v), \Delta_i\in \{-1,1\})$ in the dynamic stream} 
\State Update $\deg^+(u)$, $\deg^+(v)$ and all the $\ell_0$-samplers associated with $u$ and $v$.
\EndFor
\State Maintain an $\varepsilon$-spectral sparsifier $H^+$ for $G^+$ using the algorithm of \cref{thm:dynamic-spectral}.
\LeftComment{\textbf{Post-processing phase}}
\State A vertex $u$ marks itself uninteresting if $\deg^+(u)\geq \sigma_u$.
\State Retrieve all incident edges of interesting vertices (with high probability) using the $\ell_0$ samplers.
\State Let $G_\textnormal{store}$ be the graph induced by the interesting vertices.
\State $\mathcal{C}_1 \leftarrow \textsc{TruncatedPivot}(G^+, G_\textnormal{store},\pi)$ 
\State $\mathcal{C}_2 \leftarrow \textsc{TruncatedPivotWithPred}(G^+, G_\textnormal{store},\pi, \Pi)$
\State $\widetilde{\operatorname{cost}}_G(\mathcal{C}_1) \leftarrow \sum_{C \in \mathcal{C}_1} (\frac{1}{2}\partial_{H^+}(C)+\tbinom{|C|}{2} -\frac{1}{2} \sum_{u\in C}\deg^+(u) )$
\State $\widetilde{\operatorname{cost}}_G(\mathcal{C}_2) \leftarrow \sum_{C \in \mathcal{C}_2} (\frac{1}{2}\partial_{H^+}(C)+\tbinom{|C|}{2} -\frac{1}{2} \sum_{u\in C}\deg^+(u) )$
\State $i \leftarrow \arg\min_{i=1,2} \{\widetilde{\operatorname{cost}}_G(\mathcal{C}_i)\}$.
\State \Return $\mathcal{C}_i$
\end{algorithmic}
\caption{An algorithm for complete graphs in dynamic streams}
\label{alg:dynamic-stream}
\end{algorithm}

\subsection{Implementation in dynamic streams}
\label{subsec:imp-dynamic}
In this subsection, we implement the offline algorithm in dynamic streams, as shown in \cref{alg:dynamic-stream}.
A key observation is that it suffices to store the positive edges incident to interesting vertices since we apply pivot-based algorithms on the subgraph induced by interesting vertices and then try to assign uninteresting vertices to pivot clusters.
To this end, we maintain a certain number of $\ell_0$-samplers for each vertex, which can be achieved in dynamic streams \citep{JST11}. As we will see in the analysis, the $\ell_0$-samplers allow us to recover the edges incident to all the interesting vertices with high probability. Thus we can simulate the clustering phase of the offline algorithm. Specifically, we simulate \textsc{TruncatedPivot} and \textsc{TruncatedPivotWithPred} using the stored information, and finally output the clustering with the lower cost.

Note that in the final step, the cost of a clustering cannot be exactly calculated, as our streaming algorithm cannot store the entire graph. To overcome this challenge, we borrow the idea from \cite{BCMT23} and utilize the graph sparsification technique \cite{KMMMNST20fast} to estimate the clustering cost. 
Specifically, during the dynamic stream, we maintain an $\varepsilon$-spectral sparsifier $H^+$ for $G^+$ using the algorithm of \cref{thm:dynamic-spectral}. 
We also maintain the positive degree $\deg^+(u)$ for each vertex $u$. 
Then we can approximate the cost of a clustering up to a $(1\pm \varepsilon)$-multiplicative error with high probability. The formal proof of \cref{thm:main-result-dynamic} is deferred to \cref{sec:proof-main-dynamic}.

\subsection{Analysis of the offline algorithm}
\label{sec:analysis-parallel}

Since the final clustering returned by the offline algorithm is the one with the lower cost between those produced by the two pivot-based algorithms, we start by analyzing the costs of these two clusterings.
For ease of analysis, we separately examine the approximation ratios of the equivalent versions (Algorithms \textsc{CKLPU-Pivot} and \textsc{PairwiseDiss}) that produce these two clusterings.

\textbf{Algorithm \textsc{CKLPU-Pivot} \textnormal{(Algorithm~4 in \citep{CKLPU24})}.}
This algorithm proceeds in iterations. 
Let $U^{(t)}$ denote the set of unclustered vertices at the beginning of iteration $t$.
Initially, we pick a random permutation $\pi$ over vertices, and all the  vertices are unclustered. At the beginning of iteration $t$, let $\ell_t=\frac{c}{\varepsilon}\cdot \frac{n\log n}{t}$. Each unclustered vertex $v$ with $\deg^+(v) \geq \ell_t$ creates a \emph{singleton cluster}.
We pick the $t$-th vertex $u$ in $\pi$. If $u$ is unclustered, then we mark it as a pivot and create a \emph{pivot cluster} $S^{(t)}$ containing $u$ and all of its unclustered positive neighbors. At the end of iteration $t$, we remove all vertices clustered in this iteration from $U^{(t)}$ and proceed to the next iteration. 
Finally, we output all pivot clusters and singleton clusters.
We defer its pseudocode (\cref{alg:3-approx-dynamic}) to \cref{sec:omitted-pseudocodes-dynamic}.

\textbf{Algorithm \textsc{PairwiseDiss}.}
This algorithm has oracle access to a $\beta$-level predictor $\Pi$. The only difference from Algorithm \textsc{CKLPU-Pivot} is that at iteration $t$, we create a \emph{pivot cluster} $S^{(t)}$ containing $u$ and add all unclustered vertices $v$ to $S^{(t)}$ with probability $(1-p_{uv})$ independently, where $p_{uv}=f(d_{uv})$ and $d_{uv} = \Pi(u,v)$. 
We defer its pseudocode (\cref{alg:2.06-approx-dynamic}) to \cref{sec:omitted-pseudocodes-dynamic}.

\paragraph{The offline algorithm as a combination of \textsc{CKLPU-Pivot} and \textsc{PairwiseDiss}}
We first show that the offline algorithm can be equivalently viewed as a combination of Algorithms~\textsc{CKLPU-Pivot} and \textsc{PairwiseDiss}, assuming the same randomness is used.

\begin{lemma}[Lemma~8 in \cite{CKLPU24}]
\label{lem:eq-3-dynamic}
    If the offline algorithm (\cref{alg:parallel-pivot}) and \textsc{CKLPU-Pivot} use the same permutation $\pi$, then \textsc{TruncatedPivot} and \textsc{CKLPU-Pivot} output the same clustering. 
\end{lemma}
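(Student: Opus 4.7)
The plan is to couple the two algorithms under the same permutation $\pi$ and establish three matching invariants: the set of pivots, the absorption of interesting vertices into pivot clusters, and the absorption of uninteresting vertices. The bridge between the two algorithms is the elementary equivalence
\[
\pi_u < \tau_u \;\Longleftrightarrow\; \deg^+(u) < \ell_{\pi_u},
\]
which follows immediately from the definitions $\tau_u = \tfrac{c}{\varepsilon}\cdot\tfrac{n\log n}{\deg^+(u)}$ and $\ell_t = \tfrac{c}{\varepsilon}\cdot\tfrac{n\log n}{t}$. This translates ``$u$ is interesting'' directly into ``$u$ survives CKLPU-Pivot's singleton test at its own iteration,'' and will be used throughout.

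First, I would prove by induction on $t$ that after iteration $t$ of CKLPU-Pivot, the set of pivots created so far equals the set of pivots that TruncatedPivot would produce on $G_\textnormal{store}$ among vertices of rank at most $t$, and that the corresponding pivot clusters agree in their interesting members. The induction relies on two sub-claims. (i) No interesting vertex $v$ is ever singletonized by CKLPU-Pivot: for any $t \leq \pi_v$ we have $\ell_t \geq \ell_{\pi_v} > \deg^+(v)$, so the singleton check never fires before $v$'s turn, and after $v$'s turn it is already clustered. (ii) The $t$-th vertex $u$ in $\pi$ becomes a CKLPU-pivot iff $u$ is unclustered and interesting, which by the inductive hypothesis matches exactly the condition under which TruncatedPivot picks $u$ as its next pivot. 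Once $u$ is chosen, the set of unclustered interesting positive neighbors absorbed is the same in both algorithms, since $G_\textnormal{store}$ is by definition the subgraph induced by interesting vertices.

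The structural divergence between the two algorithms is that CKLPU-Pivot absorbs uninteresting vertices eagerly inside pivot iterations, whereas TruncatedPivot defers them to a separate post-pivot phase. The key observation here is that an uninteresting $v$ is singletonized by CKLPU-Pivot at the first iteration $t \geq \tau_v$ unless it has already been absorbed; hence $v$ can only be absorbed by a pivot $u$ with $(u,v)\in E^+$ and $\pi_u < \tau_v$, and by the in-order processing $v$ lands in the cluster of the smallest-rank such $u$. This is exactly TruncatedPivot's assignment rule, and when no such $u$ exists both algorithms produce a singleton $\{v\}$. The main obstacle is bookkeeping rather than conceptual: aligning CKLPU-Pivot's interleaved singleton/pivot schedule with TruncatedPivot's two-phase pivot-then-assignment structure under a single inductive invariant. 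The equivalence displayed above is precisely what lets this alignment go through cleanly, reducing the entire argument to tracking ``what is clustered by when'' in each algorithm.
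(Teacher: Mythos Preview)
Your proposal is correct and follows essentially the same argument that the paper intends. The paper does not reproduce a proof of this lemma (it cites \cite{CKLPU24} directly), but its appendix proof of the analogous \cref{lem:eq-2.06-dynamic}---which it explicitly says ``is similar to that of \cref{lem:eq-3-dynamic}''---uses exactly your template: the equivalence $\pi_u \geq \tau_u \Leftrightarrow \deg^+(u) \geq \ell_{\pi_u}$ to identify singletonized vertices in \textsc{CKLPU-Pivot} with uninteresting vertices in the offline algorithm, then matching pivots and absorptions on the remaining subgraph. Your organization (pivots first, then uninteresting vertices) differs cosmetically from the paper's (singletons first, then pivots), but the invariants and the key observations coincide.
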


\begin{lemma}
\label{lem:eq-2.06-dynamic}
    If the offline algorithm (\cref{alg:parallel-pivot}) and \textsc{PairwiseDiss} use the same permutation $\pi$ and predictions $\{d_{uv}\}_{u,v\in V}$, then \textsc{TruncatedPivotWithPred} and \textsc{PairwiseDiss} output the same clustering with the same probability.
\end{lemma}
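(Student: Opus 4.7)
The plan is to prove the lemma by constructing a natural coupling between the two procedures and showing that, under this coupling, they produce identical clusterings pointwise; this immediately yields equality of the output distributions. I would couple the randomness by using the same permutation $\pi$ together with the same family of independent Bernoulli random variables $\{X_{uv}\}_{u\neq v}$, where $\Pr[X_{uv}=1]=1-p_{uv}=1-f(d_{uv})$, interpreted as ``$v$ joins $u$'s cluster whenever $u$ is a pivot and $v$ is still unclustered at that moment''. Both \textsc{PairwiseDiss} and \textsc{TruncatedPivotWithPred} take pairwise capture decisions with probability $1-p_{uv}$ independently across pairs, so consulting the same $X_{uv}$ in both procedures is a valid coupling.

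Next, I would record two structural observations relating the singleton threshold $\ell_t=\frac{c n\log n}{\varepsilon\, t}$ used by \textsc{PairwiseDiss} to the interesting/uninteresting classification used by the offline algorithm. Since $\deg^+(v)\ge \ell_t$ is equivalent to $t\ge \tau_v$, we have: \emph{(i)} an uninteresting vertex $w$ (with $\pi_w\ge \tau_w$) triggers the singleton rule at iteration $\lceil \tau_w\rceil\le \pi_w$ unless it is captured earlier, and in particular can only be captured by a pivot $u$ with $\pi_u<\tau_w$; \emph{(ii)} an interesting vertex $v$ (with $\pi_v<\tau_v$) never triggers the singleton rule before or at its own iteration $\pi_v$, so it becomes a pivot in \textsc{PairwiseDiss} iff it is still unclustered when iteration $\pi_v$ begins.

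With these observations I would argue by induction on $t$ that after iteration $t$ of \textsc{PairwiseDiss}, the set of pivots created so far together with the interesting vertices assigned to each such pivot cluster coincide with what \textsc{TruncatedPivotWithPred} produces after processing interesting vertices of rank at most $t$. The inductive step is a direct translation: by \emph{(ii)} the only ranks that can produce new pivots in \textsc{PairwiseDiss} are those of interesting vertices, which is exactly the sequence \textsc{TruncatedPivotWithPred} traverses; and the capture decisions for interesting neighbors in both algorithms consult the same variables $X_{uv}$ on the same unclustered sets. Finally, observation \emph{(i)} pins down every uninteresting vertex $w$: in \textsc{PairwiseDiss}, $w$ ends up in the cluster of the first pivot $u$ in $\pi$-order with $\pi_u<\tau_w$ and $X_{uw}=1$ (otherwise a singleton), which is precisely the rule that \textsc{TruncatedPivotWithPred} applies in its uninteresting-assignment loop. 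Hence the two clusterings coincide pointwise under the coupling, so their distributions agree.

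The step I expect to be the main obstacle is the careful bookkeeping inside a single iteration $t$ of \textsc{PairwiseDiss}, where the singleton rule fires \emph{before} the candidate pivot at rank $t$ is examined: one has to rule out a scenario in which some vertex is prematurely ``spent'' as a singleton in one algorithm while remaining available in the other. This is resolved cleanly by \emph{(i)}, since the threshold $\tau_w$ plays the same role on both sides—as the cut-off for singleton creation in \textsc{PairwiseDiss} and as the condition $\pi_v<\tau_w$ in the uninteresting-assignment loop of \textsc{TruncatedPivotWithPred}—so the two views line up exactly under the coupling, and the remainder of the proof is a direct induction.
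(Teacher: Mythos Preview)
Your proposal is correct and follows essentially the same approach as the paper: both arguments hinge on the equivalence $\deg^+(v)\ge \ell_t \Leftrightarrow t\ge \tau_v$, use it to show that uninteresting vertices are exactly those that become singletons (and can only be captured by pivots $u$ with $\pi_u<\tau_w$), and then conclude that the pivot-based process on the interesting vertices coincides in both algorithms. Your explicit coupling via shared Bernoulli variables $\{X_{uv}\}$ is a cleaner formalization of the paper's ``with the same probability'' language, and your induction on $t$ is organized slightly differently from the paper's singletons-first-then-pivots decomposition, but the substance is the same.
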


\subsubsection{The approximation ratios of \textsc{CKLPU-Pivot} and \textsc{PairwiseDiss}}
Now it suffices to analyze Algorithms~\textsc{CKLPU-Pivot} and \textsc{PairwiseDiss} separately. We follow the analysis framework in \cite{CKLPU24}. Specifically, we analyze the costs of pivot clusters and singleton clusters separately. For the former, we can directly apply the analysis of original pivot-based algorithms~\citep{ACN08,CMSY15}. Note that here we only need to focus on a subset of vertices (i.e., $V\setminus V_\textnormal{sin}$ where $V_\textnormal{sin}$ is the set of singletons). For the latter, we divide all the positive edges incident to singleton clusters (denoted as $E_\textnormal{sin}$) into good edges ($E_\textnormal{good}$) and bad edges ($E_\textnormal{bad}$). Specifically, we define a positive edge incident to a singleton cluster to be good if the other endpoint was included in a pivot cluster \emph{before} the singleton was created. Otherwise, the edge is bad. In other words, bad edges are those that either connect two singletons or the other endpoint was included in a pivot cluster \emph{after} the singleton was created.
The analysis in \citep{CKLPU24} shows that both the costs of good and bad edges can be charged to the pivot clusters, allowing us to bound the overall clustering cost.

The following lemma states the approximation guarantee of \textsc{CKLPU-Pivot}, and thus that of the clustering returned by \textsc{TruncatedPivot}.

\begin{lemma}[\cite{CKLPU24}]
\label{lem:dynamic-cost-3}
Let $\varepsilon\in (0,1/4)$.
     Let $\mathcal{C}_1$ denote the clustering  returned by \textsc{TruncatedPivot}, then $\E[\mathrm{cost}_G(\mathcal{C}_1)]\leq (3+12\varepsilon)\cdot \mathrm{OPT}+ \frac{1+4\varepsilon}{n^{\alpha-2}}$,
    where $\alpha := c/2-1 \gg 2$.
\end{lemma}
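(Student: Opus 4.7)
By \cref{lem:eq-3-dynamic}, the clustering produced by \textsc{TruncatedPivot} on $G_{\textnormal{store}}$ is identical (under shared randomness) to the one produced by \textsc{CKLPU-Pivot} on the full graph $G^+$, so it suffices to bound $\E[\cost_G(\mathcal{C}_1)]$ where $\mathcal{C}_1$ is the output of \textsc{CKLPU-Pivot}. The plan is to split $\mathcal{C}_1$ into the set of pivot clusters $\mathcal{C}_{\textnormal{piv}}$ and the set of singleton clusters $\mathcal{C}_{\textnormal{sin}}$, and write
\[
\cost_G(\mathcal{C}_1) = \cost_{\textnormal{piv}} + |E_{\textnormal{good}}| + |E_{\textnormal{bad}}|,
\]
where $\cost_{\textnormal{piv}}$ accounts for all disagreements whose endpoints both lie outside $V_{\textnormal{sin}}$, and $E_{\textnormal{good}},E_{\textnormal{bad}}$ partition the positive edges incident to singletons, according to whether the non-singleton endpoint was pivoted \emph{before} or \emph{after} (including the same iteration, or also singleton) the singleton was declared.

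For $\cost_{\textnormal{piv}}$, I would invoke the classical charging argument of Ailon--Charikar--Newman: each disagreement inside or between pivot clusters corresponds to a ``bad triangle'' $\{u,v,w\}$ where some vertex is a pivot for the other two, and the standard triangle-based LP charging yields an expected cost of at most $3\cdot\opt$. The only subtlety compared to the untruncated pivot is that we are restricting to the sub-instance induced on $V\setminus V_{\textnormal{sin}}$; since the optimum cost on this sub-instance is at most $\opt$, the bound $\E[\cost_{\textnormal{piv}}]\le 3\cdot\opt$ still holds.

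For $|E_{\textnormal{good}}|$, each good edge $(u,v)$ with $u\in V_{\textnormal{sin}}$ corresponds to a positive edge whose non-singleton endpoint $v$ is already in a formed pivot cluster $S^{(t)}$ at the moment $u$ is singletonized. This means $(u,v)$ would have been counted as a disagreement between $u$'s (eventual singleton) and $v$'s (pivot) cluster even if $u$ had not been flagged as singleton; thus $|E_{\textnormal{good}}|$ can be absorbed into the pivot-cluster charging, producing at most an $O(\varepsilon)\cdot\opt$ overhead because the truncation threshold $\ell_t=\frac{c}{\varepsilon}\cdot\frac{n\log n}{t}$ inflates the per-pivot disagreement budget by a factor of $(1+O(\varepsilon))$. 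The hardest step, which I expect to be the main obstacle, is bounding $\E[|E_{\textnormal{bad}}|]$. Here I would use the definition of singletons: a vertex $v$ is singletonized at iteration $t$ only if $\deg^+(v)\ge \ell_t$, so its positive neighborhood is large. The key lemma is that with high probability over $\pi$, for every such $v$ at least one of its positive neighbors was already selected as a pivot in some iteration $t'<t$, which would have absorbed $v$ and prevented the singleton event. Quantitatively, the probability that \emph{no} positive neighbor of $v$ appears among the first $t-1$ positions of $\pi$ is at most $(1-\ell_t/n)^{t-1}\le e^{-c\log n/\varepsilon}$; a union bound over the $n$ choices of $v$ and $n$ choices of $t$ yields failure probability $n^{-(\alpha-2)}$, contributing the additive term $\frac{1+4\varepsilon}{n^{\alpha-2}}$ once we bound $|E_{\textnormal{bad}}|\le \binom{n}{2}$ in the failure event. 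Combining the three bounds and rescaling the $\varepsilon$ constants gives the stated $(3+12\varepsilon)\cdot\opt+\frac{1+4\varepsilon}{n^{\alpha-2}}$ inequality.
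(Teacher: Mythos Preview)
Your overall architecture---reduce to \textsc{CKLPU-Pivot} via \cref{lem:eq-3-dynamic}, split the cost into pivot-cluster cost plus singleton-incident edges, and handle the latter via the good/bad partition---matches the framework the paper cites from \cite{CKLPU24}. But the roles you assign to $E_{\textnormal{good}}$ and $E_{\textnormal{bad}}$ are swapped relative to the actual argument, and your bad-edge bound has a real gap.

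First, good edges do not introduce any $O(\varepsilon)$ overhead. A good edge $(u,v)$ with $u\in V_{\textnormal{sin}}$ was \emph{decided} at the pivot iteration in which $v$ joined its cluster (since $u$ was still unclustered then), so it is already counted in the pivot cost $P$. The ACN triangle charging applied to the pivot iterations gives $\E[P]\le 3\cdot\opt$ outright---there is no ``inflation by $(1+O(\varepsilon))$'' coming from the truncation threshold. In the paper's accounting this is exactly why \cref{lem:seq-cost} writes the total cost as $W=P+|E_{\textnormal{bad}}|$ with no separate $E_{\textnormal{good}}$ term.

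Second, and more seriously, your bad-edge argument does not go through. You bound $\Pr[\text{no positive neighbor of }v\text{ has rank}<t]$, but this is \emph{not} the event that prevents $v$ from being singletonized: a low-rank positive neighbor $w$ of $v$ may itself be absorbed by an earlier pivot $p$ that is \emph{not} a positive neighbor of $v$, in which case $w$ never becomes a pivot and $v$ stays unclustered. Concretely, if a hub $h$ with $\pi_h=1$ is adjacent to all of $N^+(v)$ but not to $v$, then $v$ loses every positive neighbor at iteration~1 and is later singletonized---so ``with high probability there are no singletons'' is simply false. The correct bound (stated here as \cref{lem:seq-cost}) is $\E[|E_{\textnormal{bad}}|]\le 4\varepsilon\cdot\E[P]+\frac{1+4\varepsilon}{n^{\alpha-2}}$: bad edges are charged back to the pivot cost with a \emph{multiplicative} $4\varepsilon$ factor, and it is precisely this $4\varepsilon\cdot 3\cdot\opt=12\varepsilon\cdot\opt$ term that produces the $(3+12\varepsilon)$ in the lemma. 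Your sketch recovers only the additive $n^{-(\alpha-2)}$ piece and leaves the $12\varepsilon\cdot\opt$ unaccounted for.
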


Next, we focus on the analysis of Algorithm~\textsc{PairwiseDiss}.
\begin{lemma}
\label{lem:seq-pairwisediss-pivot}
Let $P$ denote the cost of pivot clusters returned by Algorithm~\textsc{PairwiseDiss}. We have $\E[P]\leq  2.06\beta \cdot \mathrm{OPT}$.
\end{lemma}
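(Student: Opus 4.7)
The plan is to adapt the per-triple charging argument of Chawla, Makarychev, Schramm, and Yaroslavtsev (CMSY), treating the predictor's pairwise distances $\{d_{uv}\}$ as a surrogate for a feasible metric LP solution. The triangle inequality on $\{d_{uv}\}$ will let me apply the CMSY rounding lemma pointwise, while the $\beta$-level property from Definition~\ref{def:beta-level-predictor} will convert the resulting LP-style upper bound into one against $\opt$.

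First, I would decompose $P$ into contributions indexed by ordered triples $(u,v,w)$, where $w$ is the earliest vertex of the triple under $\pi$ that actually becomes a pivot (rather than a singleton). As in Ailon et al.\ and CMSY, any disagreement on a pair $\{u,v\}$ counted in the pivot-cluster cost must be witnessed by such a $w$ via the usual case split: if $(u,v)\in E^+$ then exactly one of $u,v$ joins $w$'s cluster, while if $(u,v)\in E^-$ then both join. Conditioning on $w$ being the first pivot in the triple and on $u,v$ being unclustered at that moment, these events have probabilities expressible as products of $f^\pm(d_{\cdot w})$ and $1-f^\pm(d_{\cdot w})$ factors determined by the signs of the edges $(u,w)$ and $(v,w)$.

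Next, I would observe that truncation can only shrink pivot clusters --- it converts some vertices into singletons before their turn arrives, but never adds vertices to a pivot cluster. Hence the per-triple probability of contributing to $P$ is pointwise dominated by the corresponding probability in the untruncated CMSY-style algorithm run on the same graph with the same predicted distances. This stochastic domination lets me invoke the core CMSY rounding inequality as a black box: for every triple $(d_{uw},d_{vw},d_{uv})\in[0,1]^3$ satisfying $d_{uv}\le d_{uw}+d_{vw}$ and for every assignment of $+/-$ signs to the three edges of the triple, the specific rounding functions $f^+$ (with $a=0.19$, $b=0.5095$) and $f^-(x)=x$ guarantee that the per-triple disagreement probability is at most $2.06$ times the triple's LP-style cost. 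Summing over all triples, the telescoping argument from CMSY yields
\[
    \E[P]\;\le\;2.06\cdot\Bigl[\sum_{e\in E^+}d_e+\sum_{e\in E^-}(1-d_e)\Bigr],
\]
and condition~\ref{item:beta-predictor} of Definition~\ref{def:beta-level-predictor} then gives $\E[P]\le 2.06\beta\cdot\opt$.

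The hardest step will be the first one: precisely accounting for which disagreements are charged to the pivot-cluster cost (as opposed to the good/bad edges incident to singleton clusters handled separately), and rigorously verifying the stochastic-domination claim across every combination of edge signs and conditioning events. Once that bookkeeping is in place, the numerical per-triple inequality of CMSY can be cited essentially verbatim, and no additional case analysis specific to the learning-augmented setting is required --- the role of the predictor is confined to supplying the values $d_{uv}$ that play the part of the LP variables.
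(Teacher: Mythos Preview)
Your overall plan to invoke the CMSY per-triple inequality is on target, and the final chain of inequalities is exactly what the paper obtains. The gap is in the middle step: the stochastic-domination claim ``truncation can only shrink pivot clusters, so the per-triple probability of contributing to $P$ is dominated by the untruncated algorithm'' is false as stated. Truncation can change \emph{which} vertex becomes the deciding pivot for a pair, and this can create new disagreements. Concretely: take $\pi_b<\pi_c<\pi_d$ with $(c,d)\in E^-$, and suppose the join coins say ``$c$ joins $b$'': yes, ``$d$ joins $b$'': no, ``$d$ joins $c$'': yes. In the untruncated run $b$ is a pivot, $c$ joins it, $d$ does not, so $c$ and $d$ land in different clusters --- no disagreement on $(c,d)$. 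If instead $b$ is turned into a singleton (high $\deg^+(b)$), then $c$ becomes the pivot, $d$ joins $c$, and $(c,d)$ is now a disagreement charged to the pivot cost. For the triple $(c,d,c)$ the contribution probability is strictly larger in the truncated run, so no per-triple domination holds.

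The paper's argument sidesteps any comparison to the untruncated process. It observes that, \emph{conditioned on iteration $t$ being a pivot iteration}, the pivot is uniform over the current unclustered set $U^{(t)}$ (the $t$-th vertex of a uniform permutation, conditioned on lying in $U^{(t)}\subseteq\{v:\pi_v\ge t\}$, is uniform there), and the cluster is formed exactly as in CMSY. Thus each pivot iteration, viewed in isolation, \emph{is} a CMSY iteration on whatever unclustered set happens to remain, and the CMSY per-iteration inequality $\E[P^{(t)}]\le 2.06\,\E[L^{(t)}]$ applies directly. Summing over pivot iterations and using $\sum_t L^{(t)}\le L$ (each edge is decided at most once) gives the bound with no reference to the untruncated algorithm. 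You should replace the domination step with this conditional-uniformity argument; the rest of your outline then goes through unchanged.
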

\begin{proof}
Consider iteration $t$ of \textsc{PairwiseDiss}, if vertex $u$ considered in this iteration is unclustered (i.e., $u\in U^{(t)}$), then we call iteration $t$ a \emph{pivot iteration}.
The key observation is that the pivot iterations in \textsc{PairwiseDiss} are equivalent to the iterations of $2.06$-approximation LP rounding algorithm~\cite{CMSY15}: given that $u$ is unclustered (i.e., $u\in U^{(t)}$), the conditional distribution of $u$ is uniformly distributed in $U^{(t)}$, and the cluster created during this iteration contains $u$ and all the unclustered vertices $v$ added with probability $(1-p_{uv})$.
Therefore, we can directly apply the triangle-based analysis in~\cite{CMSY15}. 
Define $L:=\sum_{(u, v) \in E^{+}} d_{u v}+\sum_{(u, v) \in E^{-}}(1-d_{u v})$. Since the predictor is $\beta$-level, by \cref{def:beta-level-predictor}, the predictions $\{d_{uv}\}_{u,v\in V}$ satisfy triangle inequality and $L \leq \beta \cdot \mathrm{OPT}$.
It follows that for all pivot iterations $t$, $\E [P^{(t)}]\leq 2.06 \cdot\E[L^{(t)}]$, where $P^{(t)}$ is the cost induced by the pivot cluster created at iteration $t$, and $L^{(t)} := \sum_{(u, v) \in E^{+}\cap E^{(t)}} d_{u v}+\sum_{(u, v) \in E^{-}\cap E^{(t)}}(1-d_{u v})$ where $E^{(t)}$ is the set of edges \emph{decided} at iteration $t$, i.e., $E^{(t)}=\{(u,v)\in E: u,v\in U^{(t)}; u\in S^{(t)} \text{ or } v\in S^{(t)}\}$.
By linearity of expectation, we have
$\E[P]
    =\sum_{\text{$t$ is a pivot iteration}} \E[P_2^{(t)}]
    \leq 2.06\cdot L 
    \leq 2.06\beta \cdot \mathrm{OPT}.$ 
\end{proof}

\begin{corollary}
\label{cor:dynamic-cost-2.06}
     Let $\varepsilon\in (0,1/4)$. Let $\mathcal{C}_2$ denote the clustering  returned by \textsc{TruncatedPivotWithPred}. We have $\E[\mathrm{cost}_G(\mathcal{C}_2)]\leq (2.06\beta+8.24\beta\varepsilon)\cdot \mathrm{OPT}+ \frac{1+4\varepsilon}{n^{\alpha-2}}$,
    where $\alpha := c/2-1 \gg 2$.
\end{corollary}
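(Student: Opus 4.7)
The plan is to adapt the three-part analysis of Lemma~\ref{lem:dynamic-cost-3} from \cite{CKLPU24} to the prediction-based algorithm \textsc{PairwiseDiss}. By Lemma~\ref{lem:eq-2.06-dynamic}, the output distribution of \textsc{TruncatedPivotWithPred} matches that of \textsc{PairwiseDiss} under coupled randomness, so it is enough to bound $\E[\cost_G(\mathcal{C}_2)]$ where $\mathcal{C}_2$ is the clustering produced by \textsc{PairwiseDiss}. I would decompose
\[
\cost_G(\mathcal{C}_2) \;=\; P \;+\; |E_{\textnormal{good}}| \;+\; |E_{\textnormal{bad}}|,
\]
where $P$ is the disagreement cost contributed by the pivot clusters and $E_{\textnormal{good}},E_{\textnormal{bad}}$ are the good and bad positive edges incident to singleton clusters, defined exactly as in the paragraph preceding Lemma~\ref{lem:dynamic-cost-3}.

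For the pivot term, Lemma~\ref{lem:seq-pairwisediss-pivot} gives $\E[P] \le 2.06\beta\cdot\opt$ directly. For the singleton term, I would reuse, essentially verbatim, the charging argument of \cite{CKLPU24} used to establish Lemma~\ref{lem:dynamic-cost-3}. That argument shows that the expected number of good edges (edges from a singleton $u$ to a pivot $v$ whose cluster was formed before $u$ turned singleton) can be charged against the pivot cost with a loss factor of $O(\varepsilon)$, using the threshold $\ell_t=\frac{c}{\varepsilon}\cdot\frac{n\log n}{t}$ defining the singleton condition; and the expected number of bad edges either comes from two singletons being adjacent (which is charged against $\opt$ via a standard argument on vertices of high positive degree) or from a ``missed'' opportunity to cluster that occurs only when the rank of the eventual pivot is very large, contributing an additive failure term bounded by $\frac{1+4\varepsilon}{n^{\alpha-2}}$ with $\alpha=c/2-1$. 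Quantitatively, this yields
\[
\E\bigl[|E_{\textnormal{good}}|+|E_{\textnormal{bad}}|\bigr]\;\le\; 4\varepsilon\cdot\E[P]+\frac{1+4\varepsilon}{n^{\alpha-2}},
\]
so combining with Lemma~\ref{lem:seq-pairwisediss-pivot} gives $\E[\cost_G(\mathcal{C}_2)]\le (1+4\varepsilon)\cdot 2.06\beta\cdot\opt+\frac{1+4\varepsilon}{n^{\alpha-2}}=(2.06\beta+8.24\beta\varepsilon)\cdot\opt+\frac{1+4\varepsilon}{n^{\alpha-2}}$, which is exactly the claimed bound.

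The main obstacle, and essentially the only nontrivial adaptation, is verifying that the singleton charging argument of \cite{CKLPU24} is insensitive to how pivot clusters are populated, i.e., that it only uses the permutation $\pi$, the positive-degree threshold $\ell_t$, and the event that a pivot exists nearby, rather than the deterministic rule ``add all unclustered positive neighbors.'' In \textsc{PairwiseDiss}, each unclustered $v$ is added to the current pivot cluster with probability $1-p_{uv}$ instead of deterministically when $(u,v)\in E^+$, so good-edge counting must be redone: I would verify that for any fixed permutation $\pi$ and any realization of the pivot-joining coin flips, the indicator that a positive edge $(u,v)$ contributes to $E_{\textnormal{good}}$ is dominated by a quantity whose expectation is at most $O(\varepsilon)$ times the probability that $(u,v)$ is ``decided'' inside some pivot cluster, so it can be absorbed into $\E[P]$. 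For bad edges, the argument in \cite{CKLPU24} relies only on the tail bound $\Pr[\pi_u\ge \tau_u\text{ and }u\text{ has high positive degree}]$, which is independent of the pivot-assignment rule and therefore transfers unchanged, yielding the additive term $\tfrac{1+4\varepsilon}{n^{\alpha-2}}$. Once these two checks are in place, the derivation collapses into the algebra shown above.
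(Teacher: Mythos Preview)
Your approach is essentially the same as the paper's: reduce to \textsc{PairwiseDiss} via Lemma~\ref{lem:eq-2.06-dynamic}, bound the pivot cost by Lemma~\ref{lem:seq-pairwisediss-pivot}, and invoke the singleton charging argument from \cite{CKLPU24}, which the paper packages as a black-box lemma stating $\E[W]=\E[P+|E_{\textnormal{bad}}|]\le(1+4\varepsilon)\E[P]+\frac{1+4\varepsilon}{n^{\alpha-2}}$ and then multiplies through.

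One small correction: your decomposition $\cost_G(\mathcal{C}_2)=P+|E_{\textnormal{good}}|+|E_{\textnormal{bad}}|$ double-counts. With $P$ as defined in Lemma~\ref{lem:seq-pairwisediss-pivot} (cost of edges \emph{decided} in pivot iterations), every good edge $(u,v)$ is already a disagreement counted in some $P^{(t)}$ --- when $v$ joined $S^{(t)}$, $u$ was still in $U^{(t)}$ since it had not yet turned singleton. So the correct identity is $W=P+|E_{\textnormal{bad}}|$, matching the paper's lemma, and the term $|E_{\textnormal{good}}|$ should not appear separately. This does not break your final bound (you only overcount), but it means the ``good-edge charging'' step you describe is unnecessary: only $|E_{\textnormal{bad}}|$ needs to be bounded by $4\varepsilon\,\E[P]+\frac{1+4\varepsilon}{n^{\alpha-2}}$. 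Your observation that the bad-edge analysis depends only on $\pi$ and the degree thresholds $\ell_t$, not on the pivot-joining rule, is exactly why the \cite{CKLPU24} lemma transfers unchanged to \textsc{PairwiseDiss}; the paper simply cites it without re-verifying this.
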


We defer the proofs of 
\cref{lem:eq-2.06-dynamic}, 
\cref{cor:dynamic-cost-2.06} and, finally, \cref{lem:parallel} to \cref{sec:omitted-details-dynamic}.

\section{Our algorithm for general graphs in dynamic streams}
\label{sec:general}
\paragraph{Overview of the algorithm} 
Our algorithm is given in \cref{alg:general}.
The core of our algorithm builds upon the ball-growing framework in the work of Charikara et al.~\cite{CGW05} and Demaine et al.~\cite{DEFI06}. In our algorithm, we apply this framework to a sparsified graph and use the predictions $d_{uv}$ as distance metrics. Specifically, during the streaming phase, we maintain an $\varepsilon$-spectral sparsifier $H^+ := (V,E^+_H,w')$ for $G^+ = (V,E^+)$. 

At the same time, we store all arriving negative edges and track their space usage. If, at any point, this space exceeds $\tilde{O}(\varepsilon^{-2}n)$ words, we immediately stop storing negative edges. Once the stream ends (i.e., after $H^+$ has been constructed), we proceed to the post-processing phase and run the ball-growing procedure on $H^+$ to obtain the final clustering $\mathcal{C}_2$.
On the other hand, if the total space for storing negative edges remains within $\tilde{O}(\varepsilon^{-2}n)$ throughout the stream, we instead invoke the post-processing phase of the algorithm by Ahn et al.~\cite{ACGMW21} and return the resulting clustering $\mathcal{C}_1$.

Next we describe the ball-growing procedure, which proceeds iteratively. We initialize the clustering $\mathcal{C}_2=\emptyset$.
Let $R:= (V_R,E_R)$ denote 
the current graph at each iteration, where $V_R$ and $E_R$ are the sets of remaining vertices and edges, respectively. Initially, we set $V_R=V$ and $E_R = E_H^+$. At each iteration, we select an arbitrary vertex $u \in V_R$ as the center and initialize its radius as $r_u=0$. We gradually  increase $r_u$ to grow a ball around $u$ until a certain condition is met. Then we set all the vertices in the ball as a new cluster, remove them along with their incident edges from the current graph, and proceed to the next iteration. This process is repeated until no vertices are left.

More precisely, for a vertex $u \in V_R$ and a radius $r_u \ge 0$, we define the \emph{ball} centered at $u$ with radius $r_u$ as
    $B_d(u,r_u):=\{v\in V_R: d_{uv}\le r_u\}$,
which consists of all vertices in $R^+$ within distance at most $r_u$ from $u$, where distances are measured according to the predictions.

We next define the cut value and volume associated with a ball. Let $\partial_{H^+}(B_d(u,r_u))$ denote the value of the cut $(B_d(u,r_u),V\setminus B_d(u,r_u))$ in $H^+$:
\begin{align*}
    \partial_{H^+}(B_d(u,r_u)) := \sum_{\substack{(v,w)\in E_H^+:\\v\in B_d(u,r_u),w\notin  B_d(u,r_u)}}w'_{vw}.
\end{align*}
Furthermore, we define the \emph{volume} of the ball $B_d(u,r_u)$ in $H^+$, denoted by $\vol_{H^+}(B_d(u,r_u))$:
\begin{align*}
    \vol_{H^+}(B_d(u,r_u)):=\frac{V^*}{n} + \sum_{\substack{(v,w)\in E_H^+:\\v,w\in B_d(u,r_u)}}w'_{vw}d_{vw} + \sum_{\substack{(v,w)\in E_H^+:\\v\in B_d(u,r_u),w\notin  B_d(u,r_u)}}w'_{vw}(r_u-d_{uv}),
\end{align*}
where $V^*:= \sum_{(u,v)\in E_H^+}w'_{uv}d_{uv}$ denotes the total volume of the graph $H^+$.

The ball is finalized once the cut value induced by the ball is at most $O(\log n)$ times its volume. Specifically, we require that $\partial_{H^+}(B_d(u,r_u)) \le 3\ln(n+1)\cdot \vol_{H^+}(B_d(u,r_u))$. This condition is guaranteed by the following lemma:
\begin{lemma}[\cite{GVY96,CGW05,DEFI06}]
    For any vertex $u$, 
    there exists a radius $r_u < 1/3$ (which can be found in polynomial time) such that the corresponding ball $B_d(u,r_u)$ satisfies $\partial_{H^+}(B_d(u,r_u)) \le 3\ln(n+1)\cdot \vol_{H^+}(B_d(u,r_u))$.
\end{lemma}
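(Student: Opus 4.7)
The plan is to follow the standard region-growing argument of Garg, Vazirani and Yannakakis (and its adaptations by Charikar et al.\ and Demaine et al.) applied to the sparsifier $H^+$ with the predicted distances $d_{uv}$ as the metric. The proof will proceed by contradiction: I will assume that for every $r \in [0, 1/3)$ the inequality $\partial_{H^+}(B_d(u,r)) > 3\ln(n+1) \cdot \vol_{H^+}(B_d(u,r))$ holds, and derive that the volume grows too fast to fit inside an absolute upper bound.

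First I would record two easy bounds on the volume. For $r=0$ the ball is just $\{u\}$, so both edge sums vanish and $\vol_{H^+}(B_d(u,0)) = V^*/n$. For any $r \ge 0$, using the triangle inequality $d_{vw} \ge d_{uw} - d_{uv} > r - d_{uv}$ on each cut edge $(v,w)$ with $v\in B$, $w\notin B$, the ``partial'' cut-edge contribution $w'_{vw}(r-d_{uv})$ is dominated by $w'_{vw}d_{vw}$, so the three edge-sums combine to at most $V^*/n + V^* \le 2V^*$.

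Next, the key analytic step: view $\vol_{H^+}(B_d(u,r))$ as a function of $r$. Between the finitely many critical radii $r \in \{d_{uv} : v \in V\}$ where a new vertex enters the ball, the ball membership is fixed and the derivative in $r$ of the third sum equals exactly $\partial_{H^+}(B_d(u,r))$. At a critical radius the volume is continuous (the switching vertex contributes $0$ from both sides), and the contradiction hypothesis gives $\frac{d}{dr}\vol \ge \partial > 3\ln(n+1)\cdot \vol$ on each piece. Integrating $\frac{d}{dr}\ln\vol \ge 3\ln(n+1)$ from $0$ to $1/3$ yields
\[
\ln\vol_{H^+}(B_d(u,1/3)) - \ln\vol_{H^+}(B_d(u,0)) \ge \ln(n+1),
\]
hence $\vol_{H^+}(B_d(u,1/3)) \ge (n+1)\cdot V^*/n > 2V^*$, contradicting the upper bound above. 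Thus some $r_u < 1/3$ must satisfy the desired inequality.

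Finally, for the polynomial-time claim, I would observe that both $\partial_{H^+}(B_d(u,r))$ and $\vol_{H^+}(B_d(u,r))$ are piecewise linear functions of $r$ whose breakpoints lie in the set $\{d_{uv}: v\in V\}$, which has at most $n$ elements. On each linear piece, the ratio $\partial_{H^+}/\vol_{H^+}$ is monotone in $r$, so it suffices to check the $O(n)$ candidate radii (the breakpoints and a small offset afterwards, or equivalently the infimum of each piece where the ratio is minimized) and return the smallest one that satisfies the condition. I do not expect any real obstacle: the only subtle point is the behavior of $\vol$ at the discrete jumps, which is handled by the triangle-inequality comparison showing that the volume does not decrease when a vertex is absorbed into the ball, so the integration argument is valid across breakpoints.
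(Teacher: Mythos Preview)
Your approach is exactly the standard region-growing argument from the cited references (and the paper itself only cites the lemma without reproving it), so the strategy is right. There is, however, a genuine arithmetic slip in the final contradiction: you assert $(n+1)\cdot V^*/n > 2V^*$, but $(n+1)/n \le 2$ for every $n\ge 1$, so the inequality points the wrong way and no contradiction follows from the bounds you wrote down.

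The fix is to keep the \emph{tight} upper bound you already computed, $\vol \le V^*/n + V^* = (n+1)V^*/n$, rather than relaxing it to $2V^*$; the constants in the lemma are exactly tight, so no slack can be given away. Under the contradiction hypothesis, $\ln\vol$ has derivative strictly larger than $3\ln(n+1)$ on each of the finitely many smooth pieces and non-negative jumps, so $r \mapsto \ln\vol(r) - 3\ln(n+1)\,r$ is non-decreasing on $[0,1/3)$ and strictly increasing on some sub-interval of positive length; hence it reaches some value $\delta > 0$ and stays $\ge \delta$ thereafter. But the tight upper bound gives $\ln\vol(r) - 3\ln(n+1)\,r \le \ln(n+1)\,(1 - 3r) \to 0$ as $r \to 1/3^-$, a contradiction.

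A smaller point: the volume is not literally continuous at a critical radius. When $w_0$ enters, each edge $(v,w_0)$ with $v$ already in the ball switches from contributing $w'_{vw_0}(r - d_{uv})$ to contributing $w'_{vw_0}\,d_{vw_0}$, and by the triangle inequality the latter is at least the former, so the volume can only jump \emph{up}. You effectively say this later (``the volume does not decrease when a vertex is absorbed''), which is the correct statement and all the integration argument needs; just drop the earlier continuity claim.
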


\begin{algorithm}[t!]
\begin{algorithmic}[1]
\Require Graph $G =(V,E)$ as an arbitrary-order dynamic stream of edges, oracle access to an adapted $\beta$-level predictor $\Pi$
\Ensure Clustering/Partition of $V$ into disjoint sets
\LeftComment{\textbf{Streaming phase}} 
\State Maintain an $\varepsilon$-spectral sparsifier $H^+ := (V,E^+_H,w')$ for $G^+$ using the algorithm of \cref{thm:dynamic-spectral}.
\State Meanwhile, store all arriving negative edges and track their space usage. If the space ever exceeds $\tilde{O}(\varepsilon^{-2}n)$ words, then stop storing negative edges and, after the stream ends, \textbf{goto} \cref{line:ours}. \phantomsection \label{line:check-neg}
\LeftComment{\textbf{Post-processing phase}}
\State Run the post-processing phase of the algorithm by Ahn et al.~\cite{ACGMW21} and \Return the clustering $\mathcal{C}_1$. 
\State Let $V_R \leftarrow V$ and $E_R \leftarrow E^+_H$ denote the sets of remaining vertices and edges, respectively. \phantomsection \label{line:ours}
\State Let $\mathcal{C}_2 \leftarrow \emptyset$.
\State For any $u,v\in V$, $d_{uv} = \Pi(u,v)$.
\While{$V_R \neq \emptyset$}
\State Let $R:= (V_R,E_R)$ denote the current graph.
 Pick an arbitrary vertex $u \in V_R$. Let $r_u \leftarrow 0$.
\State Increase $r_u$ and grow a ball $B_d(u,r_u)$ on $R$ such that
    $\partial_{H^+}(B_d(u,r_u)) \le 3\ln(n+1)\cdot \vol_{H^+}(B_d(u,r_u))$.
\State $\mathcal{C}_2 \leftarrow \mathcal{C}_2 \cup B_d(u,r_u)$.
\State Remove the vertices in $B_d(u,r_u)$ from $V_R$ and the incident edges from $E_R$.
\EndWhile
\State \Return $\mathcal{C}_2$
\end{algorithmic}
\caption{An algorithm for general graphs in dynamic streams}
\label{alg:general}
\end{algorithm}

\subsection{Analysis of \cref{alg:general}}

\textbf{Space complexity.} 
The space complexity of \cref{alg:general} is dominated by maintaining the $\varepsilon$-spectral sparsifier and storing the negative edges during the streaming phase. By \cref{thm:dynamic-spectral} and the condition in \cref{line:check-neg}, the algorithm uses $\tilde{O}(\varepsilon^{-2}n)$ words of space.

\textbf{Approximation guarantee.}
Recall the condition in \cref{line:check-neg}: if the space used to store negative edges remains within $\tilde{O}(\varepsilon^{-2}n)$ throughout the stream, then the resulting clustering $\mathcal{C}_1$ is given by the algorithm of Ahn et al.~\cite{ACGMW21}, which has the following guarantee.
\begin{lemma}[\cite{ACGMW21}]
\label{lem:icml15}
$\cost_G(\mathcal{C}_1)\le 3(1+\varepsilon)\log |E^-| \cdot \opt =O(\log |E^-|)\cdot \opt$.
\end{lemma}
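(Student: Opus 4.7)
The plan is to essentially invoke the analysis of Ahn et al.~\cite{ACGMW21}, since by construction the branch of \cref{alg:general} that produces $\mathcal{C}_1$ is triggered precisely when the data structures their algorithm requires are available: an $\varepsilon$-spectral sparsifier $H^+$ of $G^+$ and an explicit list of all negative edges $E^-$. The clustering $\mathcal{C}_1$ is then defined as the output of running their post-processing phase verbatim, so the lemma reduces to their approximation guarantee.

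To explain the bound, I would first recall the standard metric LP relaxation of correlation clustering on general graphs: for every pair $(u,v)$ introduce $x_{uv}\in[0,1]$, impose the triangle inequalities, and minimize $\sum_{(u,v)\in E^+}x_{uv}+\sum_{(u,v)\in E^-}(1-x_{uv})$. Any integral clustering is feasible, so the LP optimum is at most $\opt$. Ahn et al.\ approximately solve this LP with the multiplicative weight update method, accessing the positive edges only through $H^+$ and the negative edges explicitly. Since $H^+$ preserves cuts of $G^+$ to within $(1\pm\varepsilon)$, the LP value computed on $(H^+,E^-)$ is within a $(1+\varepsilon)$ factor of the true LP optimum, hence within $(1+\varepsilon)\cdot\opt$.

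Next, I would apply the ball-growing rounding of Garg--Vazirani--Yannakakis, Charikar--Guha--Wirth, and Demaine--Emanuel--Fiat--Immorlica to this fractional solution, exactly as already formalized in the excerpt (the existence of a radius $r_u<1/3$ meeting the cut-to-volume condition carries over). The key accounting step is that the base of the volume is taken to be $\mathrm{LP}/|E^-|$ rather than $\mathrm{LP}/\binom{n}{2}$: only pairs in $E^-$ contribute "terminal" cost, so the same telescoping argument that gives $O(\log n)$ for minimum multicut on $\binom{V}{2}$ terminals gives $3\ln(|E^-|+1)$ here. Summed over all balls, the total cut value in $H^+$ plus the separated $E^-$ edges is at most $3\ln(|E^-|+1)\cdot(1+\varepsilon)\cdot\mathrm{LP}\le 3(1+\varepsilon)\log|E^-|\cdot\opt$.

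The main subtlety to verify is that the $(1\pm\varepsilon)$ sparsifier distortion on the $E^+$ side composes cleanly with the $\log|E^-|$ factor from ball-growing without the two multiplicative losses interacting badly, and that the MWU procedure still terminates with a feasible LP solution when only $H^+$ (and not $G^+$) is available. Both points are already handled in \cite{ACGMW21}, so for our purposes the lemma follows by direct citation of their Theorem on general-graph streaming correlation clustering; no new argument is required beyond observing that the data structures assembled by our streaming phase in this branch coincide with theirs.
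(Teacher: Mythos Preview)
Your proposal is correct and matches the paper's approach exactly: the paper does not prove this lemma at all but simply states it with a citation to \cite{ACGMW21}, relying on the observation that in this branch of \cref{alg:general} the algorithm has assembled precisely the data structures (the sparsifier $H^+$ and the full list $E^-$) needed to run the post-processing of Ahn et al.\ verbatim. Your additional exposition of the MWU and ball-growing mechanics is helpful context but goes beyond what the paper itself provides.
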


Otherwise, we have $|E^-| \ge n$, and the clustering $\mathcal{C}_2$ is obtained via the ball-growing procedure. We now analyze the cost of $\mathcal{C}_2$, which consists of two parts: the number of positive edges that cross between different clusters, and the number of negative edges that lie in the same cluster. In the following, we bound these two quantities separately.

Consider the weighted graph $H:= (V,E_H^+\cup E^-,\bar{w})$, where the edge weights are defined as $\bar{w}_{uv}=w'_{uv}$ for $(u,v)\in E_H^+$ and $\bar{w}_{uv}=1$ for $(u,v)\in E^-$.
The positive and negative costs of $\mathcal{C}_2$ on $H$, denoted $\cost_H^+(\mathcal{C}_2)$ and $\cost_H^-(\mathcal{C}_2)$ respectively, can be bounded by the following two lemmas.
\begin{lemma}
\label{lem:general-pos-cost}
$\cost_H^+(\mathcal{C}_2) \le 3\ln(n+1)\cdot \sum_{(u,v)\in E_H^+}w'_{uv}d_{uv}$. 
\end{lemma}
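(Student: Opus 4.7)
The plan is to follow the classical region-growing / ball-growing analysis of Garg--Vazirani--Yannakakis and Demaine--Emanuel--Fiat--Immorlica, adapted to the weighted sparsifier $H^+$ and to the predicted pairwise distances $d_{uv}$. First, I would decompose the positive cost along the iterations of the procedure. Let $B_1, B_2, \ldots, B_k$ denote the balls produced in order, with centers $u_1,\ldots,u_k$ and radii $r_{u_1},\ldots,r_{u_k}$. Every positive edge $(v,w) \in E_H^+$ that is cut by $\mathcal{C}_2$ is removed from the current graph at the unique iteration $i$ at which exactly one of $v,w$ lies in $B_i$, contributing $w'_{vw}$ to $\partial_{H^+}(B_i)$ at that moment. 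Since subsequent iterations no longer see that edge, each inter-cluster positive edge is charged to exactly one $\partial_{H^+}(B_i)$, giving
\begin{align*}
    \cost_H^+(\mathcal{C}_2) = \sum_{i=1}^{k} \partial_{H^+}(B_i).
\end{align*}
The stopping rule enforced by the algorithm guarantees $\partial_{H^+}(B_i) \le 3\ln(n+1)\cdot \vol_{H^+}(B_i)$ for every $i$, so the task reduces to bounding $\sum_i \vol_{H^+}(B_i)$ by $V^* := \sum_{(u,v)\in E_H^+} w'_{uv} d_{uv}$ up to an absolute constant.

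Second, I would bound $\sum_i \vol_{H^+}(B_i)$ by a charging argument. The volume has three pieces. The seed term $V^*/n$ is summed over at most $n$ balls, contributing at most $V^*$ in total. For a positive edge entirely inside some $B_i$, it contributes exactly $w'_{vw} d_{vw}$ to $\vol_{H^+}(B_i)$ and never appears again, since both its endpoints are removed with $B_i$. For a positive edge $(v,w)$ with $v \in B_i$ and $w \notin B_i$, it contributes $w'_{vw}(r_{u_i} - d_{u_i v})$; by the triangle inequality $d_{u_i w} \le d_{u_i v} + d_{vw}$ combined with $d_{u_i v} \le r_{u_i}$ and $d_{u_i w} > r_{u_i}$, this contribution is strictly less than $w'_{vw} d_{vw}$. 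Crucially, that cut edge is also removed from the current graph along with $B_i$, so it cannot contribute to the volume of any later ball. Summing over all positive edges yields a total edge-induced volume of at most $V^*$, and hence $\sum_i \vol_{H^+}(B_i) \le 2V^*$.

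Combining the two steps yields $\cost_H^+(\mathcal{C}_2) \le O(\log n)\cdot V^*$, which is the qualitative statement of the lemma. The main technical obstacle is the partial boundary contribution, which is where the metric (triangle-inequality) property of the predictor enters in an essential way: without it, one could not bound $w'_{vw}(r_{u_i}-d_{u_i v})$ by $w'_{vw} d_{vw}$ and the volume sum could blow up uncontrollably. The precise constant $3\ln(n+1)$ stated in the lemma is obtained by absorbing the factor of $2$ from adding the seed and edge contributions separately into a slightly tighter ball-growing stopping threshold (e.g.\ stopping when $\partial_{H^+}(B) \le \tfrac{3}{2}\ln(n+1)\cdot \vol_{H^+}(B)$), exactly as in the standard GVY / DEFI region-growing analysis that the lemma invokes.
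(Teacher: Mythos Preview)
Your approach is essentially the same as the paper's: decompose the positive cost over the balls, apply the stopping rule, and bound the total volume by the seed plus the edge contributions, using the triangle inequality to control the partial boundary term. The only discrepancy is in how you recover the exact constant $3\ln(n+1)$.

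Under the paper's definitions, $\partial_{H^+}(B)$ is the cut between $B$ and \emph{all} of $V\setminus B$ in $H^+$, not just the remaining vertices $V_R\setminus B$. Consequently every inter-cluster positive edge contributes to the boundary of \emph{both} of its endpoint clusters, and the correct identity is
\[
\cost_H^+(\mathcal{C}_2) \;=\; \frac{1}{2}\sum_{C\in\mathcal{C}_2}\partial_{H^+}(C),
\]
not $\sum_i \partial_{H^+}(B_i)$ as you wrote. Applying the stopping rule then yields $\tfrac{3}{2}\ln(n+1)\sum_C\vol_{H^+}(C)$, and your volume bound $\sum_C\vol_{H^+}(C)\le 2V^*$ gives exactly $3\ln(n+1)\,V^*$. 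Your proposed fix of tightening the stopping threshold to $\tfrac{3}{2}\ln(n+1)$ is not available, since the threshold $3\ln(n+1)$ is hard-coded in the algorithm; the missing factor of $2$ is recovered by the double-counting identity above, not by modifying the procedure.
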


\begin{lemma}
\label{lem:general-neg-cost}
$\cost_H^-(\mathcal{C}_2)\le 3\sum_{(u,v)\in E^-}(1-d_{uv})$.
\end{lemma}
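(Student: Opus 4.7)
The plan is to exploit the fact that every cluster in $\mathcal{C}_2$ is a ball $B_d(u,r_u)$ with radius strictly less than $1/3$, which is exactly the guarantee supplied by the lemma of \cite{GVY96,CGW05,DEFI06} invoked by the ball-growing procedure. Combined with the triangle inequality on the predicted distances (part of the definition of an adapted $\beta$-level predictor), this forces any two vertices sharing a cluster to be close in the predicted metric, which in turn lower-bounds $1 - d_{vw}$ for every negative edge contributing to the negative cost.

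First, I would fix a negative edge $(v,w) \in E^-$ that contributes to $\cost_H^-(\mathcal{C}_2)$, i.e., both endpoints lie in the same cluster of $\mathcal{C}_2$. Let $u$ be the center and $r_u$ the radius of that ball, so $v,w \in B_d(u,r_u)$ and $r_u < 1/3$ by the lemma above. Then $d_{uv} \le r_u$ and $d_{uw} \le r_u$, and the triangle inequality gives
\[
d_{vw} \;\le\; d_{uv} + d_{uw} \;\le\; 2r_u \;<\; \tfrac{2}{3},
\]
hence $1 - d_{vw} > 1/3$, or equivalently $1 \le 3(1-d_{vw})$.

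Summing this bookkeeping inequality over every negative edge whose two endpoints lie in the same cluster gives
\[
\cost_H^-(\mathcal{C}_2) \;=\; \sum_{\substack{(v,w)\in E^-\\ v,w \text{ same cluster}}} 1 \;\le\; \sum_{\substack{(v,w)\in E^-\\ v,w \text{ same cluster}}} 3(1 - d_{vw}) \;\le\; 3\sum_{(v,w)\in E^-}(1 - d_{vw}),
\]
where the last step extends the sum to all of $E^-$ and uses $d_{vw} \in [0,1]$ so that every added term is nonnegative. There is essentially no hard step here: the entire argument hinges on the ball radius bound $r_u < 1/3$ (which is imposed by the algorithm via the classical ball-growing lemma) and on the triangle inequality (which is built into the adapted $\beta$-level predictor). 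The only mild subtlety is to remember that balls are computed on the sparsifier $H^+$ while the negative edges being bounded live in the original $E^-$; this is harmless because the radius bound $r_u < 1/3$ and the triangle inequality are properties of the predicted distance $d$, which is defined on all pairs regardless of which edge set the argument references.
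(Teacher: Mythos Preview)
Your proof is correct and follows essentially the same approach as the paper: both arguments use the ball radius bound $r_u < 1/3$ together with the triangle inequality to conclude $d_{vw} \le 2/3$ for any two vertices in the same cluster, and then sum the resulting inequality $1 \le 3(1-d_{vw})$ over intra-cluster negative edges before extending to all of $E^-$. Your exposition is in fact slightly more explicit than the paper's in spelling out why the triangle inequality yields the $2/3$ bound (via the common ball center).
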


We defer the proofs of \cref{lem:general-pos-cost} and \cref{lem:general-neg-cost} to \cref{sec:omitted-general}.
Now we are ready to bound the cost of $\mathcal{C}_2$ on $G$.
\begin{lemma}
\label{lem:general-cost}
$\cost_G(\mathcal{C}_2)=O(\beta\log |E^-|)\cdot \opt$.
\end{lemma}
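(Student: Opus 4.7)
The plan is to assemble \cref{lem:general-pos-cost} and \cref{lem:general-neg-cost} into a single bound on $\cost_G(\mathcal{C}_2)$, translate from the auxiliary graph $H$ back to $G$ using the spectral sparsifier guarantee, then apply the adapted $\beta$-level predictor hypothesis, and finally convert the resulting $\log n$ factor into $\log |E^-|$ via the branching condition of the algorithm.

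First I would split $\cost_G(\mathcal{C}_2)$ into its positive and negative components. Since the negative edges are retained with unit weights in $H$ exactly as in $G$, the negative component agrees with $\cost_H^-(\mathcal{C}_2)$, which by \cref{lem:general-neg-cost} is at most $3\sum_{(u,v)\in E^-}(1-d_{uv})$. For the positive component, I would observe that for every cluster $C\in\mathcal{C}_2$ the cut values in $G^+$ and $H^+$ are related by the $\varepsilon$-spectral sparsifier property: $\partial_{G^+}(C)\le (1+O(\varepsilon))\partial_{H^+}(C)$. Summing over clusters (each crossing edge contributes to exactly two clusters on both sides) gives $\cost_{G^+}(\mathcal{C}_2) \le (1+O(\varepsilon))\cost_H^+(\mathcal{C}_2)$, and then \cref{lem:general-pos-cost} bounds this by $3(1+O(\varepsilon))\ln(n+1)\sum_{(u,v)\in E_H^+} w'_{uv}d_{uv}$.

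Combining these two parts and pulling out the common factor $3(1+O(\varepsilon))\ln(n+1)\ge 1$ yields
\begin{align*}
\cost_G(\mathcal{C}_2)
&\le 3(1+O(\varepsilon))\ln(n+1)\Bigl(\sum_{(u,v)\in E_H^+} w'_{uv}d_{uv}+\sum_{(u,v)\in E^-}(1-d_{uv})\Bigr)\\
&\le 3(1+O(\varepsilon))\ln(n+1)\cdot \beta\cdot\opt,
\end{align*}
where the last inequality is exactly the defining property of an adapted $\beta$-level predictor. To finish, I would invoke the branching condition that $\mathcal{C}_2$ is returned only when the negative edge storage surpassed the $\tilde O(\varepsilon^{-2}n)$ threshold during the stream, so $|E^-|\ge n/\polylog n$ and therefore $\ln(n+1)=\Theta(\log|E^-|)$. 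This gives $\cost_G(\mathcal{C}_2)=O(\beta\log|E^-|)\cdot\opt$.

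I do not expect any serious obstacle here; the work is mostly assembling existing pieces. The only point that requires a little care is checking that the $\varepsilon$-spectral sparsifier really does transfer the weighted cuts of $H^+$ back to the unweighted cuts of $G^+$ on \emph{every} cluster of $\mathcal{C}_2$ simultaneously (which follows because the sparsifier guarantee holds for all cuts at once, so one may sum freely), and being explicit about why the branch taken guarantees $|E^-|$ is large enough to replace $\log n$ by $\log|E^-|$.
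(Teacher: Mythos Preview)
Your proposal is correct and follows essentially the same route as the paper: split into positive and negative costs, pass the positive part through the sparsifier guarantee to relate $\cost_{G^+}(\mathcal{C}_2)$ to $\cost_{H^+}(\mathcal{C}_2)$, apply \cref{lem:general-pos-cost} and \cref{lem:general-neg-cost}, absorb everything into the adapted $\beta$-level predictor bound, and finally use the branching condition to replace $\ln(n+1)$ by $O(\log|E^-|)$. The only cosmetic difference is that the paper writes the sparsifier factor as $\tfrac{1}{1-\varepsilon}$ (then bounds it by $3+4\varepsilon$) and states the threshold consequence simply as $|E^-|\ge n$, whereas you phrase both with $O(\cdot)$ notation; neither affects the argument.
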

\begin{proof}
    To analyze the cost of $\mathcal{C}_2$ on the original graph $G$, we first show that the positive cost of $\mathcal{C}_2$ on $G$ is close to that on $H$.
    Since $H^+$ is an $\varepsilon$-spectral sparsifier of $G^+$, we have
        $\cost_G^+(\mathcal{C}_2)= \frac{1}{2}\sum_{C\in \mathcal{C}_2}\partial_{G^+}(C)
        \le \frac{1}{2}\sum_{C\in \mathcal{C}_2}\frac{1}{1-\varepsilon}\partial_{H^+}(C)
        =\frac{1}{1-\varepsilon} \cost_H^+(\mathcal{C}_2)$.
Therefore, by \cref{lem:general-pos-cost} and \cref{lem:general-neg-cost}, the cost of $\mathcal{C}_2$ on $G$ is
\begin{align*}
&\cost_G(\mathcal{C}_2)=\cost_G^+(\mathcal{C}_2)+\cost_G^-(\mathcal{C}_2)
= \cost_G^+(\mathcal{C}_2)+\cost_H^-(\mathcal{C}_2) 
\le \frac{1}{1-\varepsilon}\cost_H^+(\mathcal{C}_2)+\cost_H^-(\mathcal{C}_2) \\
\le~&\frac{3\ln(n+1)}{1-\varepsilon}\cdot \sum_{(u,v)\in E_H^+}w'_{uv}d_{uv} + 3\sum_{(u,v)\in E^-}(1-d_{uv})\\
\le~&(3+4\varepsilon)\ln(n+1)\cdot \sum_{(u,v)\in E_H^+}w'_{uv}d_{uv} + 3\sum_{(u,v)\in E^-}(1-d_{uv})\\
\le~&(3+4\varepsilon)\ln(n+1)\cdot \left(\sum_{(u,v)\in E_H^+}w'_{uv}d_{uv}+\sum_{(u,v)\in E^-}(1-d_{uv})\right)
=O(\beta \log |E^-|) \cdot \opt,
\end{align*}
where the second step follows from $\cost_G^-(\mathcal{C}_2) = \cost_H^-(\mathcal{C}_2)$, the third-to-last step uses the inequality $\frac{3}{1-\varepsilon} < 3+4\varepsilon$ for any $\varepsilon \in (0, 1/4)$, and the last step follows from the definition of the adapted $\beta$-level predictor and the fact that $|E^-| \ge n$.
\end{proof}

\begin{proof}[Proof of \cref{thm:main-result-general}]
    \cref{thm:main-result-general} follows from \cref{lem:icml15} and \cref{lem:general-cost}.
\end{proof}

Furthermore, if the input graph satisfies certain mild conditions, then the adapted $\beta$-level predictor can be relaxed to the standard $\beta$-level predictor (\cref{def:beta-level-predictor}).
We defer the details to \cref{subsec:d-regular}.

\section{Experiments}
\label{sec:exp}
In this section, we evaluate our proposed algorithm for complete graphs empirically on synthetic and real-world datasets. All  experiments are conducted on a CPU with an i7-13700H processor and 32 GB RAM.  
For all results, unless otherwise stated, we report the average clustering cost over $20$ independent trials.

\textbf{Datasets.}
\textbf{1) Synthetic datasets.}
These datasets are generated from the Stochastic Block Model (SBM).
We use this model to plant ground-truth clusters.
It samples positive edges between vertex pairs within the same cluster with probability $p>0.5$, and samples positive edges across different clusters with probability $(1-p)$. 
\textbf{2) Real-world datasets.} 
We use \textsc{EmailCore}~\citep{LKF07, YBLG17}, \textsc{Facebook}~\citep{ML12}, \textsc{LastFM}~\citep{Rozemberczki20characteristic}, and \textsc{DBLP}~\citep{Yang15defining} datasets.
 For simplicity, for all datasets, we only simulate insertion-only streams of edges. 
 We refer to \cref{subsec:detailed-dataset} for detailed descriptions of the datasets.

\textbf{Predictor descriptions.}
\textbf{1) Noisy predictor.} We use this predictor for datasets with available optimal clusterings. We form this predictor by performing perturbations on optimal clusterings. 
\textbf{2) Spectral embedding.} 
We use this predictor for \textsc{EmailCore} and \textsc{LastFM}. It first maps all vertices to $\mathbb{R}^d$ using the graph Laplacian, then clusters all vertices based on their embeddings. 
For any two vertices $u,v\in V$, we form the prediction $d_{uv}$ based on the spectral embeddings of $u,v$. 
\textbf{3) Binary classifier.} 
We use this predictor for datasets with available ground-truth communities. 
This predictor is constructed by training a binary classifier to predict whether two vertices belong to the same cluster using node features. 
The predictions (i.e., binary values in $\{0, 1\}$) are then used as pairwise distances $d_{uv}$ in our algorithms. 
We refer to \cref{subsec:detailed-predictor} for detailed descriptions of the predictors.

\textbf{Baselines.}
\textbf{1) $(3+\varepsilon)$-approximation non-learning counterparts.} 
For our algorithm for complete graphs in dynamic streams, the counterpart is Algorithm~\textsc{CKLPU24}~\citep{CKLPU24}.
\textbf{2) The agreement decomposition algorithm \textnormal{\textsc{CLMNPT21}~\citep{CLMNP21}}.} Though the approximation ratio in theory is large ($\approx 701$), this algorithm has been shown to give high-quality solutions in practice. Note that this algorithm  
requires multiple passes.
For fairness, all baselines were implemented with equal effort.

\begin{figure*}[t!]
	\centering
	\begin{subfigure}{0.244\textwidth}
		\includegraphics[width=\textwidth]{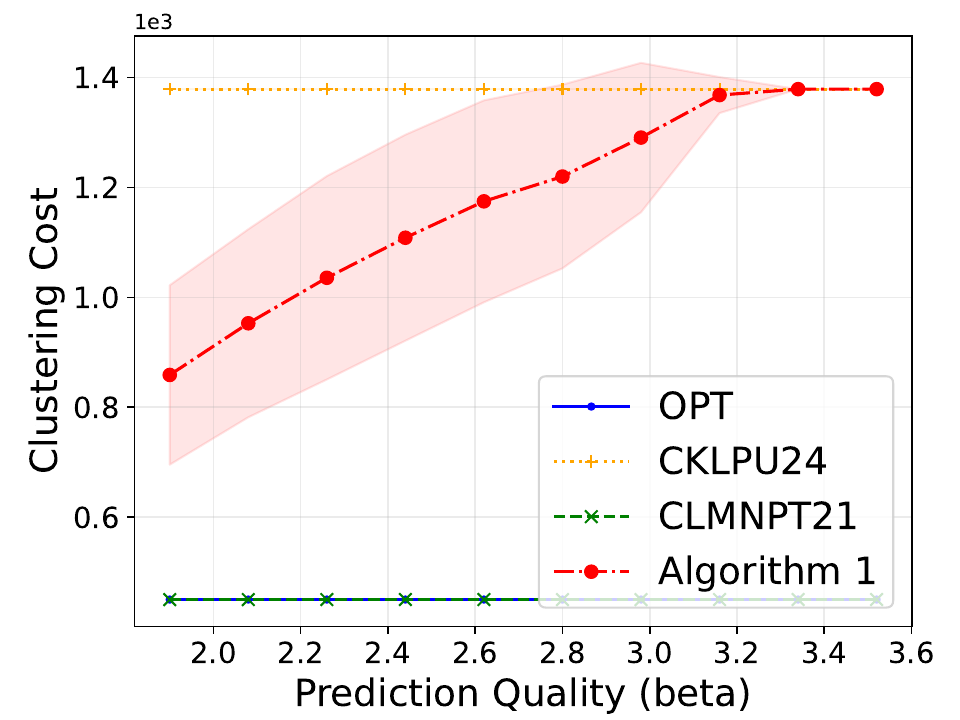}
        \captionsetup{justification=centering}
		\caption{$p=0.9$, vary $\beta$}
		\label{fig:synthetic-p-0.9}
	\end{subfigure}
	\hfill
	\begin{subfigure}{0.244\textwidth}
		\includegraphics[width=\textwidth]{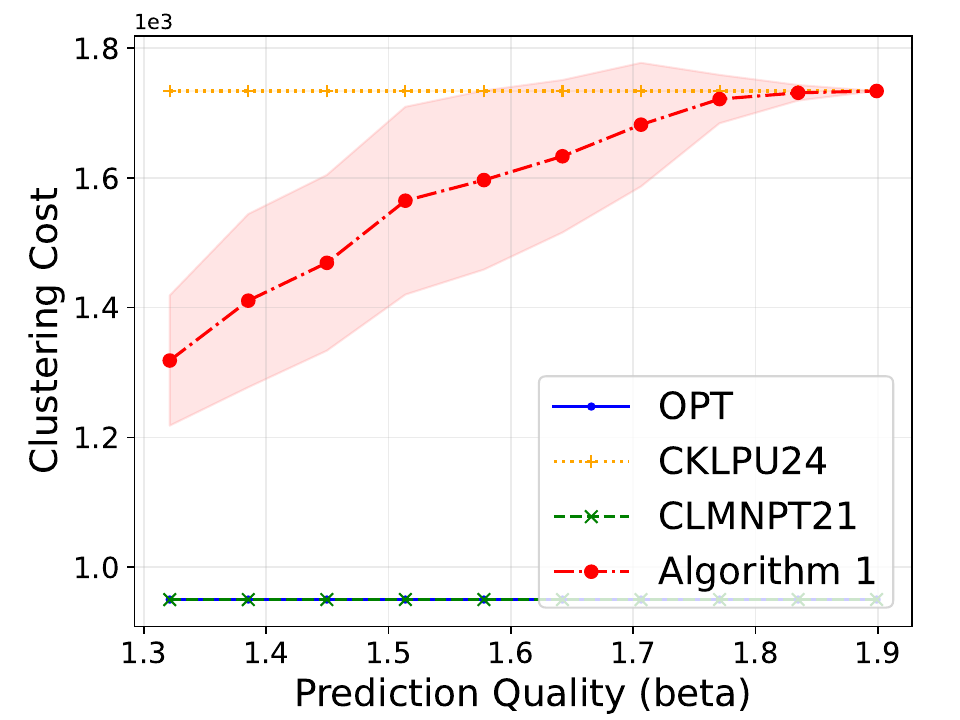}
        \captionsetup{justification=centering}
		\caption{$p=0.8$, vary $\beta$}
		\label{fig:synthetic-p-0.8}
	\end{subfigure}
        \hfill
	\begin{subfigure}{0.244\textwidth}
		\includegraphics[width=\textwidth]{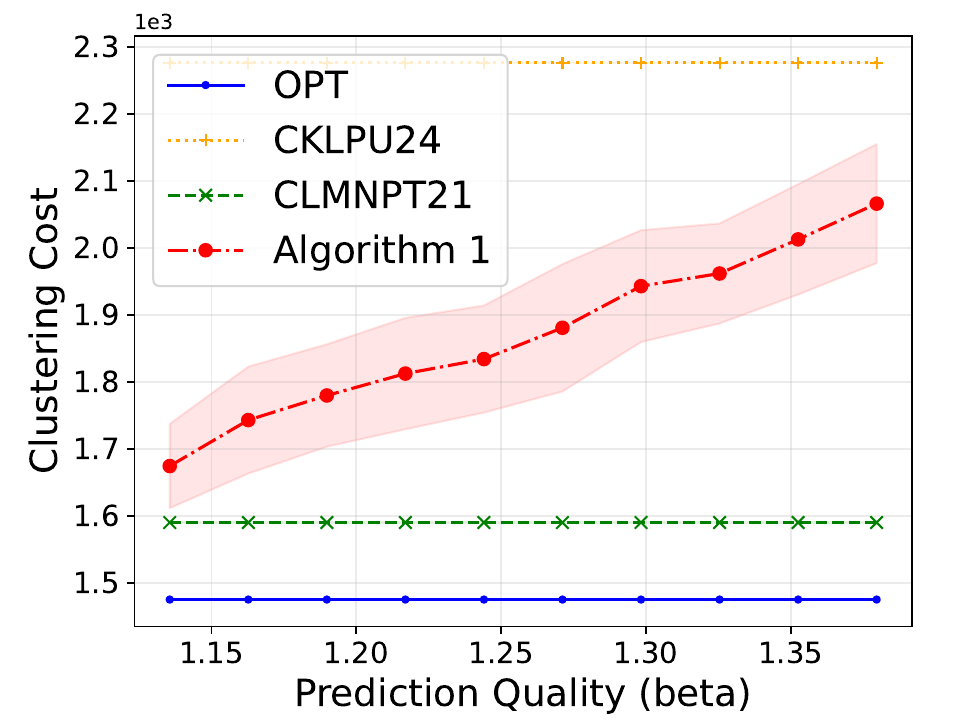}
        \captionsetup{justification=centering}
		\caption{$p=0.7$, vary $\beta$}
		\label{fig:synthetic-p-0.7}
	\end{subfigure}
	\hfill
        \begin{subfigure}{0.244\textwidth}
		\includegraphics[width=\textwidth]{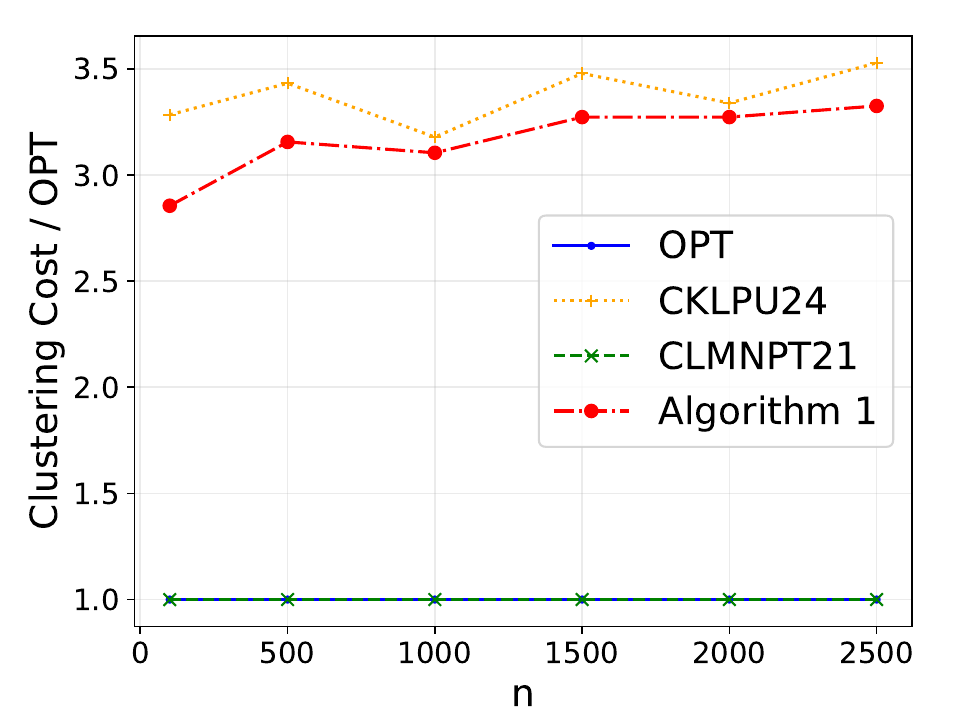}
		\captionsetup{justification=centering}
		\caption{$\beta \approx 2.8$, vary $n$}
		\label{fig:synthetic-vary-n-dynamic}
	\end{subfigure}
 
	\caption{Performance of \cref{alg:dynamic-stream} on synthetic datasets. We examine the effects of prediction quality $\beta$, SBM parameter $p$, and graph size $n$. We set $n=100$ in (\subref{fig:synthetic-p-0.9})--(\subref{fig:synthetic-p-0.7}) and $p=0.95$ in (\subref{fig:synthetic-vary-n-dynamic}).}
	\label{fig:synthetic}
\end{figure*}

\textbf{Results on synthetic datasets.}
\cref{fig:synthetic} shows the performance of \cref{alg:dynamic-stream} on synthetic datasets.  
\textbf{1) Varying $\beta$ and $p$.} We first examine the effect of $\beta$ and $p$ (see Figures~\ref{fig:synthetic}(\subref{fig:synthetic-p-0.9})--(\subref{fig:synthetic-p-0.7})). 
When $\beta$ is small, the cost of our algorithm is significantly lower than that of the $(3+\varepsilon)$-approximation non-learning baseline. Even for large $\beta$, our algorithm performs no worse.
Notably, we observe that the algorithm of \textsc{CLMNPT21} outputs (near-)optimal solution. We attribute this to the fact that the SBM graphs contain many dense components,  making them well-suited for this algorithm. Moreover, even when the ground-truth communities become less obvious (e.g., $p=0.7$), the clustering cost of \cref{alg:dynamic-stream} is reduced by up to $26\%$ compared to the algorithm of \textsc{CKLPU24}.
\textbf{2) Varying $n$.}
Furthermore, we investigate whether our algorithm scales well with graph size (see Figure~\ref{fig:synthetic}(\subref{fig:synthetic-vary-n-dynamic})). To clearly present our results, we calculate the ratio between the cost of each algorithm and the optimal solution. The result demonstrates that our algorithm performs well consistently as the graph size increases.

\begin{figure*}[t!]
	\centering
	\begin{subfigure}{0.244\textwidth}
		\includegraphics[width=\textwidth]{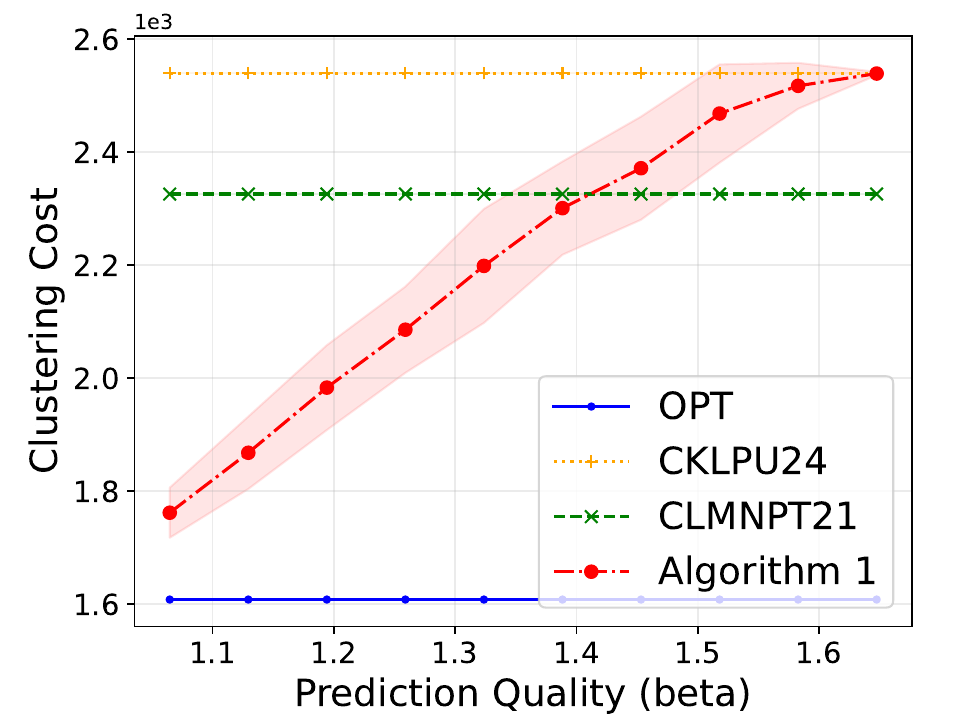}
		\caption{\textsc{FB 0}, vary $\beta$}
		\label{fig:fb0-vary-beta}
	\end{subfigure}
	\hfill
	\begin{subfigure}{0.244\textwidth}
		\includegraphics[width=\textwidth]{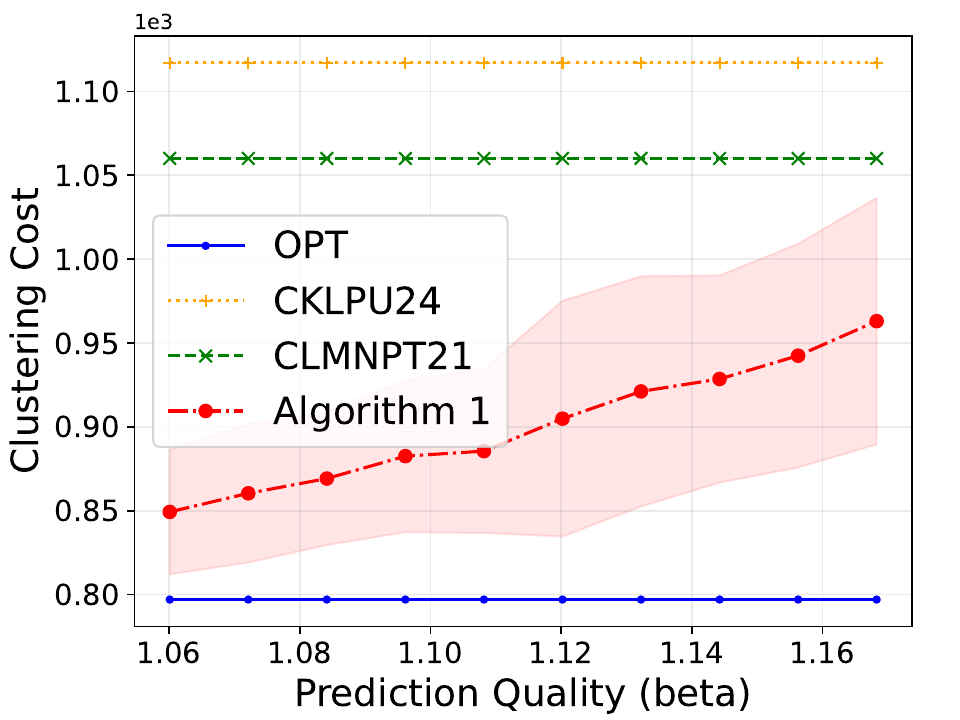}
		\caption{\textsc{FB 414}, vary $\beta$}
		\label{fig:fb414-vary-beta}
	\end{subfigure}
	\hfill
	\begin{subfigure}{0.244\textwidth}
		\includegraphics[width=\textwidth]{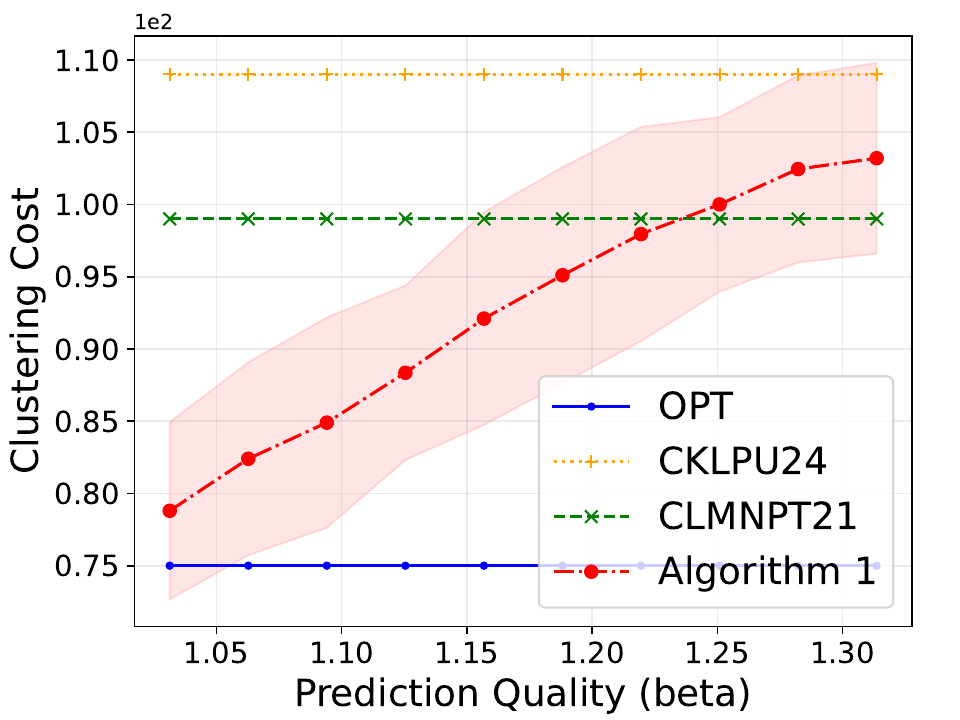}
		\caption{\textsc{FB 3980}, vary $\beta$}
		\label{fig:fb3980-vary-beta}
	\end{subfigure}
	\hfill
	\begin{subfigure}{0.244\textwidth}
		\includegraphics[width=\textwidth]{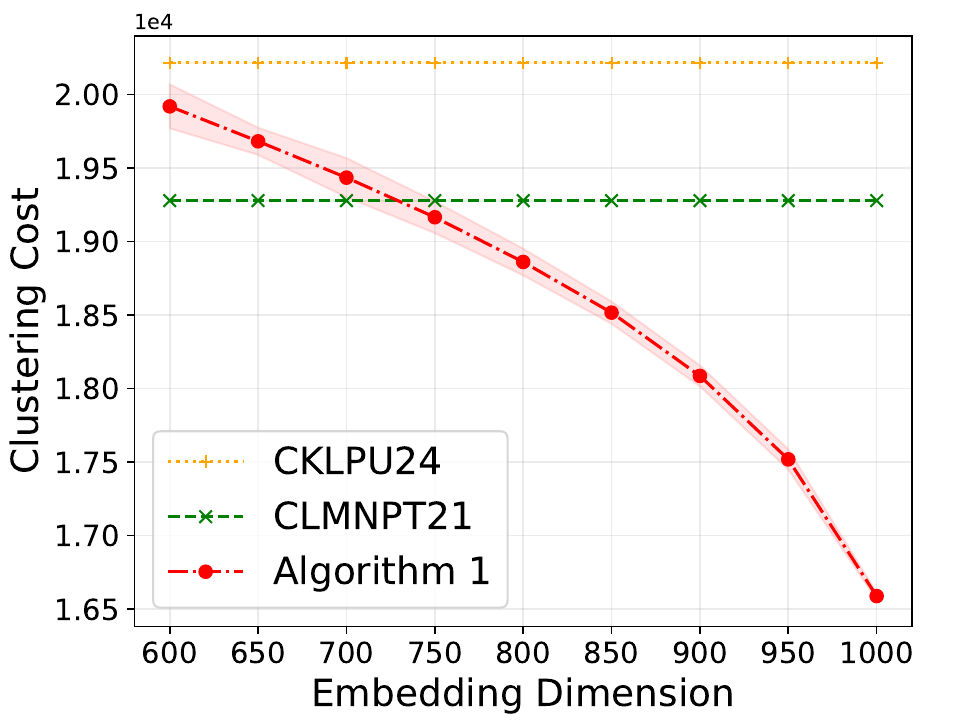}
		\caption{\textsc{EmailCore}, vary $d$}
		\label{fig:email-vary-d}
	\end{subfigure}
	\caption{Performance of \cref{alg:dynamic-stream} on real-world datasets. (\subref{fig:fb0-vary-beta})--(\subref{fig:fb3980-vary-beta}) show the effect of $\beta$ on three \textsc{Facebook} subgraphs. (\subref{fig:email-vary-d}) shows the effect of the dimension $d$ of spectral embeddings on \textsc{EmailCore}. Note that a larger $d$ indicates higher prediction quality (i.e., a smaller $\beta$).}
	\label{fig:results-realworld}
\end{figure*}

\textbf{Results on real-world datasets.}
\cref{fig:results-realworld} shows the performance of \cref{alg:dynamic-stream} on real-world datasets. 
The results demonstrate that under good prediction quality, \cref{alg:dynamic-stream} consistently outperforms other baselines across all datasets used. For example, in Figure~\ref{fig:results-realworld}(\subref{fig:fb0-vary-beta}), when $\beta \approx 1.2$, 
the average cost of our algorithm is $15\%$ lower than that of \textsc{CLMNPT21} and $22\%$ lower than that of \textsc{CKLPU24}.
Besides, in Figure~\ref{fig:results-realworld}(\subref{fig:email-vary-d}), our algorithm reduces the clustering cost by up to $14\%$ compared to \textsc{CLMNPT21}. Even in case of poor predictions, \cref{alg:dynamic-stream} does not perform worse than its $(3+\varepsilon)$-approximation counterpart without predictions.

\begin{table}[t!]
\centering
\caption{Clustering costs ($\times$ 1e3) of \cref{alg:dynamic-stream} with binary classifiers as predictors, compared to its non-learning counterpart. The reported values are averaged over $5$ runs.}
\label{tab:binary-classifier}
\begin{tabular}{crrrr}
\toprule
\diagbox{\textbf{Algorithm}}{\textbf{Dataset}} & \makecell{SBM\\($n=1200$)} & \makecell{SBM\\($n=2400$)} & \makecell{SBM\\($n=3600$)} & \textsc{DBLP}\\ \midrule
\textsc{CKLPU24}    & \numprint{105.3}  & \numprint{524.8}  & \numprint{1114.3}  & \numprint{537.5}  \\
\textbf{\cref{alg:dynamic-stream}} & \numprint{35.9} &  \numprint{145.6} & \numprint{324.9}  &  \numprint{247.1} \\ 
\bottomrule
\end{tabular}
\end{table}

\textbf{Results based on binary classifiers as predictors.}
\cref{tab:binary-classifier} shows the performance of \cref{alg:dynamic-stream} with binary classifiers as predictors, a more realistic setting. 
These experiments are performed on three SBM graphs with parameter $p=0.95$ and varying sizes, as well as the \textsc{DBLP} dataset (a sampled subgraph with \numprint{10000} vertices).
The results show that our learning-augmented algorithm consistently outperforms its non-learning counterpart across all datasets. For instance, on the SBM graph with \numprint{2400} vertices, \cref{alg:dynamic-stream} reduces the clustering cost by $72\%$ compared to \textsc{CKLPU24}.

\section*{Acknowledgments}
    The work of YD, SJ, and PP is supported in part by NSFC Grant 62272431 and the Innovation Program for Quantum Science and Technology (Grant No. 2021ZD0302901).
    The work of SL is supported by the State Key Laboratory for Novel Software Technology and the New Cornerstone Science Foundation.

\bibliographystyle{plain}
\bibliography{main}

\begin{thebibliography}{10}

\bibitem{ACGSW25}
Anders Aamand, Justin~Y. Chen, Siddharth Gollapudi, Sandeep Silwal, and Hao Wu.
\newblock Learning-augmented frequent directions.
\newblock In {\em the Thirteenth International Conference on Learning Representations ({ICLR})}, 2025.

\bibitem{ACNSV23}
Anders Aamand, Justin~Y. Chen, Huy~L{\^{e}} Nguyen, Sandeep Silwal, and Ali Vakilian.
\newblock Improved frequency estimation algorithms with and without predictions.
\newblock In {\em Advances in Neural Information Processing Systems 36: Annual Conference on Neural Information Processing Systems (NeurIPS)}, 2023.

\bibitem{AgrawalBGOT22}
Priyank Agrawal, Eric Balkanski, Vasilis Gkatzelis, Tingting Ou, and Xizhi Tan.
\newblock {Learning-Augmented Mechanism Design: Leveraging Predictions for Facility Location}.
\newblock In {\em 23rd {ACM} Conference on Economics and Computation ({EC})}, pages 497--528, 2022.

\bibitem{ACGMW21}
Kook~Jin Ahn, Graham Cormode, Sudipto Guha, Andrew McGregor, and Anthony Wirth.
\newblock Correlation clustering in data streams.
\newblock {\em Algorithmica}, 83(7):1980--2017, 2021.
\newblock Conference version in Proceedings of the 32nd International Conference on Machine Learning, {ICML} 2015.

\bibitem{ACN08}
Nir Ailon, Moses Charikar, and Alantha Newman.
\newblock Aggregating inconsistent information: Ranking and clustering.
\newblock {\em J. {ACM}}, 55(5):23:1--23:27, 2008.

\bibitem{Angelopoulos24online}
Spyros Angelopoulos, Christoph D{\"{u}}rr, Shendan Jin, Shahin Kamali, and Marc~P. Renault.
\newblock Online computation with untrusted advice.
\newblock {\em J. Comput. Syst. Sci.}, 144:103545, 2024.

\bibitem{Antoniadis23online}
Antonios Antoniadis, Christian Coester, Marek Eli{\'{a}}s, Adam Polak, and Bertrand Simon.
\newblock Online metric algorithms with untrusted predictions.
\newblock {\em {ACM} Trans. Algorithms}, 19(2):19:1--19:34, 2023.

\bibitem{AEPV25}
Antonios Antoniadis, Marek Eli{\'{a}}s, Adam Polak, and Moritz Venzin.
\newblock Approximation algorithms for combinatorial optimization with predictions.
\newblock In {\em the Thirteenth International Conference on Learning Representations ({ICLR})}, 2025.

\bibitem{Antoniadis23secretary}
Antonios Antoniadis, Themis Gouleakis, Pieter Kleer, and Pavel Kolev.
\newblock Secretary and online matching problems with machine learned advice.
\newblock {\em Discret. Optim.}, 48(Part 2):100778, 2023.

\bibitem{AKP25}
Sepehr Assadi, Sanjeev Khanna, and Aaron Putterman.
\newblock Correlation clustering and (de)sparsification: Graph sketches can match classical algorithms.
\newblock In {\em Proceedings of the 57th Annual {ACM} Symposium on Theory of Computing ({STOC})}, 2025.

\bibitem{ASW23}
Sepehr Assadi, Vihan Shah, and Chen Wang.
\newblock Streaming algorithms and lower bounds for estimating correlation clustering cost.
\newblock In {\em Advances in Neural Information Processing Systems 36: Annual Conference on Neural Information Processing Systems (NeurIPS)}, 2023.

\bibitem{AW22}
Sepehr Assadi and Chen Wang.
\newblock Sublinear time and space algorithms for correlation clustering via sparse-dense decompositions.
\newblock In {\em 13th Innovations in Theoretical Computer Science Conference ({ITCS})}, volume 215 of {\em LIPIcs}, pages 10:1--10:20, 2022.

\bibitem{Bamas20primal}
{\'{E}}tienne Bamas, Andreas Maggiori, and Ola Svensson.
\newblock The primal-dual method for learning augmented algorithms.
\newblock In {\em Advances in Neural Information Processing Systems 33: Annual Conference on Neural Information Processing Systems (NeurIPS)}, 2020.

\bibitem{Banerjee23graph}
Siddhartha Banerjee, Vincent Cohen{-}Addad, Anupam Gupta, and Zhouzi Li.
\newblock Graph searching with predictions.
\newblock In {\em 14th Innovations in Theoretical Computer Science Conference ({ITCS})}, volume 251 of {\em LIPIcs}, pages 12:1--12:24, 2023.

\bibitem{BBC04}
Nikhil Bansal, Avrim Blum, and Shuchi Chawla.
\newblock Correlation clustering.
\newblock {\em Machine learning}, 56:89--113, 2004.

\bibitem{BCCGM24}
Soheil Behnezhad, Moses Charikar, Vincent Cohen{-}Addad, Alma Ghafari, and Weiyun Ma.
\newblock Correlation clustering beyond the pivot algorithm.
\newblock In {\em Forty-second International Conference on Machine Learning ({ICML})}, 2025.

\bibitem{BCMT22}
Soheil Behnezhad, Moses Charikar, Weiyun Ma, and Li{-}Yang Tan.
\newblock Almost 3-approximate correlation clustering in constant rounds.
\newblock In {\em 63rd {IEEE} Annual Symposium on Foundations of Computer Science ({FOCS})}, pages 720--731, 2022.

\bibitem{BCMT23}
Soheil Behnezhad, Moses Charikar, Weiyun Ma, and Li{-}Yang Tan.
\newblock Single-pass streaming algorithms for correlation clustering.
\newblock In {\em Proceedings of the 2023 {ACM-SIAM} Symposium on Discrete Algorithms ({SODA})}, pages 819--849, 2023.

\bibitem{BK96}
Andr{\'{a}}s~A. Bencz{\'{u}}r and David~R. Karger.
\newblock Approximating $s$-$t$ minimum cuts in $\tilde{O}(n^2)$ time.
\newblock In {\em Proceedings of the Twenty-Eighth Annual {ACM} Symposium on the Theory of Computing ({STOC})}, pages 47--55, 1996.

\bibitem{Brand24dynamic}
Jan van~den Brand, Sebastian Forster, Yasamin Nazari, and Adam Polak.
\newblock On dynamic graph algorithms with predictions.
\newblock In {\em Proceedings of the 2024 {ACM-SIAM} Symposium on Discrete Algorithms ({SODA})}, pages 3534--3557, 2024.

\bibitem{Braverman24learning}
Vladimir Braverman, Prathamesh Dharangutte, Vihan Shah, and Chen Wang.
\newblock Learning-augmented maximum independent set.
\newblock In {\em Approximation, Randomization, and Combinatorial Optimization. Algorithms and Techniques ({APPROX/RANDOM})}, volume 317 of {\em LIPIcs}, pages 24:1--24:18, 2024.

\bibitem{CKLPU24}
M{\'{e}}lanie Cambus, Fabian Kuhn, Etna Lindy, Shreyas Pai, and Jara Uitto.
\newblock A $(3+\varepsilon)$-approximate correlation clustering algorithm in dynamic streams.
\newblock In {\em Proceedings of the 2024 {ACM-SIAM} Symposium on Discrete Algorithms ({SODA})}, pages 2861--2880, 2024.

\bibitem{CCL+25}
Nairen Cao, Vincent Cohen{-}Addad, Euiwoong Lee, Shi Li, David~Rasmussen Lolck, Alantha Newman, Mikkel Thorup, Lukas Vogl, Shuyi Yan, and Hanwen Zhang.
\newblock Solving the correlation cluster lp in sublinear time.
\newblock In {\em Proceedings of the 57th Annual {ACM} Symposium on Theory of Computing ({STOC})}, 2025.

\bibitem{CCLLNV24}
Nairen Cao, Vincent Cohen{-}Addad, Euiwoong Lee, Shi Li, Alantha Newman, and Lukas Vogl.
\newblock Understanding the cluster linear program for correlation clustering.
\newblock In {\em Proceedings of the 56th Annual {ACM} Symposium on Theory of Computing ({STOC})}, pages 1605--1616, 2024.

\bibitem{CHS24}
Nairen Cao, Shang{-}En Huang, and Hsin{-}Hao Su.
\newblock Breaking 3-factor approximation for correlation clustering in polylogarithmic rounds.
\newblock In {\em Proceedings of the 2024 {ACM-SIAM} Symposium on Discrete Algorithms ({SODA})}, pages 4124--4154, 2024.

\bibitem{CLY25}
Nairen Cao, Shi Li, and Jia Ye.
\newblock Simultaneously approximating all norms for massively parallel correlation clustering.
\newblock In {\em 52rd International Colloquium on Automata, Languages, and Programming ({ICALP})}, LIPIcs, 2025.

\bibitem{Caragiannis024}
Ioannis Caragiannis and Georgios Kalantzis.
\newblock {Randomized Learning-Augmented Auctions with Revenue Guarantees}.
\newblock In {\em Proceedings of the Thirty-Third International Joint Conference on Artificial Intelligence ({IJCAI})}, pages 2687--2694, 2024.

\bibitem{CKP08}
Deepayan Chakrabarti, Ravi Kumar, and Kunal Punera.
\newblock A graph-theoretic approach to webpage segmentation.
\newblock In {\em Proceedings of the 17th International Conference on World Wide Web ({WWW})}, pages 377--386, 2008.

\bibitem{CM23}
Sayak Chakrabarty and Konstantin Makarychev.
\newblock Single-pass pivot algorithm for correlation clustering. keep it simple!
\newblock In {\em Advances in Neural Information Processing Systems 36: Annual Conference on Neural Information Processing Systems (NeurIPS)}, 2023.

\bibitem{CGS17}
Moses Charikar, Neha Gupta, and Roy Schwartz.
\newblock Local guarantees in graph cuts and clustering.
\newblock In {\em Integer Programming and Combinatorial Optimization - 19th International Conference ({IPCO})}, volume 10328 of {\em Lecture Notes in Computer Science}, pages 136--147, 2017.

\bibitem{CGW05}
Moses Charikar, Venkatesan Guruswami, and Anthony Wirth.
\newblock Clustering with qualitative information.
\newblock {\em J. Comput. Syst. Sci.}, 71(3):360--383, 2005.

\bibitem{CMSY15}
Shuchi Chawla, Konstantin Makarychev, Tselil Schramm, and Grigory Yaroslavtsev.
\newblock Near optimal {LP} rounding algorithm for correlation clustering on complete and complete {$k$}-partite graphs.
\newblock In {\em Proceedings of the Forty-Seventh Annual {ACM} on Symposium on Theory of Computing ({STOC})}, pages 219--228, 2015.

\bibitem{CEILNRSWWZ22}
Justin~Y. Chen, Talya Eden, Piotr Indyk, Honghao Lin, Shyam Narayanan, Ronitt Rubinfeld, Sandeep Silwal, Tal Wagner, David~P. Woodruff, and Michael Zhang.
\newblock Triangle and four cycle counting with predictions in graph streams.
\newblock In {\em 10th International Conference on Learning Representations ({ICLR})}, 2022.

\bibitem{faster-graph}
Justin~Y. Chen, Sandeep Silwal, Ali Vakilian, and Fred Zhang.
\newblock Faster fundamental graph algorithms via learned predictions.
\newblock In {\em International Conference on Machine Learning ({ICML})}, volume 162 of {\em Proceedings of Machine Learning Research}, pages 3583--3602, 2022.

\bibitem{CDGLP24}
Vincent Cohen{-}Addad, Tommaso d'Orsi, Anupam Gupta, Euiwoong Lee, and Debmalya Panigrahi.
\newblock Learning-augmented approximation algorithms for maximum cut and related problems.
\newblock In {\em Advances in Neural Information Processing Systems 37: Annual Conference on Neural Information Processing Systems ({NeurIPS})}, 2024.

\bibitem{CLMNP21}
Vincent Cohen{-}Addad, Silvio Lattanzi, Slobodan Mitrovic, Ashkan Norouzi{-}Fard, Nikos Parotsidis, and Jakub Tarnawski.
\newblock Correlation clustering in constant many parallel rounds.
\newblock In {\em International Conference on Machine Learning ({ICML})}, volume 139 of {\em Proceedings of Machine Learning Research}, pages 2069--2078, 2021.

\bibitem{CLLN23}
Vincent Cohen{-}Addad, Euiwoong Lee, Shi Li, and Alantha Newman.
\newblock Handling correlated rounding error via preclustering: {A} 1.73-approximation for correlation clustering.
\newblock In {\em 64th {IEEE} Annual Symposium on Foundations of Computer Science ({FOCS})}, pages 1082--1104, 2023.

\bibitem{CLN22}
Vincent Cohen{-}Addad, Euiwoong Lee, and Alantha Newman.
\newblock Correlation clustering with sherali-adams.
\newblock In {\em 63rd {IEEE} Annual Symposium on Foundations of Computer Science ({FOCS})}, pages 651--661, 2022.

\bibitem{CLPTYZ24}
Vincent Cohen{-}Addad, David~Rasmussen Lolck, Marcin Pilipczuk, Mikkel Thorup, Shuyi Yan, and Hanwen Zhang.
\newblock Combinatorial correlation clustering.
\newblock In {\em Proceedings of the 56th Annual {ACM} Symposium on Theory of Computing ({STOC})}, pages 1617--1628, 2024.

\bibitem{Colini-Baldeschi24}
Riccardo Colini{-}Baldeschi, Sophie Klumper, Guido Sch{\"{a}}fer, and Artem Tsikiridis.
\newblock To trust or not to trust: Assignment mechanisms with predictions in the private graph model.
\newblock In {\em Proceedings of the 25th {ACM} Conference on Economics and Computation ({EC})}, pages 1134--1154, 2024.

\bibitem{DMM24}
Mina Dalirrooyfard, Konstantin Makarychev, and Slobodan Mitrovic.
\newblock Pruned pivot: Correlation clustering algorithm for dynamic, parallel, and local computation models.
\newblock In {\em Forty-first International Conference on Machine Learning ({ICML})}, 2024.

\bibitem{DMN24}
Sami Davies, Benjamin Moseley, and Heather Newman.
\newblock Simultaneously approximating all $\ell_p$-norms in correlation clustering.
\newblock In {\em 51st International Colloquium on Automata, Languages, and Programming ({ICALP})}, volume 297 of {\em LIPIcs}, pages 52:1--52:20, 2024.

\bibitem{Davies23predictive}
Sami Davies, Benjamin Moseley, Sergei Vassilvitskii, and Yuyan Wang.
\newblock Predictive flows for faster ford-fulkerson.
\newblock In {\em International Conference on Machine Learning ({ICML})}, volume 202 of {\em Proceedings of Machine Learning Research}, pages 7231--7248, 2023.

\bibitem{DEFI06}
Erik~D. Demaine, Dotan Emanuel, Amos Fiat, and Nicole Immorlica.
\newblock Correlation clustering in general weighted graphs.
\newblock {\em Theor. Comput. Sci.}, 361(2-3):172--187, 2006.

\bibitem{DePavia24learning}
Adela~Frances DePavia, Erasmo Tani, and Ali Vakilian.
\newblock Learning-based algorithms for graph searching problems.
\newblock In {\em International Conference on Artificial Intelligence and Statistics (AISTATS)}, volume 238 of {\em Proceedings of Machine Learning Research}, pages 928--936, 2024.

\bibitem{Dinitz21faster}
Michael Dinitz, Sungjin Im, Thomas Lavastida, Benjamin Moseley, and Sergei Vassilvitskii.
\newblock Faster matchings via learned duals.
\newblock In {\em Advances in Neural Information Processing Systems 34: Annual Conference on Neural Information Processing Systems (NeurIPS)}, pages 10393--10406, 2021.

\bibitem{Dong25maxcut}
Yinhao Dong, Pan Peng, and Ali Vakilian.
\newblock Learning-augmented streaming algorithms for approximating {MAX-CUT}.
\newblock In {\em 16th Innovations in Theoretical Computer Science Conference ({ITCS})}, volume 325 of {\em LIPIcs}, pages 44:1--44:24, 2025.

\bibitem{support}
Talya Eden, Piotr Indyk, Shyam Narayanan, Ronitt Rubinfeld, Sandeep Silwal, and Tal Wagner.
\newblock Learning-based support estimation in sublinear time.
\newblock In {\em 9th International Conference on Learning Representations ({ICLR})}, 2021.

\bibitem{ICLR22}
Jon~C. Ergun, Zhili Feng, Sandeep Silwal, David~P. Woodruff, and Samson Zhou.
\newblock Learning-augmented {$k$}-means clustering.
\newblock In {\em 10th International Conference on Learning Representations ({ICLR})}, 2022.

\bibitem{Feigenbaum05semi}
Joan Feigenbaum, Sampath Kannan, Andrew McGregor, Siddharth Suri, and Jian Zhang.
\newblock On graph problems in a semi-streaming model.
\newblock {\em Theor. Comput. Sci.}, 348(2-3):207--216, 2005.

\bibitem{FV20}
Paolo Ferragina and Giorgio Vinciguerra.
\newblock The pgm-index: A fully-dynamic compressed learned index with provable worst-case bounds.
\newblock {\em Proc. {VLDB} Endow.}, 13(8):1162--1175, 2020.

\bibitem{GVY96}
Naveen Garg, Vijay~V. Vazirani, and Mihalis Yannakakis.
\newblock Approximate max-flow min-(multi)cut theorems and their applications.
\newblock {\em {SIAM} J. Comput.}, 25(2):235--251, 1996.

\bibitem{Ghoshal24constraint}
Suprovat Ghoshal, Konstantin Makarychev, and Yury Markarychev.
\newblock Constraint satisfaction problems with advice.
\newblock In {\em Proceedings of the 2025 Annual {ACM-SIAM} Symposium on Discrete Algorithms ({SODA})}, pages 1202--1221, 2025.

\bibitem{GkatzelisKST22}
Vasilis Gkatzelis, Kostas Kollias, Alkmini Sgouritsa, and Xizhi Tan.
\newblock {Improved Price of Anarchy via Predictions}.
\newblock In {\em 23rd {ACM} Conference on Economics and Computation ({EC})}, pages 529--557, 2022.

\bibitem{gurobi}
{Gurobi Optimization, LLC}.
\newblock {Gurobi Optimizer Reference Manual}, 2023.

\bibitem{HIA24}
Holger S.~G. Heidrich, Jannik Irmai, and Bjoern Andres.
\newblock A 4-approximation algorithm for min max correlation clustering.
\newblock In {\em International Conference on Artificial Intelligence and Statistics ({AISTATS})}, volume 238 of {\em Proceedings of Machine Learning Research}, pages 1945--1953, 2024.

\bibitem{Henzinger24complexity}
Monika Henzinger, Barna Saha, Martin~P. Seybold, and Christopher Ye.
\newblock On the complexity of algorithms with predictions for dynamic graph problems.
\newblock In {\em 15th Innovations in Theoretical Computer Science Conference ({ITCS})}, volume 287 of {\em LIPIcs}, pages 62:1--62:25, 2024.

\bibitem{frequency}
Chen{-}Yu Hsu, Piotr Indyk, Dina Katabi, and Ali Vakilian.
\newblock Learning-based frequency estimation algorithms.
\newblock In {\em 7th International Conference on Learning Representations ({ICLR})}, 2019.

\bibitem{Im21online}
Sungjin Im, Ravi Kumar, Mahshid~Montazer Qaem, and Manish Purohit.
\newblock Online knapsack with frequency predictions.
\newblock In {\em Advances in Neural Information Processing Systems 34: Annual Conference on Neural Information Processing Systems (NeurIPS)}, pages 2733--2743, 2021.

\bibitem{Indyk19learning}
Piotr Indyk, Ali Vakilian, and Yang Yuan.
\newblock Learning-based low-rank approximations.
\newblock In {\em Advances in Neural Information Processing Systems 32: Annual Conference on Neural Information Processing Systems (NeurIPS)}, pages 7400--7410, 2019.

\bibitem{JLLRW20}
Tanqiu Jiang, Yi~Li, Honghao Lin, Yisong Ruan, and David~P. Woodruff.
\newblock Learning-augmented data stream algorithms.
\newblock In {\em 8th International Conference on Learning Representations ({ICLR})}, 2020.

\bibitem{JST11}
Hossein Jowhari, Mert Saglam, and G{\'{a}}bor Tardos.
\newblock Tight bounds for {$L_p$} samplers, finding duplicates in streams, and related problems.
\newblock In {\em Proceedings of the 30th {ACM} {SIGMOD-SIGACT-SIGART} Symposium on Principles of Database Systems ({PODS})}, pages 49--58, 2011.

\bibitem{KMZ19}
Sanchit Kalhan, Konstantin Makarychev, and Timothy Zhou.
\newblock Correlation clustering with local objectives.
\newblock In {\em Advances in Neural Information Processing Systems 32: Annual Conference on Neural Information Processing Systems ({NeurIPS})}, pages 9341--9350, 2019.

\bibitem{KMMMNST20fast}
Michael Kapralov, Aida Mousavifar, Cameron Musco, Christopher Musco, Navid Nouri, Aaron Sidford, and Jakab Tardos.
\newblock Fast and space efficient spectral sparsification in dynamic streams.
\newblock In {\em Proceedings of the 2020 {ACM-SIAM} Symposium on Discrete Algorithms ({SODA})}, pages 1814--1833, 2020.

\bibitem{KL11}
Jonathan~A. Kelner and Alex Levin.
\newblock Spectral sparsification in the semi-streaming setting.
\newblock In {\em 28th International Symposium on Theoretical Aspects of Computer Science ({STACS})}, volume~9 of {\em LIPIcs}, pages 440--451, 2011.

\bibitem{KYNK14}
Sungwoong Kim, Chang~Dong Yoo, Sebastian Nowozin, and Pushmeet Kohli.
\newblock Image segmentation using higher-order correlation clustering.
\newblock {\em {IEEE} Trans. Pattern Anal. Mach. Intell.}, 36(9):1761--1774, 2014.

\bibitem{kuroki2024query}
Yuko Kuroki, Atsushi Miyauchi, Francesco Bonchi, and Wei Chen.
\newblock Query-efficient correlation clustering with noisy oracle.
\newblock In {\em Advances in Neural Information Processing Systems 37: Annual Conference on Neural Information Processing Systems ({NeurIPS})}, 2024.

\bibitem{Scheduling}
Silvio Lattanzi, Thomas Lavastida, Benjamin Moseley, and Sergei Vassilvitskii.
\newblock Online scheduling via learned weights.
\newblock In {\em Proceedings of the 2020 {ACM-SIAM} Symposium on Discrete Algorithms ({SODA})}, pages 1859--1877, 2020.

\bibitem{LSV23}
Silvio Lattanzi, Ola Svensson, and Sergei Vassilvitskii.
\newblock Speeding up bellman ford via minimum violation permutations.
\newblock In {\em International Conference on Machine Learning ({ICML})}, volume 202 of {\em Proceedings of Machine Learning Research}, pages 18584--18598, 2023.

\bibitem{LKF07}
Jure Leskovec, Jon~M. Kleinberg, and Christos Faloutsos.
\newblock Graph evolution: Densification and shrinking diameters.
\newblock {\em {ACM} Trans. Knowl. Discov. Data}, 1(1):2, 2007.

\bibitem{snapnets}
Jure Leskovec and Andrej Krevl.
\newblock {SNAP Datasets}: {Stanford} large network dataset collection.
\newblock \url{http://snap.stanford.edu/data}, June 2014.

\bibitem{Li23learning}
Yi~Li, Honghao Lin, Simin Liu, Ali Vakilian, and David~P. Woodruff.
\newblock Learning the positions in countsketch.
\newblock In {\em 11th International Conference on Learning Representations ({ICLR})}, 2023.

\bibitem{LLW22}
Honghao Lin, Tian Luo, and David~P. Woodruff.
\newblock Learning augmented binary search trees.
\newblock In {\em International Conference on Machine Learning ({ICML})}, volume 162 of {\em Proceedings of Machine Learning Research}, pages 13431--13440, 2022.

\bibitem{liu23predicted}
Quanquan~C. Liu and Vaidehi Srinivas.
\newblock The predicted-deletion dynamic model: Taking advantage of {ML} predictions, for free.
\newblock {\em CoRR}, abs/2307.08890, 2023.

\bibitem{lovasz1993random}
L{\'a}szl{\'o} Lov{\'a}sz.
\newblock Random walks on graphs: A survey.
\newblock {\em Combinatorics, Paul Erd{\H{o}}s is Eighty}, 2(1):1--46, 1993.

\bibitem{LuW024}
Pinyan Lu, Zongqi Wan, and Jialin Zhang.
\newblock {Competitive Auctions with Imperfect Predictions}.
\newblock In {\em Proceedings of the 25th {ACM} Conference on Economics and Computation ({EC})}, pages 1155--1183, 2024.

\bibitem{caching}
Thodoris Lykouris and Sergei Vassilvitskii.
\newblock Competitive caching with machine learned advice.
\newblock {\em J. {ACM}}, 68(4):24:1--24:25, 2021.

\bibitem{ML12}
Julian~J. McAuley and Jure Leskovec.
\newblock Learning to discover social circles in ego networks.
\newblock In {\em Advances in Neural Information Processing Systems 25: 26th Annual Conference on Neural Information Processing Systems (NIPS)}, pages 548--556, 2012.

\bibitem{Mitzenmacher18model}
Michael Mitzenmacher.
\newblock A model for learned bloom filters and optimizing by sandwiching.
\newblock In {\em Advances in Neural Information Processing Systems 31: Annual Conference on Neural Information Processing Systems (NeurIPS)}, pages 462--471, 2018.

\bibitem{MV21}
Michael Mitzenmacher and Sergei Vassilvitskii.
\newblock {\em Algorithms with Predictions}, page 646–662.
\newblock Cambridge University Press, 2021.

\bibitem{ICLR23}
Thy Nguyen, Anamay Chaturvedi, and Huy~Le Nguyen.
\newblock Improved learning-augmented algorithms for {$k$}-means and {$k$}-medians clustering.
\newblock In {\em 11th International Conference on Learning Representations ({ICLR})}, 2023.

\bibitem{PM18}
Gregory~J. Puleo and Olgica Milenkovic.
\newblock Correlation clustering and biclustering with locally bounded errors.
\newblock {\em {IEEE} Trans. Inf. Theory}, 64(6):4105--4119, 2018.

\bibitem{Purohit18improving}
Manish Purohit, Zoya Svitkina, and Ravi Kumar.
\newblock Improving online algorithms via {ML} predictions.
\newblock In {\em Advances in Neural Information Processing Systems 31: Annual Conference on Neural Information Processing Systems (NeurIPS)}, pages 9684--9693, 2018.

\bibitem{Rozemberczki20characteristic}
Benedek Rozemberczki and Rik Sarkar.
\newblock Characteristic functions on graphs: Birds of a feather, from statistical descriptors to parametric models.
\newblock In {\em 29th {ACM} International Conference on Information and Knowledge Management (CIKM)}, pages 1325--1334, 2020.

\bibitem{Sato23fast}
Atsuki Sato and Yusuke Matsui.
\newblock Fast partitioned learned bloom filter.
\newblock In {\em Advances in Neural Information Processing Systems 36: Annual Conference on Neural Information Processing Systems (NeurIPS)}, 2023.

\bibitem{SDELM21}
Jessica Shi, Laxman Dhulipala, David Eisenstat, Jakub Lacki, and Vahab~S. Mirrokni.
\newblock Scalable community detection via parallel correlation clustering.
\newblock {\em Proc. {VLDB} Endow.}, 14(11):2305--2313, 2021.

\bibitem{Silwal23kwikbucks}
Sandeep Silwal, Sara Ahmadian, Andrew Nystrom, Andrew McCallum, Deepak Ramachandran, and Seyed~Mehran Kazemi.
\newblock Kwikbucks: Correlation clustering with cheap-weak and expensive-strong signals.
\newblock In {\em 11th International Conference on Learning Representations ({ICLR})}, 2023.

\bibitem{ST11}
Daniel~A. Spielman and Shang{-}Hua Teng.
\newblock Spectral sparsification of graphs.
\newblock {\em {SIAM} J. Comput.}, 40(4):981--1025, 2011.

\bibitem{Swamy04}
Chaitanya Swamy.
\newblock Correlation clustering: Maximizing agreements via semidefinite programming.
\newblock In {\em Proceedings of the Fifteenth Annual {ACM-SIAM} Symposium on Discrete Algorithms ({SODA})}, pages 526--527, 2004.

\bibitem{Vaidya21partitioned}
Kapil Vaidya, Eric Knorr, Michael Mitzenmacher, and Tim Kraska.
\newblock Partitioned learned bloom filters.
\newblock In {\em 9th International Conference on Learning Representations ({ICLR})}, 2021.

\bibitem{Veldt18community}
Nate Veldt, David~F. Gleich, and Anthony Wirth.
\newblock A correlation clustering framework for community detection.
\newblock In {\em Proceedings of the 2018 World Wide Web Conference on World Wide Web ({WWW})}, pages 439--448, 2018.

\bibitem{XuL22}
Chenyang Xu and Pinyan Lu.
\newblock {Mechanism Design with Predictions}.
\newblock In {\em Proceedings of the Thirty-First International Joint Conference on Artificial Intelligence ({IJCAI})}, pages 571--577, 2022.

\bibitem{Yang15defining}
Jaewon Yang and Jure Leskovec.
\newblock Defining and evaluating network communities based on ground-truth.
\newblock {\em Knowl. Inf. Syst.}, 42(1):181--213, 2015.

\bibitem{YBLG17}
Hao Yin, Austin~R. Benson, Jure Leskovec, and David~F. Gleich.
\newblock Local higher-order graph clustering.
\newblock In {\em Proceedings of the 23rd {ACM} {SIGKDD} International Conference on Knowledge Discovery and Data Mining (KDD)}, pages 555--564, 2017.

\end{thebibliography}

\appendix

\section{Additional technical preliminaries}
\label{sec:tools}
In this paper, we frequently use graph sparsification techniques, so we review the relevant definitions in this section.
\paragraph{$\ell_0$-samplers.} We first review the definition of $\ell_0$-samplers.
\begin{definition}[$\ell_0$-sampler~\citep{JST11}]
\label{def:l0-sampler}
Let $\vx\in \sR^n$ be a non-zero vector and $\delta\in (0,1)$. 
An $\ell_0$-sampler for $\vx$ returns FAIL with probability at most $\delta$ and otherwise returns some index $i$ such that $x_i \neq 0$ with probability $\frac{1}{|\operatorname{supp}(\vx)|}$ where $\operatorname{supp}(\vx)=\{i\mid x_i\neq 0\}$ is the support of $\vx$.
\end{definition}

The following theorem states that $\ell_0$-samplers can be maintained using a single pass in dynamic streams.
\begin{theorem}[\cite{JST11}]
\label{thm:l0-sampler}
There exists a single-pass streaming algorithm for maintaining an $\ell_0$-sampler for a non-zero vector $\vx\in \sR^n$ (with failure pribability $\delta$) in the dynamic model using $O(\log^2 n \log \delta^{-1})$ bits of space.
\end{theorem}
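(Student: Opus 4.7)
} The plan is to implement the standard three-layer construction: (i) geometric subsampling of coordinates, (ii) a $1$-sparse (or constant-sparse) recovery primitive at each level, and (iii) independent repetition to drive the failure probability down to $\delta$.

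First, I would fix a pairwise-independent hash family $h_\ell:[n]\to\{0,1\}$ with $\Pr[h_\ell(i)=1]=2^{-\ell}$, for each level $\ell=0,1,\ldots,\lceil\log_2 n\rceil$. For each $\ell$ we implicitly restrict the input vector $\vx$ to the coordinates $i$ with $h_\ell(i)=1$, obtaining a subsampled vector $\vx^{(\ell)}$. Since $\vx$ is non-zero, letting $s=|\mathrm{supp}(\vx)|$, the expected sparsity of $\vx^{(\ell)}$ is $s\cdot 2^{-\ell}$. In particular, at the level $\ell^*$ with $2^{\ell^*}\in[s,2s)$ we have $\Pr\bigl[|\mathrm{supp}(\vx^{(\ell^*)})|=1\bigr]=\Omega(1)$ by a second-moment argument, so \emph{some} level yields a $1$-sparse residual with constant probability.

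Second, at every level I would maintain a $1$-sparse recovery sketch that uses $O(\log n)$ bits and can (a) detect whether $\vx^{(\ell)}$ has exactly one non-zero coordinate, and (b) recover that coordinate in this case. The classical construction stores three running sums (all linear in $\vx^{(\ell)}$, hence supporting insertions and deletions):
\begin{align*}
S_0 = \sum_{i:\, h_\ell(i)=1} x_i,\quad S_1 = \sum_{i:\, h_\ell(i)=1} i\cdot x_i,\quad F = \sum_{i:\, h_\ell(i)=1} x_i\cdot r^{\,i} \bmod p,
\end{align*}
for a random prime $p=\mathrm{poly}(n)$ and random $r\in\mathbb{Z}_p$. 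If $\vx^{(\ell)}$ is $1$-sparse with unique support $j$ and value $v$, then $S_0=v$, $j=S_1/S_0$, and the fingerprint identity $F\equiv S_0\cdot r^{j}\pmod p$ holds; otherwise the identity fails with probability $1-O(n/p)$ by Schwartz--Zippel, so with high probability we never return a spurious index. Each level therefore uses $O(\log n)$ bits, and across $O(\log n)$ levels this contributes $O(\log^2 n)$ bits per repetition.

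Third, I would run $T=O(\log\delta^{-1})$ independent copies of the above structure and, at query time, scan the levels of each copy to find one that decodes successfully; return the recovered index from the first copy that succeeds, and FAIL only if all copies fail. Since each copy succeeds with constant probability (from step one combined with the correctness of the recovery primitive in step two), the overall failure probability is at most $\delta$. The total space is $O(\log^2 n\cdot\log\delta^{-1})$ bits. The main technical obstacle is ensuring the sampled index is \emph{uniform} over $\mathrm{supp}(\vx)$ rather than just non-zero: this is handled by noting that conditioned on ``the first successful copy decodes at level $\ell^*$ to some coordinate,'' each coordinate of $\mathrm{supp}(\vx)$ is equally likely to be the unique survivor of $h_{\ell^*}$, so the output distribution is exactly uniform on $\mathrm{supp}(\vx)$; verifying this symmetry (which requires the hash families at different copies to be independent and within a copy to be at least pairwise independent) is the subtlest part of the argument.
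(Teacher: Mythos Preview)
The paper does not prove this theorem; it is quoted verbatim from \cite{JST11} as a black box in the preliminaries, so there is no ``paper's own proof'' to compare against. Your sketch is the standard construction and is essentially correct as an outline of the cited result.

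One point worth tightening: your uniformity argument is fine (symmetry of the hash family over the support elements does give exact uniformity conditioned on a level isolating a single survivor), but your treatment of the fingerprint false-positive probability is loose. You say the Schwartz--Zippel error is $O(n/p)$ with $p=\mathrm{poly}(n)$, yet you never union-bound this error over the $O(\log n\cdot\log\delta^{-1})$ recovery cells. If $\delta$ is sub-polynomial in $1/n$, naively enlarging $p$ to absorb this union bound would cost an extra $\log\delta^{-1}$ factor in the per-cell space and break the stated $O(\log^2 n\log\delta^{-1})$ bound. The clean fix (and what \cite{JST11} effectively does) is to set the per-level false-positive probability to a small constant so that each \emph{copy} is correct with constant probability, fold any remaining false positives into that copy's failure event, and then let the $O(\log\delta^{-1})$ independent repetitions drive the overall failure down to $\delta$. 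With that adjustment the space accounting goes through exactly as claimed.
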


\paragraph{Cut sparsifiers.} Cut sparsifiers~\cite{BK96} are a basic notion in graph sparsification.
\begin{definition}[Cut sparsifier~\cite{BK96}]
\label{def:cut-sparsifier}
Let $H=(V_H,E_H,w)$ be an undirected graph and let $\varepsilon \in (0,1)$. We say that a reweighted subgraph $H'=(V_H,E_H',w')$ is an \emph{$\varepsilon$-cut sparsifier} of $H$ if for any $A\subseteq V_H$,
\begin{equation*}
(1-\varepsilon)\partial_H(A)\leq \partial_{H'}(A) \leq (1+\varepsilon)\partial_H(A),
\end{equation*}
where $\partial_{H}(A):= \sum_{(u,v)\in E_H:u\in A,v\notin A} w_{uv}$  and $\partial_{H'}(A):= \sum_{(u,v)\in E'_H:u\in A,v\notin A} w'_{uv}$ denote the weights of the cut $(A,V_H\setminus A)$ in $H$ and $H'$, respectively.
\end{definition}

\paragraph{Spectral sparsifiers.} Spectral sparsifiers~\cite{ST11} are a stronger notion than cut sparsifers.
\begin{definition}[Spectral sparsifier~\cite{ST11}]
\label{def:spectral-sparsifier}
Let $H=(V_H,E_H,w)$ be an undirected graph and let $\varepsilon \in (0,1)$. We say that a reweighted subgraph $H'=(V_H,E_H',w')$ is an \emph{$\varepsilon$-spectral sparsifier} of $H$ if for any $\vx\in \mathbb{R}^n$,
\begin{equation*}
(1-\varepsilon)\vx^\top  L_H\vx \leq \vx^\top  L_{H'}\vx \leq (1+\varepsilon)\vx^\top  L_H\vx,
\end{equation*}
which is equivalent to
\begin{equation*}
(1-\varepsilon)L_H \preceq L_{H'} \preceq (1+\varepsilon)  L_H,
\end{equation*}
where $L_H$ and $L_{H'}$ denote the Laplacian matrix of $H$ and $H'$, respectively.
\end{definition}

It is easy to see that if $H'$ is an $\varepsilon$-spectral sparsifier of $H$, then $H'$ is also an $\varepsilon$-cut sparsifier of $H$.

The following theorem states that an $\varepsilon$-spectral sparsifier can be constructed using a single pass and ${O}(\varepsilon^{-2}n\log n)$ space in insertion-only streams.
\begin{theorem}[\cite{KL11}]
\label{thm:insertion-sparsification}
There exists a single-pass streaming algorithm for constructing an $\varepsilon$-spectral sparsifier of an  undirected graph in insertion-only streams using ${O}(\varepsilon^{-2}n\log n)$ space. The algorithm succeeds with high probability.
\end{theorem}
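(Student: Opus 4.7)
The plan is to use the standard merge-and-reduce (sketch-and-merge) framework with an offline spectral sparsification algorithm as the subroutine. I would conceptually partition the edge stream into blocks of $\Theta(n)$ edges, arranged as the leaves of a balanced binary tree of depth $L = O(\log n)$, with each internal node storing a spectral sparsifier of the union of edges in its subtree. The key enabling fact, inherited directly from \cref{def:spectral-sparsifier}, is that Laplacians are additive and spectral approximation is preserved under unions: if $\tilde H_i$ is an $\varepsilon'$-spectral sparsifier of $H_i$ for $i=1,2$, then $\tilde H_1 \cup \tilde H_2$ is an $\varepsilon'$-spectral sparsifier of $H_1 \cup H_2$. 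Whenever two sibling sparsifiers become ready, I would apply an offline $(1\pm\varepsilon')$-spectral sparsifier to their union (for instance via Spielman--Srivastava effective-resistance sampling), producing a parent sparsifier with $O(\varepsilon'^{-2} n \log n)$ edges.

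In the streaming implementation, only the $O(\log n)$ ancestors of the current leaf together with any completed sibling sparsifiers waiting to be merged need to reside in memory, so the total storage is $O(\varepsilon'^{-2} n \log^2 n)$. To achieve an overall $(1 \pm \varepsilon)$ guarantee I would set $\varepsilon' = \Theta(\varepsilon / \log n)$, so that composing the error across the $L$ levels gives $(1 + \varepsilon')^L \le 1 + \varepsilon$. Each offline sparsifier invocation is tuned to fail with probability at most $1/n^{c + O(1)}$, and a union bound over the $\operatorname{poly}(n)$ invocations across the tree delivers the stated high-probability success.

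The main obstacle I expect is tightening the space bound from the naive merge-and-reduce estimate $\tilde O(\varepsilon^{-2} n)$ to the exact $O(\varepsilon^{-2} n \log n)$ in the statement, since my tree-based analysis loses an extra $\operatorname{polylog} n$ both from the depth $L$ and from the per-node sparsifier sizes blowing up under the shrunken $\varepsilon'$. The cleaner route, which I would follow for the precise bound, is to invoke the refined Kelner--Levin construction directly: it maintains Spielman--Srivastava-style effective-resistance sampling probabilities in a single pass via Johnson--Lindenstrauss-type projections of the Laplacian, thereby avoiding the cumulative logarithmic overhead from the hierarchical tree. I would cite this construction as a black box and verify that its parameters align with the claim; the merge-and-reduce argument above then serves as a conceptually simpler but slightly lossier substitute.
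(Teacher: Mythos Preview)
The paper does not prove this statement at all: it is quoted verbatim as a known result of Kelner and Levin and used purely as a black box (see the citation tag \texttt{[KL11]} attached to the theorem in \cref{sec:tools}). There is therefore no ``paper's own proof'' to compare against; your sketch of the merge-and-reduce framework, together with your observation that the precise $O(\varepsilon^{-2}n\log n)$ bound requires invoking the Kelner--Levin construction directly rather than the lossier hierarchical approach, is an accurate account of how the cited result is obtained, but it goes well beyond what the present paper does.
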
 

To construct spectral sparsifiers in dynamic streams, we use the following theorem.
\begin{theorem}[\cite{KMMMNST20fast}]
\label{thm:dynamic-spectral}
There exists a single-pass streaming algorithm for constructing an $\varepsilon$-spectral sparsifier of an  undirected unweighted graph in dynamic streams using $\tilde{O}(\varepsilon^{-2}n)$ space. The algorithm succeeds with high probability.
\end{theorem}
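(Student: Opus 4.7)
The plan is to build the $\varepsilon$-spectral sparsifier via \emph{effective-resistance sampling} in the style of Spielman--Srivastava, but implemented against a dynamic stream through linear sketches that support subsampling at multiple scales. Recall the foundational fact: if one independently retains each edge $e=(u,v)$ with probability $p_e \ge \min\{1, C\, w_{uv} R_{\mathrm{eff}}(u,v) \log(n)/\varepsilon^2\}$ and reweights surviving edges by $1/p_e$, the resulting graph is, with high probability, an $\varepsilon$-spectral sparsifier with $\tilde O(n/\varepsilon^{2})$ edges. The difficulty in a dynamic stream is that $R_{\mathrm{eff}}$ depends on the final graph; we need a mechanism that (a) linearly sketches enough information to later extract the right samples and (b) produces the effective-resistance estimates needed to decide sampling probabilities.

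First, I would pick $L=O(\log n)$ \emph{levels} and a family of hash functions $h_1,\ldots,h_L$ that, at level $i$, marks each edge independently with probability $2^{-i}$. For every level $i$ I would maintain, over the vertex-edge signed incidence vectors restricted to the marked edges, an $\ell_2$-heavy-hitter / sparse-recovery linear sketch of size $\tilde O(n/\varepsilon^{2})$. Because the sketch is linear, it is trivially updatable against insertions and deletions, and the hash marking can be replaced by limited-independence hashing so the seed storage is negligible. Cumulative space across levels remains $\tilde O(n/\varepsilon^{2})$.

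Second, in post-processing I would bootstrap. A subset of the sketches is used to recover an initial constant-factor spectral sparsifier $H_0$ of $G$ (for instance, via uniform sampling combined with heavy-hitter recovery, which suffices for a constant-factor guarantee on the Laplacian). From $H_0$ one can compute, in polynomial time, constant-factor estimates $\tilde R_{\mathrm{eff}}(u,v)$ for any pair. These estimates prescribe, for each edge, which level $i$ realises the target Spielman--Srivastava probability, and the heavy-hitter sketch at that level returns the edge together with its correct reweighting. Iterating the refinement $O(\log\log n)$ times drives the approximation factor down from constant to $(1\pm\varepsilon)$, yielding the claimed spectral sparsifier.

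The main obstacle is twofold and technical rather than conceptual: first, arguing that the $\ell_2$ heavy-hitter sketch at level $i$ truly recovers all edges that survive the Bernoulli$(2^{-i})$ sampling, which requires controlling the per-row $\ell_2$ mass of the incidence matrix of the subsampled graph (this is where the constant-factor spectral approximation enters, through the bound $\|b_e\|_{L_G^+}^2 = R_{\mathrm{eff}}(e)$); second, ensuring that the chicken-and-egg interaction between ``approximate resistances decide which level to read from'' and ``reads from a level sharpen the resistance estimate'' does not compound error across the $O(\log n)$ iterations. Both are handled by a careful union bound over vertices, edges and iterations, combined with matrix-Chernoff for the final sampling step; choosing the sketch widths and the universal constants sufficiently large yields the high-probability guarantee inside the total budget $\tilde O(\varepsilon^{-2}n)$.
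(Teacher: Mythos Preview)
This theorem is not proved in the paper; it is quoted as a black-box result from \cite{KMMMNST20fast} and used without argument. So there is no ``paper's own proof'' to compare against. Your sketch is a reasonable high-level outline of the Kapralov et al.\ approach (level-based subsampling via linear sketches plus bootstrapped effective-resistance estimation), but for the purposes of this paper no proof is expected or needed---the statement is simply imported from prior work.
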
 

\paragraph{Effective resistances.} Finally, we review the definition of effective resistances, which are used during the construction of spectral sparsifiers.
\begin{definition}[Effective resistance]
\label{def:eff-res}
Given an undirected graph $G=(V,E,w)$ and a pair of vertices $u,v \in V$, the \emph{effective resistance} between $u$ and $v$ in $G$ is defined as
\begin{align*}
    R_G(u,v):=\max_{\vx \in \sR^{|V|},\vx \neq \vzero} \frac{(x_u - x_v)^2}{\vx^\top L_G \vx},
\end{align*}
where $L_G$ is the Laplacian matrix of $G$.
\end{definition}

We will use the following lower bound on the effective resistance.
\begin{lemma}[\cite{lovasz1993random}]
\label{lem:prop-eff}
    Let $G=(V,E)$ be an undirected unweighted graph. For any $u,v \in V$, we have
    \begin{align*}
        R_G(u,v)\ge \frac{1}{2}\left(\frac{1}{\deg(u)}+\frac{1}{\deg(v)}\right).
    \end{align*}
\end{lemma}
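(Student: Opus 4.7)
The plan is to invoke Thomson's principle, which gives an alternative characterization of the effective resistance as an energy minimization over unit flows. Since $G$ is unweighted, every edge has unit resistance and
\begin{equation*}
R_G(u,v) \;=\; \min_{f}\; \sum_{e \in E} f(e)^2,
\end{equation*}
where the minimum is taken over all unit $u$-$v$ flows $f$ (i.e., antisymmetric edge functions satisfying $\sum_{w \sim u} f(u,w) = 1$, $\sum_{w \sim v} f(v,w) = -1$, and flow conservation at every other vertex). This identity is dual to the variational formulation in \cref{def:eff-res}, and is the workhorse of the argument: once we fix an optimal flow $f^*$, the task becomes bounding $\sum_e f^*(e)^2$ from below using only $\deg(u)$ and $\deg(v)$.

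First I would localize the energy at the two endpoints. By flow conservation, the signed flow values on the $\deg(u)$ edges incident to $u$ sum to $1$, so by Cauchy--Schwarz
\begin{equation*}
\deg(u)\cdot\sum_{e \ni u} f^*(e)^2 \;\geq\; \Bigl(\sum_{w \sim u} f^*(u,w)\Bigr)^{\!2} \;=\; 1,
\end{equation*}
which gives $\sum_{e \ni u} f^*(e)^2 \geq 1/\deg(u)$. The symmetric argument at $v$ yields $\sum_{e \ni v} f^*(e)^2 \geq 1/\deg(v)$.

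Finally, I would combine the two local bounds. Each edge of $G$ is incident to at most two vertices, so it appears in at most one of the two sums above, except for the edge $(u,v)$ itself (when present), which is the only edge that can be counted in both. Therefore
\begin{equation*}
R_G(u,v) \;=\; \sum_{e \in E} f^*(e)^2 \;\geq\; \frac{1}{2}\Bigl(\sum_{e \ni u} f^*(e)^2 + \sum_{e \ni v} f^*(e)^2\Bigr) \;\geq\; \frac{1}{2}\Bigl(\frac{1}{\deg(u)} + \frac{1}{\deg(v)}\Bigr),
\end{equation*}
which is the claimed inequality. I do not anticipate a serious obstacle: Cauchy--Schwarz and Thomson's principle are standard, and the factor $\tfrac{1}{2}$ enters naturally from the possible double counting of the single edge $(u,v)$. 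In fact, when $u \not\sim v$ the same argument delivers the strictly stronger bound $R_G(u,v) \geq 1/\deg(u) + 1/\deg(v)$, so the $\tfrac{1}{2}$ is precisely the best uniform constant that handles both the adjacent and non-adjacent cases.
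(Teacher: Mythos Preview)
Your proof is correct. The argument via Thomson's principle and Cauchy--Schwarz at the two endpoints is the standard way to obtain this bound, and your handling of the possible double-counting of the edge $(u,v)$ cleanly explains the factor $\tfrac{1}{2}$.

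There is nothing to compare against: the paper does not supply its own proof of this lemma but simply imports it as a known fact from Lov\'asz's random-walk survey \cite{lovasz1993random}. Your self-contained argument is more than sufficient and would serve well as a proof in the appendix if one were desired.
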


\section{Omitted pseudocodes of \cref{sec:alg-dynamic}}
\label{sec:omitted-pseudocodes-dynamic}
In this section, we give the omitted pseudocodes of \cref{sec:alg-dynamic}: \cref{alg:parallel-pivot}, \cref{alg:pivot}, \cref{alg:CMSY-pivot}, \cref{alg:3-approx-dynamic} and \cref{alg:2.06-approx-dynamic}.

\begin{algorithm}[t]
\begin{algorithmic}[1]
\Require Graph $G^+ =(V,E^+)$, oracle access to a $\beta$-level predictor $\Pi$
\Ensure Clustering/Partition of $V$ into disjoint sets
\State Pick a random permutation of vertices $\pi: V \rightarrow \{1,\dots,n\}$.
\State Initially, all vertices are unclustered and interesting.
\State A vertex $u$ marks itself uninteresting if $\pi_u \geq \tau_u$ where $\tau_u := \frac{c}{\varepsilon}\cdot\frac{n\log n}{\deg^+(u)}$. Here, $\varepsilon \in (0,1/4)$ and $c$ is a universal large constant.
\State \label{line:G_store} Let $G_\textnormal{store}$ be the graph induced by the interesting vertices.
\State \phantomsection \label{line:call-ACN-pivot} $\mathcal{C}_1 \leftarrow \textsc{TruncatedPivot}(G^+, G_\textnormal{store},\pi)$ 
\State \phantomsection \label{line:call-CMSY-pivot} $\mathcal{C}_2 \leftarrow \textsc{TruncatedPivotWithPred}(G^+, G_\textnormal{store},\pi, \Pi)$ 
\State $i \leftarrow \arg\min_{i=1,2} \{\operatorname{cost}_G(\mathcal{C}_i)\}$
\State \Return $\mathcal{C}_i$

\end{algorithmic}
\caption{Offline version of \cref{alg:dynamic-stream}}
\label{alg:parallel-pivot}
\end{algorithm}

\begin{algorithm}[t]
\begin{algorithmic}[1]
\Require Graph $G^+=(V, E^+)$, induced subgraph $H=(V_H, E_H)$ where $V_H\subseteq V$ and $E_H\subseteq E^+$, permutation $\pi: V \rightarrow \{1,\dots,n\}$
\Ensure Clustering/Partition of $V$ into disjoint sets
\State Let $U^{(1)} \leftarrow V_H$ be the set of unclustered vertices in $V_H$.
\State Let $t \leftarrow 1$.
\While{$U^{(t)} \neq \emptyset$}
\State Let $u\in U^{(t)}$ be the vertex with the smallest rank.
\State Mark $u$ as a pivot. Initialize a new \emph{pivot cluster} $S^{(t)} \leftarrow \{u\}$.

\State For each vertex $v\in U^{(t)}$ such that $(u,v)\in E_H$, add $v$ to $S^{(t)}$.
\State Remove all vertices clustered at this iteration from $U^{(t)}$. 
\State $t\leftarrow t+1$.
\EndWhile
\State Each vertex $u\in V\setminus V_H$ joins the cluster of pivot $v$ with the smallest rank, if $(u,v)\in E^+$ and $\pi_v < \tau_u$.
\State Each unclustered vertex $u\in V$ creates a \emph{singleton cluster}.
\State \textbf{return} the final clustering $\mathcal{C}$, which contains all pivot clusters and singleton clusters
\end{algorithmic}
\caption{$\textsc{TruncatedPivot}(G^+, H,\pi)$}
\label{alg:pivot}
\end{algorithm}

\begin{algorithm}[t]
\begin{algorithmic}[1]
\Require Graph $G^+=(V, E^+)$, induced subgraph $H=(V_H, E_H)$ where $V_H\subseteq V$ and $E_H\subseteq E^+$, permutation $\pi: V \rightarrow \{1,\dots,n\}$, oracle access to a $\beta$-level predictor $\Pi$
\Ensure Clustering/Partition of $V$ into disjoint sets
\State Let $U^{(1)} \leftarrow V_H$ be the set of unclustered vertices in $V_H$.
\State Let $t \leftarrow 1$.
\State For any $u,v\in V$, $d_{uv} = \Pi(u,v)$.
\State For any $u,v\in V$, define $p_{uv}:=f(d_{uv})$.
\While{$U^{(t)} \neq \emptyset$}
\State Let $u\in U^{(t)}$ be the vertex with the smallest rank.
\State Mark $u$ as a pivot. Initialize a new \emph{pivot cluster} $S^{(t)} \leftarrow \{u\}$.

\State For each vertex $v\in U^{(t)}$, add $v$ to $S^{(t)}$ with probability $(1-p_{uv})$ independently.
\State Remove all vertices clustered at this iteration from $U^{(t)}$. 
\State $t\leftarrow t+1$.
\EndWhile
\State \label{line:assign} Each vertex $u\in V\setminus V_H$ joins the cluster of pivot $v$ in the order of $\pi$ with probability $(1-p_{uv})$ independently, if $\pi_v < \tau_u$. 
\State Each unclustered vertex $u\in V$ creates a \emph{singleton cluster}.
\State \textbf{return} the final clustering $\mathcal{C}$, which contains all pivot clusters and singleton clusters
\end{algorithmic}
\caption{$\textsc{TruncatedPivotWithPred}(G^+, H,\pi,\Pi)$}
\label{alg:CMSY-pivot}
\end{algorithm}

\begin{algorithm}[t]
\begin{algorithmic}[1]
\Require Graph $G^+=(V,E^+)$
\Ensure Clustering/Partition of $V$ into disjoint sets

\State Pick a random permutation of vertices $\pi: V \rightarrow \{1,\dots,n\}$.
\State Let $U^{(1)} \leftarrow V$ be the set of unclustered vertices.  
\For{$t=1,\dots,n$}
\State Let $\ell_t \leftarrow \frac{c}{\varepsilon}\cdot \frac{n\log n}{t}$. 
\State Let $u\in V$ be the $t$-th vertex in $\pi$ (i.e., $t=\pi_u$).
\State Each unclustered vertex $v$ with $\deg^+(v) \geq \ell_t$ creates a \emph{singleton cluster}.
\If{$u \in U^{(t)}$}
\State Mark $u$ as a pivot. Initialize a new \emph{pivot cluster} $S^{(t)} \leftarrow \{u\}$.

\State For each vertex $v\in N^+(u)\cap U^{(t)}$, add $v$ to $S^{(t)}$.
\EndIf
\State Remove all vertices clustered at this iteration from $U^{(t)}$. 
\EndFor
\State \Return the final clustering $\mathcal{C}$, which contains all pivot clusters and singleton clusters
\end{algorithmic}
\caption{\textsc{CKLPU-Pivot}($G^+$)
}
\label{alg:3-approx-dynamic}
\end{algorithm}

\begin{algorithm}[t]
\begin{algorithmic}[1]
\Require Graph $G^+=(V,E^+)$, oracle access to a $\beta$-level predictor $\Pi$
\Ensure Clustering/Partition of $V$ into disjoint sets

\State Pick a random permutation of vertices $\pi: V \rightarrow \{1,\dots,n\}$.
\State For any $u,v\in V$, $d_{uv} = \Pi(u,v)$.
\State For any $u,v\in V$, define $p_{uv}:=f(d_{uv})$.
\State Let $U^{(1)} \leftarrow V$ be the set of unclustered vertices.  
\For{$t=1,\dots,n$}
\State Let $\ell_t \leftarrow \frac{c}{\varepsilon}\cdot \frac{n\log n}{t}$. 
\State Let $u\in V$ be the $t$-th vertex in $\pi$ (i.e., $t=\pi_u$).
\State \label{line:exclude} Each unclustered vertex $v$ with $\deg^+(v) \geq \ell_t$ creates a \emph{singleton cluster}.
\If{$u \in U^{(t)}$}
\State Mark $u$ as a pivot. Initialize a new \emph{pivot cluster} $S^{(t)} \leftarrow \{u\}$.

\State For each vertex $v\in U^{(t)}$, add $v$ to $S^{(t)}$ with probability $(1-p_{uv})$ independently.
\EndIf
\State Remove all vertices clustered at this iteration from $U^{(t)}$. 
\EndFor
\State \Return the final clustering $\mathcal{C}$, which contains all pivot clusters and singleton clusters
\end{algorithmic}
\caption{\textsc{PairwiseDiss}($G^+, \Pi$)}
\label{alg:2.06-approx-dynamic}
\end{algorithm}

\section{Omitted proofs of \cref{sec:alg-dynamic}}
\label{sec:omitted-details-dynamic}
\subsection{Proof of \cref{thm:main-result-dynamic}}
\label{sec:proof-main-dynamic}
\textbf{Space complexity.} We first analyze the space complexity of \cref{alg:dynamic-stream}. For each vertex $u \in V$, we mainly store its rank $\pi_u$, positive degree $\deg^+(u)$, and $10c\log n\cdot \sigma_u$ independent $\ell_0$-samplers. We have the following lemma which states the space requirement of $\ell_0$-samplers.
\begin{lemma}[\cite{CKLPU24}]
\label{lem:l0-sampler-space}
The $\ell_0$-samplers used in \cref{alg:dynamic-stream} require $O(\varepsilon^{-1}n\log^4 n)$ words of space.
\end{lemma}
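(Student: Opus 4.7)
The plan is a direct bookkeeping argument combining the per-sampler cost from \cref{thm:l0-sampler} with a harmonic-sum trick over the random permutation $\pi$. First, I would invoke \cref{thm:l0-sampler} to bound the space used by a single $\ell_0$-sampler on an $n$-dimensional adjacency vector: with failure probability $\delta = 1/10$, this costs $O(\log^2 n \cdot \log \delta^{-1}) = O(\log^2 n)$ bits per sampler.

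Next, I would sum over all samplers. For each vertex $u \in V$, the pre-processing phase initializes $10c\log n \cdot \sigma_u$ independent samplers, where $\sigma_u = \frac{c}{\varepsilon} \cdot \frac{n \log n}{\pi_u}$. Multiplying by the per-sampler cost, the total bit-complexity contributed by vertex $u$ is
\[
10c\log n \cdot \sigma_u \cdot O(\log^2 n) = O\!\left(\frac{n \log^4 n}{\varepsilon \, \pi_u}\right).
\]
Summing over all vertices gives a total bit-complexity of
\[
\sum_{u \in V} O\!\left(\frac{n \log^4 n}{\varepsilon \, \pi_u}\right) = O(\varepsilon^{-1} n \log^4 n) \cdot \sum_{u \in V} \frac{1}{\pi_u}.
\]

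The key observation is then that $\pi$ is a permutation of $\{1, 2, \dots, n\}$, so the inner sum is exactly the harmonic number:
\[
\sum_{u \in V} \frac{1}{\pi_u} = \sum_{i=1}^{n} \frac{1}{i} = H_n = \Theta(\log n).
\]
This yields a total of $O(\varepsilon^{-1} n \log^5 n)$ bits. Converting to words (each of $\Theta(\log n)$ bits) gives $O(\varepsilon^{-1} n \log^4 n)$ words, as claimed.

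There is no real technical obstacle here; the only step worth flagging is the harmonic-sum trick, which is what saves a factor of $n$ over the naive bound and is precisely the reason the ``interesting vertex'' threshold was defined with $\sigma_u \propto 1/\pi_u$ rather than a uniform cap. Once that observation is made, the remaining arithmetic is routine.
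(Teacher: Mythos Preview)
Your argument is correct and is exactly the standard bookkeeping the cited reference carries out; the paper itself does not reprove this lemma but simply imports it from \cite{CKLPU24}, so there is nothing further to compare.
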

Furthermore, by \cref{thm:dynamic-spectral}, the maintenance of an $\varepsilon$-spectral sparsifier in dynamic streams requires $\tilde{O}(\varepsilon^{-2}n)$ words of space. 
Therefore, the space complexity of \cref{alg:dynamic-stream} is $\tilde{O}(\varepsilon^{-2}n)$ words. 

\textbf{Approximation guarantee.} Next, we analyze the approximation ratio of \cref{alg:dynamic-stream}. We rely on the following lemma.
\begin{lemma}[Lemma~2 in \cite{CKLPU24}]
\label{lem:recover}
The $\ell_0$-samplers allow us to recover the positive edges incident to all interesting vertices with high probability.
\end{lemma}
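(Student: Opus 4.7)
The plan is a standard coupon-collector argument applied to each interesting vertex separately, followed by a union bound. Fix an interesting vertex $u$. By definition, $\deg^+(u) < \sigma_u$, and by construction we maintain $N := 10c \log n \cdot \sigma_u$ independent $\ell_0$-samplers for the adjacency vector of $u$, each with failure probability at most $1/10$. After the stream ends, the support of the adjacency vector has size exactly $s := \deg^+(u) \le \sigma_u$, and each non-failing $\ell_0$-sampler returns a uniformly random element of that support (\cref{def:l0-sampler}).

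First I would use a Chernoff bound to argue that at least $M := N/2 = 5c \log n \cdot \sigma_u$ of the $N$ samplers succeed, except with probability at most $n^{-\Omega(c)}$. Conditioned on this event, we have $M$ independent uniform samples from the $s$-element support of $u$'s adjacency vector. For any fixed positive edge $(u,v)$, the probability it is missed by all $M$ successful samplers is
\[
(1 - 1/s)^M \le e^{-M/s} \le e^{-M/\sigma_u} = e^{-5c \log n} = n^{-5c}.
\]
A union bound over the at most $n-1$ positive neighbors of $u$ shows that all positive edges incident to $u$ are recovered, except with probability at most $n^{-5c+1}$.

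Finally I would take a union bound over all interesting vertices (of which there are at most $n$), bounding the overall failure probability by $n \cdot (n^{-5c+1} + n^{-\Omega(c)}) = n^{-\Omega(c)}$, which is high probability for the universal large constant $c$. No single step is truly the main obstacle; the only thing to be careful about is that the threshold $\sigma_u$ for being interesting is precisely chosen so that $\sigma_u$ uniform samples are enough to cover a support of size at most $\sigma_u$ via coupon collector, and the extra $10c \log n$ factor in the number of samplers provides the high-probability slack needed for the union bound.
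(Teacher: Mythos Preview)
Your proof is correct and is exactly the standard argument one expects here; the paper does not give its own proof but simply cites this as Lemma~2 of \cite{CKLPU24}, and your coupon-collector-plus-union-bound approach is the natural way to establish it. The only minor point worth noting is that conditioning on the set of successful samplers does not bias their outputs (since, by \cref{def:l0-sampler}, each sampler's output given success is uniform and the samplers are independent), which you implicitly use and which indeed holds.
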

Therefore, \cref{alg:dynamic-stream} works with the same set of edges as \cref{alg:parallel-pivot} in the clustering phase with high probability.
This implies that both algorithms return the same clustering with the same probability.
On the other hand,
if the high probability event of \cref{lem:recover} does not happen, then \cref{alg:dynamic-stream} produces a clustering of cost at most $O(n^2)$, which leads to an additive $1/\operatorname{poly}(n)$ term to the expected cost of \cref{alg:dynamic-stream} compared to that of \cref{alg:parallel-pivot}. This preserves the approximation ratio if $\mathrm{OPT} \neq 0$. 

We also need the following lemma which shows that the estimate
$\widetilde{\operatorname{cost}}_G(\mathcal{C})$ well approximates the cost of any clustering $\mathcal{C}$ of $G$. 
\begin{lemma}[\cite{BCMT23}]
\label{lem:estimated-cost}
Let $\varepsilon\in (0,1)$. For any clustering $\mathcal{C}$ of $V$, the cost $\operatorname{cost}_G(\mathcal{C})$ is approximated by the estimate
$\widetilde{\operatorname{cost}}_G(\mathcal{C}) := \sum_{C \in \mathcal{C}} \left(\frac{1}{2}\partial_{H^+}(C)+\tbinom{|C|}{2} -\frac{1}{2} \sum_{u\in C}\deg^+(u)\right)$ up to a multiplicative factor of $(1\pm \varepsilon)$ with high probability.
\end{lemma}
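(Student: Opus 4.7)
The plan is to prove the lemma by combining the cut-sparsifier property of $H^+$ with an explicit decomposition of $\operatorname{cost}_G(\mathcal{C})$ that matches the form of the estimate. First, I would rewrite the true cost in terms of the quantities appearing in $\widetilde{\operatorname{cost}}_G(\mathcal{C})$. Since the graph is complete with $\pm$ labels, the number of negative edges inside a cluster $C$ equals $\binom{|C|}{2} - m^+(C)$, where $m^+(C)$ is the number of positive edges within $C$, and the identity $m^+(C) = \tfrac{1}{2}\bigl(\sum_{u\in C}\deg^+(u) - \partial_{G^+}(C)\bigr)$ expresses $m^+(C)$ in terms of positive degrees and the positive cut out of $C$. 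Summing over all clusters and adding the positive cut contribution $\tfrac{1}{2}\sum_{C}\partial_{G^+}(C)$ yields a closed-form expression for $\operatorname{cost}_G(\mathcal{C})$ whose only randomness-sensitive ingredients are the cut values $\{\partial_{G^+}(C)\}_{C\in\mathcal{C}}$; all other quantities ($|C|$, $\deg^+(u)$) are computed exactly in the streaming phase.

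Second, I would invoke \cref{thm:dynamic-spectral}, which guarantees that $H^+$ is an $\varepsilon$-spectral sparsifier of $G^+$ with high probability. Since a spectral sparsifier is in particular a cut sparsifier (cf.\ \cref{def:spectral-sparsifier} and \cref{def:cut-sparsifier}), the bound $\partial_{H^+}(C) \in [(1-\varepsilon)\partial_{G^+}(C),\,(1+\varepsilon)\partial_{G^+}(C)]$ holds simultaneously for every $C \subseteq V$, so in particular for every cluster of any $\mathcal{C}$. No union bound over clusterings is required because the sparsifier guarantee is uniform over subsets.

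Third, I would combine the two. Every term of $\widetilde{\operatorname{cost}}_G(\mathcal{C})$ other than the $\partial_{H^+}$-part is computed exactly, and the identity above shows that the analogue of $\widetilde{\operatorname{cost}}_G(\mathcal{C})$ with $\partial_{H^+}$ replaced by $\partial_{G^+}$ coincides with (the appropriate reformulation of) $\operatorname{cost}_G(\mathcal{C})$. Thus the additive error $|\widetilde{\operatorname{cost}}_G(\mathcal{C}) - \operatorname{cost}_G(\mathcal{C})|$ is bounded by an $O(\varepsilon)$ multiple of $\sum_{C}\partial_{G^+}(C)$, i.e.\ an $O(\varepsilon)$ multiple of the positive-cut contribution to the cost. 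Since both the positive-cut and negative-inside contributions are nonnegative and sum to $\operatorname{cost}_G(\mathcal{C})$, the additive error is at most $O(\varepsilon)\cdot \operatorname{cost}_G(\mathcal{C})$, giving the desired $(1\pm\varepsilon)$ multiplicative approximation after rescaling $\varepsilon$ by an absolute constant (and treating the degenerate case $\operatorname{cost}_G(\mathcal{C})=0$ separately, where both quantities vanish).

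The main obstacle is ensuring the approximation is genuinely multiplicative for the full cost, not merely for the cut-contribution term in isolation. This is precisely what justifies invoking the spectral/cut sparsifier rather than a weaker sketch: the two-sided multiplicative bound on every cut simultaneously translates uniform sparsifier accuracy into uniform estimator accuracy, whereas any $\ell_1$-style sparsification would contribute an additive error proportional to $|E^+|$ that could dwarf $\operatorname{cost}_G(\mathcal{C})$ when the optimal cost is small. The remainder of the argument is routine bookkeeping around the identity $m^+(C)=\tfrac{1}{2}\bigl(\sum_{u\in C}\deg^+(u)-\partial_{G^+}(C)\bigr)$.
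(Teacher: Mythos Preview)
Your approach is correct and is precisely the standard argument: express $\operatorname{cost}_G(\mathcal{C})$ via the identity $m^+(C)=\tfrac{1}{2}\bigl(\sum_{u\in C}\deg^+(u)-\partial_{G^+}(C)\bigr)$, observe that the only inexact ingredient in the estimator is the cut term, and use the uniform cut guarantee of the spectral sparsifier to bound the additive error by $O(\varepsilon)$ times the positive-cut contribution, hence by $O(\varepsilon)\cdot\operatorname{cost}_G(\mathcal{C})$. The paper does not give its own proof of this lemma---it is quoted from \cite{BCMT23}---so there is nothing further to compare against; your write-up supplies exactly the argument that citation stands in for.
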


Therefore, \cref{thm:main-result-dynamic} follows from \cref{lem:parallel}, \cref{lem:recover} and \cref{lem:estimated-cost} by applying the union bound.

\subsection{Proof of \cref{lem:eq-2.06-dynamic}}
The proof is similar to that of \cref{lem:eq-3-dynamic}. The proof idea is as follows: we first show that in both cases, the singleton clusters $V_\textnormal{sin}$ are the same (with the same probability). Then we show that the randomized pivot-based algorithm runs on the same subgraph $G^+[V\setminus V_\textnormal{sin}]$ (with the same probability) in both cases, therefore outputting the same pivot clusters (with the same probability).

    Consider a vertex $u$ that is unclustered at the beginning of iteration $t$ $(\leq \pi_u)$, and becomes a singleton cluster due to \cref{line:exclude} of Algorithm~\textsc{PairwiseDiss}.
    By definition, $t$ is the smallest integer such that $\deg^+(u) \geq \frac{c}{\varepsilon}\cdot \frac{n\log n}{t}$ and hence $t = \lceil \tau_u \rceil$. Since $t \le \pi_u$, we have $\deg^+(u) \geq \frac{c}{\varepsilon}\cdot \frac{n\log n}{\pi_u}$, which corresponds to $u$ becoming uninteresting in \cref{alg:parallel-pivot}. 
    Since $u$ is in a singleton cluster, it did not join any pivot cluster, implying that for any vertex $v \neq u$, either (1) $\pi_v \geq t$, or (2) the event that $v$ is a pivot and $u$ joins the cluster of $v$ satisfying $\pi_v < t$ does not happen. 
    This is equivalent to saying that the event that $u$ joins the cluster of pivot $v$ satisfying 
    $\pi_v < \tau_u$ does not happen, since $\pi_v$ is an integer. By \cref{line:assign} of Algorithm~\textsc{TruncatedPivotWithPred}, $u$ creates a singleton cluster in \cref{line:call-CMSY-pivot} of \cref{alg:parallel-pivot} (with the same probability) as well. 
    
    Now consider a vertex $u$ that creates a singleton cluster in \cref{line:call-CMSY-pivot} of \cref{alg:parallel-pivot}. 
    Then $u$ must be marked uninteresting (implying $\pi_u \geq \tau_u$), and $u$ can neither be a pivot nor join the cluster of pivot $v$ satisfying $\pi_v < \tau_u$. By definition of $\tau_u$, iteration $\lceil\tau_u\rceil$ is the smallest iteration such that $\deg^+(u) \geq \frac{c}{\varepsilon}\cdot \frac{n\log n}{\lceil\tau_u\rceil}$. 
    This implies that $u$ is unclustered at the beginning of iteration $\lceil\tau_u\rceil$ in Algorithm~\textsc{PairwiseDiss}, and forms a singleton cluster in that iteration (with the same probability).
    
    Since the vertices forming singleton clusters are the same in both cases (with the same probability), the subgraph induced by the remaining vertices $G^+[V\setminus V_\textnormal{sin}]$ is the same (with the same probability). 
    The same randomized pivot-based algorithm runs on $G^+[V\setminus V_\textnormal{sin}]$ in both cases, which implies that the pivots will be the same (with the same probability).
    Finally, we observe that in both cases, a non-pivot vertex $u$ joins the cluster of pivot $v$ such that $\pi_v < \tau_u$ in the order of $\pi$ with probability $(1-p_{uv})$ independently.
    Hence, the pivot clusters are the same (with the same probability).

\subsection{Proof of \cref{cor:dynamic-cost-2.06}}
The analysis in \citep{CKLPU24} shows that the cost of good edges can be charged to the pivot clusters. The following lemma shows that the cost of bad edges can also be related to the pivot clusters, allowing us to bound the overall clustering cost.
\begin{lemma}[\cite{CKLPU24}]
\label{lem:seq-cost}
    Let $\varepsilon\in (0,1/4)$.
    Let $P$ denote the cost of pivot clusters, and let $W$ denote the cost of the final clustering, then $\E[W]=\E[P+|E_\textnormal{bad}|]\leq (1+4\varepsilon)\E[P] + \frac{1+4\varepsilon}{n^{\alpha-2}}$,
    where $\alpha := c/2-1 \gg 2$.
\end{lemma}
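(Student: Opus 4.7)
\textbf{Proof plan for \cref{lem:seq-cost}.} The plan has two parts: first, establish the pointwise identity $W = P + |E_\text{bad}|$, and second, show $\E[|E_\text{bad}|] \le 4\varepsilon\,\E[P] + \frac{1+4\varepsilon}{n^{\alpha-2}}$.

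For the identity, I would account for each edge's contribution to $W$. A negative edge contributes $1$ only when both endpoints share a pivot cluster, which is already counted in $P$. A positive edge contributes $1$ whenever its endpoints lie in distinct clusters; such an edge either (i) crosses two pivot clusters, (ii) is good, i.e.\ incident to a singleton whose partner had joined a pivot cluster at some earlier iteration, or (iii) is bad. Case (i) is contained in $P$ by definition. For case (ii), following the analysis in~\cite{CKLPU24}, the cost is absorbed into $P$: at the earlier iteration when the non-singleton endpoint was captured by a pivot, the Pivot-style charging already reserved the cost of the positive edge going to the (then unclustered, later singleton) partner. Case (iii) is precisely what $|E_\text{bad}|$ counts. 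Summing yields $W = P + |E_\text{bad}|$, and taking expectations gives the first equality.

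For the inequality, I split the analysis by a failure event concerning the random permutation $\pi$. Say a vertex $v$ \emph{fails} if $v$ becomes a singleton at iteration $t = t_v := \lceil c n\log n/(\varepsilon\,\deg^+(v))\rceil$ because none of its $\ge \ell_{t_v}$ positive neighbors is ranked below $t_v$. Over a uniform $\pi$, a direct calculation bounds the probability of failure for a particular $v$ by $(1 - t_v/n)^{\ell_{t_v}} \le \exp(-t_v \ell_{t_v}/n) = \exp(-c\log n/\varepsilon) \le n^{-c/\varepsilon}$, so by a union bound the overall failure event $\mathcal{F}$ satisfies $\Pr[\mathcal{F}] \le n^{-\alpha}$ for $\alpha = c/2-1$ (with an appropriate choice of $c$). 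On $\mathcal{F}$ I invoke the trivial bound $|E_\text{bad}| \le n^2$, contributing at most $n^2 \cdot n^{-\alpha} \le \frac{1+4\varepsilon}{n^{\alpha-2}}$ to $\E[|E_\text{bad}|]$. On the complement $\bar{\mathcal{F}}$, for every bad edge $(u,v)$ with $u$ a singleton created at iteration $t$, there exists a positive neighbor $w$ of $u$ with $\pi_w < t$ that was already absorbed into some pivot cluster; I would use such a $w$ as a \emph{witness triangle} $(u,v,w)$ that already contributes to $P$ (via the triangle accounting of the Pivot-style analysis) and charge $(u,v)$ to that triangle.

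The main obstacle is the multiplicity analysis of the charging on $\bar{\mathcal{F}}$: showing that no triangle paying cost inside $P$ is charged by more than $\Theta(1/\varepsilon)$ bad edges. The delicate step is that the singleton threshold $\ell_t = cn\log n/(\varepsilon t)$ both (a) forces the singleton endpoint to have large positive degree, producing many candidate witnesses and guaranteeing the complement of $\mathcal F$, and (b) dictates the bookkeeping constant, because a single witness triangle could in principle be re-used by up to $\ell_t$ bad edges incident to the same singleton. Carefully exploiting the randomness of $\pi$ restricted to the neighborhood of the singleton endpoint—together with the fact that only vertices with $\deg^+ \ge \ell_t$ can be singletons at iteration $t$—is what converts a loose $O(\varepsilon)$ bound into the clean factor $4\varepsilon$ in front of $\E[P]$. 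Combining the two regimes yields $\E[|E_\text{bad}|] \le 4\varepsilon\,\E[P] + \frac{1+4\varepsilon}{n^{\alpha-2}}$, and adding $\E[P]$ to both sides completes the proof.
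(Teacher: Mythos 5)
First, note that the paper does not actually prove \cref{lem:seq-cost}: it is imported verbatim from \cite{CKLPU24} and used as a black box in the derivation of \cref{cor:dynamic-cost-2.06}, so there is no in-paper argument to compare against. Your accounting for the pointwise identity $W = P + |E_\textnormal{bad}|$ is essentially correct, and in fact simpler than you make it: a good edge $(u,v)$ with $u$ the singleton literally lies in $E^{(t')}$ for the pivot iteration $t'$ at which $v$ joined $S^{(t')}$ (since $u$ was still unclustered then), so it is counted in $P^{(t')}$ by definition — no extra ``reservation'' step is needed.

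The core inequality $\E[|E_\textnormal{bad}|]\le 4\varepsilon\,\E[P]+\frac{1+4\varepsilon}{n^{\alpha-2}}$ is, however, not established by your plan, and the gaps are structural. (i) Your failure event is the wrong one: a vertex $u$ can become a singleton at iteration $t_u$ even when many of its positive neighbors are ranked below $t_u$, because those neighbors may have been absorbed as \emph{non-pivots} into other pivot clusters (and, for \textsc{PairwiseDiss}, even a neighbor that does become a pivot captures $u$ only with probability $1-p_{uw}$). On your $\bar{\mathcal F}$ such a $u$ can still have up to $\deg^+(u)\ge \ell_{t_u}$ unclustered positive neighbors, i.e.\ up to $\deg^+(u)$ bad edges, while your event guarantees only a single clustered low-ranked witness — which may itself be a singleton, yielding no contribution to $P$ at all. (ii) Your multiplicity target points in the wrong direction: if each witness triangle contributing $\Omega(1)$ to $P$ may absorb up to $\Theta(1/\varepsilon)$ bad edges, you obtain $|E_\textnormal{bad}|=O(P/\varepsilon)$, which is off by a factor of order $1/\varepsilon^2$ from the claimed $4\varepsilon\, P$. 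To get the stated bound, each bad edge must be matched to $\Omega(1/\varepsilon)$ distinct units of cost inside $P$ (equivalently, each unit of $P$ may receive only $O(\varepsilon)$ total charge); a single witness per bad edge, reusable by up to $\ell_t$ bad edges incident to the same singleton, cannot achieve this. The sentence about ``carefully exploiting the randomness of $\pi$'' to convert the bookkeeping into the factor $4\varepsilon$ is precisely where the entire content of the lemma lives, and it is left unproved. As it stands the proposal is not a proof; one must either cite \cite{CKLPU24}, as the paper does, or reproduce their argument in full.
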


Then \cref{cor:dynamic-cost-2.06} follows from \cref{lem:eq-2.06-dynamic}, \cref{lem:seq-cost} and \cref{lem:seq-pairwisediss-pivot}.

\subsection{Proof of \cref{lem:parallel}}
    \cref{lem:parallel} follows from \cref{lem:dynamic-cost-3} and \cref{cor:dynamic-cost-2.06}.
    Note that in \cref{lem:dynamic-cost-3}, we can substitute $\varepsilon' := 12\varepsilon$, where $\varepsilon$ can be arbitrarily small.
    If $\mathrm{OPT}\geq 1$, then $\E[\mathrm{cost}_G(\mathcal{C}_1)]\leq (3+12\varepsilon)\cdot \mathrm{OPT}$, which gives a $(3+\varepsilon')$-approximation in expectation. 
    If $\mathrm{OPT} = 0$, then $\E[\mathrm{cost}_G(\mathcal{C}_1)]=1/\operatorname{poly}(n)$.
    Similarly, in \cref{cor:dynamic-cost-2.06}, we can substitute $\varepsilon' := 8.24\beta\varepsilon$.

\section{Our algorithm for complete graphs in insertion-only streams}
\label{sec:omitted-details-insertion}
In this section, we propose an algorithm for complete graphs in insertion-only streams. This algorithm is different from our algorithm for dynamic streams (\cref{alg:dynamic-stream}), while achieving the same approximation guarantee with improved space complexity.
It is also simpler and more practical than the existing $1.847$-approximation algorithm~\cite{CLPTYZ24}, which is based on local search and requires enumerating a large number of subsets of a constant-size set, making it quite impractical. 

\begin{theorem}
\label{thm:main-result-insertion} 
Let $\varepsilon\in (0,1)$ and $\beta \geq 1$. Given oracle access to a $\beta$-level predictor, there exists a single-pass streaming algorithm that, with high probability, achieves an expected $(\min\{2.06\beta, 3\}+\varepsilon)$-approximation for Correlation Clustering on complete graphs in insertion-only streams. The algorithm uses ${O}(\varepsilon^{-2}n\log n)$ words of space. 
\end{theorem}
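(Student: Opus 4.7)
I would design an insertion-only variant of \cref{alg:dynamic-stream} that replaces its two space-heavy primitives by lighter insertion-only analogues: the pool of $\ell_0$-samplers is swapped for explicit bounded-size heaps, and the dynamic spectral sparsifier is replaced by the insertion-only construction of \cref{thm:insertion-sparsification}. Concretely, after drawing a uniform random permutation $\pi:V\to\{1,\dots,n\}$ and setting $\sigma_u := \frac{cn\log n}{\varepsilon\,\pi_u}$ and $\tau_u := \frac{cn\log n}{\varepsilon\,\deg^+(u)}$ (with $c$ a sufficiently large universal constant), the algorithm maintains during the stream: (i) the positive degree $\deg^+(u)$ for every $u$; (ii) a max-heap $Q_u$ of capacity $\sigma_u$ holding the positive neighbors of $u$ with the $\sigma_u$ smallest ranks observed so far; and (iii) an $\varepsilon$-spectral sparsifier $H^+$ of $G^+$ via \cref{thm:insertion-sparsification}. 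In post-processing, the algorithm marks each vertex as interesting or uninteresting exactly as in \cref{alg:parallel-pivot}, reconstructs $G_\textnormal{store}$ from the heaps, runs \textsc{TruncatedPivot} and \textsc{TruncatedPivotWithPred} on $G_\textnormal{store}$, estimates both clustering costs via the $H^+$-based formula $\widetilde{\cost}_G(\cdot)$ from \cref{alg:dynamic-stream}, and returns the cheaper one.

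\textbf{Space analysis.} The heaps take $\sum_u\sigma_u = O(\varepsilon^{-1} n \log^2 n)$ words, the sparsifier uses $O(\varepsilon^{-2} n \log n)$ words by \cref{thm:insertion-sparsification}, and the auxiliary data $(\pi, \deg^+)$ takes $O(n\log n)$ words. In the standard regime $\log n = O(\varepsilon^{-1})$ the sparsifier term dominates, giving the claimed total of $O(\varepsilon^{-2} n \log n)$ words.

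\textbf{Correctness.} The approximation guarantee reduces to \cref{lem:parallel} provided the following two events hold with high probability: (E1) the heaps $\{Q_u\}$ collectively contain every positive edge required to simulate \cref{alg:parallel-pivot}, namely every positive edge between two interesting vertices, and for every uninteresting $u$ every positive neighbor $v$ with $\pi_v<\tau_u$; and (E2) the $H^+$-based estimate $\widetilde{\cost}_G$ approximates the true clustering cost within a $(1\pm\varepsilon)$ factor, which is exactly \cref{lem:estimated-cost}. For (E1), any interesting $u$ has $\deg^+(u)<\sigma_u$, so $Q_u$ already stores all of $u$'s positive neighbors and all the required edges between interesting vertices are recovered. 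For the uninteresting case, fix uninteresting $u$ with $\deg^+(u)=d\ge\sigma_u$ and let $X$ count the positive neighbors of $u$ whose rank is strictly less than $\tau_u$. Conditioned on $d$ and $\pi_u$, $X$ is hypergeometric with mean $\mu \le (\tau_u-1)d/(n-1) \le c\log n/\varepsilon$; by a standard Chernoff bound for the hypergeometric distribution, $\Pr[X>\sigma_u]\le \exp(-\Omega(\sigma_u))\le n^{-\Omega(c)}$ once the constant $c$ is chosen large enough to provide a constant factor of slack $\sigma_u\ge\Omega(\mu)$ above $\mu$. A union bound over all uninteresting $u$ establishes (E1), after which substituting (E1) and (E2) into the proof template of \cref{thm:main-result-dynamic} together with \cref{lem:parallel} yields the stated $(\min\{2.06\beta,3\}+\varepsilon)$-approximation.

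\textbf{Main obstacle.} The principal technical subtlety lies in event (E1). Because $\sigma_u$ is itself a random quantity depending on $\pi_u$, one must verify uniformly over the permutation—especially in the borderline regime $\pi_u\approx n$ where $\sigma_u$ is only $\Theta(\log n/\varepsilon)$—that the slack $\sigma_u-\mu$ is large enough for the Chernoff tail to be inverse-polynomial and survive the union bound. This hinges on choosing the constant in the definition of $\sigma_u$ slightly larger than that in $\tau_u$ (or equivalently expanding the heap size by a constant factor), which is the only nontrivial departure from the dynamic-stream argument; the remainder is a direct transcription of the proof of \cref{thm:main-result-dynamic}, with \cref{thm:insertion-sparsification} in place of \cref{thm:dynamic-spectral} and the deterministic heap recovery in place of \cref{lem:recover}.
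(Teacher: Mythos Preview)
Your approach is correct in spirit but differs substantially from the paper's proof, and it carries a minor space discrepancy that prevents it from matching the stated bound exactly.

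\textbf{Comparison of approaches.} The paper does \emph{not} port the CKLPU24 machinery to insertion-only streams. Instead it builds a separate algorithm in the Chakrabarty--Makarychev framework: each vertex keeps two constant-size priority queues $A(u),B(u)$ of capacity $k=O(1/\varepsilon)$, where $A(u)$ stores the $k$ lowest-rank positive neighbors and $B(u)$ stores $k$ lowest-rank neighbors admitted probabilistically with probability $1-p_{uv}=1-f(d_{uv})$. The analysis couples this to two sequential processes \textsc{CM-Pivot} and \textsc{PairwiseDiss2}, and the main new ingredient is the observation that \textsc{PairwiseDiss2} is equivalent to first \emph{prerounding} $G$ to a random instance $G'$ (each edge becomes positive with probability $1-p_{uv}$) and then running \textsc{CM-Pivot} on $G'$; together with the bound $\E[\mathrm{OPT}']\le (2\beta+1)\,\mathrm{OPT}$, this controls the singleton cost. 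Your route instead reuses \cref{lem:parallel} verbatim and replaces $\ell_0$-samplers by heaps of capacity $\sigma_u$, which is conceptually simpler and avoids the prerounding lemma entirely. The paper's approach buys the exact $O(\varepsilon^{-2}n\log n)$ space (the queues cost only $O(n/\varepsilon)$), whereas yours buys a shorter proof by direct reduction.

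\textbf{Two minor gaps.} First, your heap space is $\sum_u\sigma_u=\Theta(\varepsilon^{-1}n\log^2 n)$, which dominates the sparsifier whenever $\varepsilon=\omega(1/\log n)$; so for constant $\varepsilon$ you get $O(n\log^2 n)$ rather than the claimed $O(n\log n)$. The ``standard regime'' caveat does not rescue the theorem as stated for all $\varepsilon\in(0,1)$. Second, your probabilistic argument for (E1) is unnecessary: every edge the offline algorithm needs has at least one \emph{interesting} endpoint (pivots are interesting, and $G_{\text{store}}$ is induced on interesting vertices), and for interesting $v$ the heap $Q_v$ already stores \emph{all} of $v$'s positive neighbors since $\deg^+(v)<\sigma_v$. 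Hence (E1) holds deterministically, and the hypergeometric tail bound---along with the delicate constant-balancing you flag as the main obstacle---can be dropped.
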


\subsection{Overview}
We first briefly describe a single-pass $(3+\varepsilon)$-approximation streaming algorithm by Chakrabarty and Makarychev~\cite{CM23}. Initially, the algorithm adds a positive self-loop for each vertex and picks a random ordering $\pi:V \rightarrow \{1,\dots,n\}$. The rank of $u$ is denoted as $\pi_u$. Then it scans the input stream. For each vertex, the algorithm stores its at most $k$ positive neighbors with lowest ranks, where $k$ is a constant. Subsequently, it runs the \textsc{Pivot} algorithm \citep{ACN08} on the stored graph, where it picks pivots in the order of $\pi$. Finally, it puts unclustered vertices in singleton clusters.

Our main idea is to incorporate the above algorithm with the algorithm proposed by Chawla et al.~\cite{CMSY15}.  
Specifically, our algorithm employs two different methods to store at most $k$ neighbors of each vertex. The first method is the same as the algorithm proposed by Chakrabarty and Makarychev~\cite{CM23} and the second method is adapted from the work of Chawla et al.~\cite{CMSY15}, which adds neighbors with probabilities determined by predictions of pairwise distances. Finally, we obtain two clusterings (denoted as $\mathcal{C}_1$ and $\mathcal{C}_2$) and output the one with the lower cost. Similar to \cref{alg:dynamic-stream}, here we also need to use the graph sparsification technique \citep{KL11} to approximate the cost of a clustering.

\subsection{The algorithm}
Recall that we have oracle access to a  $\beta$-level predictor $\Pi$, which can predict the pairwise distance $d_{uv} \in [0,1]$ between any two vertices $u$ and $v$ in $G$.

Based on the predictions, we propose a single-pass semi-streaming algorithm which works in insertion-only streams. The pseudocode is given in \cref{alg:insertion-only}. 
We first pick a random permutation of vertices $\pi: V \rightarrow \{1,\dots,n\}$. For each vertex $u \in V$, we initialize two priority queues $A(u)$ and $B(u)$, each with a maximum size capped at $k$, where $k$ is a constant. Initially, we add $u$ to both queues.
During the streaming phase, we employ two distinct methods to retain at most $k$ neighbors of each vertex. 
Specifically, for each edge $(u,v)\in E$ in the stream, if $(u,v)$ is a positive edge, we add $u$ to $A(v)$ and add $v$ to $A(u)$. Additionally, regardless of whether $(u,v)$ is positive or negative, we add $u$ to $B(v)$ with probability $(1-p_{uv})$ and add $v$ to $B(u)$ with probability $(1-p_{uv})$, where $p_{uv}=f(d_{uv})$ and $d_{uv}= \Pi(u,v)$. 
Note that if the size of any queue exceeds $k$, then we remove the vertex with the highest rank from the queue. That is, $A(u)$ maintains at most $k$ positive neighbors of $u$ with lowest ranks, while $B(u)$ contains at most $k$ neighbors (not necessarily positive) of $u$ with lowest ranks, the inclusion of which is probabilistic. Note that we define the rank of a vertex as its order in the permutation $\pi$, e.g., $\pi_u$ is the rank of $u$.

After the streaming phase, we run \cref{alg:clustering} on the truncated graphs induced by both sets of priority queues, i.e., $\{A(u)\}_{u\in V}$ and $\{B(u)\}_{u\in V}$. 
Specifically, for each vertex $u$ picked in the order of $\pi$, we determine the cluster to which $u$ belongs. We try to find the vertex $v$ with the lowest rank in the queue of $u$, such that $v$ is a pivot or $v=u$. If such a vertex $v$ does not exist, then we mark $u$ as a singleton and place it in a singleton cluster. Otherwise,
we assign $u$ to the cluster of $v$. In particular, if $v=u$, then we mark $u$ as a pivot.
Finally, we obtain two clusterings, each corresponding to a set of priority queues.
We output the clustering with the lower cost. 

\begin{algorithm}[t]
\begin{algorithmic}[1]
\Require Complete graph $G=(V,E=E^+ \cup E^-)$ as an arbitrary-order stream of edges, oracle access to a $\beta$-level predictor $\Pi$, integer $k$
\Ensure Clustering/Partition of $V$ into disjoint sets
\LeftComment{\textbf{Pre-processing phase}}
\State Pick a random permutation of vertices $\pi: V \rightarrow \{1,\dots,n\}$.
\State For any $u,v\in V$, $d_{uv} = \Pi(u,v)$.
\State For any $u,v\in V$, define $p_{uv}:=f(d_{uv})$.
\For{each vertex $u\in V$}
\State Create a priority queue $A(u)$ with a maximum size of $k$ and initialize $A(u) \leftarrow \{u\}$.
\State Create a priority queue $B(u)$ with a maximum size of $k$ and initialize $B(u) \leftarrow \{u\}$.
\State $\deg^+(u) \leftarrow 0$
\EndFor
\LeftComment{\textbf{Streaming phase}}
\For{each edge $e=(u,v)\in E$} 
\If{$e=(u,v)\in E^+$}
\State Add $u$ to $A(v)$. Add $v$ to $A(u)$.
\If{$|A(u)|> k$ (resp. $|A(v)|> k$)}
\State Remove the vertex with the highest rank from $A(u)$ (resp. $A(v)$).
\EndIf
\State $\deg^+(u) \leftarrow \deg^+(u) + 1, \deg^+(v) \leftarrow \deg^+(v) + 1$
\EndIf
\State With probability $(1-p_{uv})$, add $u$ to $B(v)$ and add $v$ to $B(u)$.

\If{$|B(u)|> k$ (resp. $|B(v)|> k$)}
\State Remove the vertex with the highest rank from $B(u)$ (resp. $B(v)$).
\EndIf
\EndFor
\State Maintain an $\varepsilon$-spectral sparsifier $H^+$ for $G^+$ using the algorithm of \cref{thm:insertion-sparsification}.
\LeftComment{\textbf{Post-processing phase}}
\State \label{line:cluster-A} $\mathcal{C}_1 \leftarrow \textsc{Cluster}(V,\pi,\{A(u)\}_{u\in V})$ 
\State \label{line:cluster-B} $\mathcal{C}_2 \leftarrow \textsc{Cluster}(V,\pi,\{B(u)\}_{u\in V})$ 
\State $\widetilde{\operatorname{cost}}_G(\mathcal{C}_1) \leftarrow \sum_{C \in \mathcal{C}_1} (\frac{1}{2}\partial_{H^+}(C)+\tbinom{|C|}{2} -\frac{1}{2} \sum_{u\in C}\deg^+(u) )$
\State $\widetilde{\operatorname{cost}}_G(\mathcal{C}_2) \leftarrow \sum_{C \in \mathcal{C}_2} (\frac{1}{2}\partial_{H^+}(C)+\tbinom{|C|}{2} -\frac{1}{2} \sum_{u\in C}\deg^+(u) )$
\State $i \leftarrow \arg\min_{i=1,2} \{\widetilde{\operatorname{cost}}_G(\mathcal{C}_i)\}$.
\State \Return $\mathcal{C}_i$

\end{algorithmic}
\caption{An algorithm for complete graphs in insertion-only streams}
\label{alg:insertion-only}
\end{algorithm}

\begin{algorithm}[t]
\begin{algorithmic}[1]
\Require Vertex set $V$, permutation of vertices $\pi: V \rightarrow \{1,\dots,n\}$, truncated neighbors of each vertex $\{T(u)\}_{u\in V}$
\Ensure Clustering/Partition of $V$ into disjoint sets
\For{each unclustered vertex $u\in V$ chosen in the order of $\pi$}
\State Find the vertex $v\in T(u)$ with the lowest rank such that $v$ is a pivot or $v=u$, i.e., 
$v \leftarrow \arg\min_{v\in T(u)}\{ \pi_v: v\text{ is a pivot} \text{ or } v=u \}$.
\If{such a vertex $v$ exists}
\State Put $u$ in the cluster of $v$. 
\If{$v=u$}
\State Mark $u$ as a \emph{pivot}.
\EndIf

\Else
\State Put $u$ in a singleton cluster. Mark $u$ as a \emph{singleton}.
\EndIf
\EndFor
\State \textbf{return} the final clustering $\mathcal{C}$
\end{algorithmic}
\caption{$\textsc{Cluster}(V,\pi,\{T(u)\}_{u\in V})$}
\label{alg:clustering}
\end{algorithm}

It is worth noting that in the final step, the cost of a clustering cannot be exactly calculated, as our streaming algorithm cannot store the entire graph. To overcome this challenge, we borrow the idea from the work of Behnezhad et al.~\cite{BCMT23} and utilize the graph sparsification technique \citep{KL11} to estimate the the cost of a clustering. 
Specifically, during the streaming phase, we maintain an $\varepsilon$-spectral sparsifier $H^+$ for $G^+$ using the algorithm of \cref{thm:insertion-sparsification}. 
We also maintain the positive degree $\deg^+(u)$ for each vertex $u$. Then we can approximate the cost of a clustering up to a $(1\pm \varepsilon)$-multiplicative error with high probability, by the guarantee of \cref{lem:estimated-cost}.

\subsection{Analysis of \cref{alg:insertion-only}}

\textbf{Space complexity.} 
For each vertex $u \in V$, we mainly store its rank $\pi_u$, positive degree $\deg^+(u)$, and at most $2k$ vertices. As we will see, we set $k=O(1/\varepsilon)$.
Furthermore, by \cref{thm:insertion-sparsification}, the maintenance of an $\varepsilon$-spectral sparsifier in insertion-only streams requires ${O}(\varepsilon^{-2}n\log n)$ words of space.
Therefore, the total space complexity of \cref{alg:insertion-only} is $O(\varepsilon^{-2}n\log n)$ words.

\textbf{Correctness.} 
Since the final clustering returned by our algorithm is the one with the lower cost between the two on the truncated graphs, we begin by analyzing their costs.
Similar to the analysis of \cref{alg:dynamic-stream}, for ease of analysis, we separately examine the approximation ratios of the corresponding offline versions (Algorithms \textsc{CM-Pivot} and \textsc{PairwiseDiss2}) that equivalently output these two clusterings.

\textbf{Algorithm \textsc{CM-Pivot} \textnormal{\citep{CM23}}.} This algorithm proceeds in iterations. Let $F^{(t)}$ denote the set of fresh vertices and $U^{(t)}$ denote the set of unclustered vertices at the beginning of iteration $t$. Additionally, we maintain a counter $K^{(t)}(u)$ for each vertex $u \in V$.
Initially, all the vertices are fresh and unclustered, with the counters set to $0$. At iteration $t$, we pick a vertex $w^{(t)}$ from the set of fresh vertices $F^{(t)}$ uniformly at random. If $w^{(t)}$ is unclustered, then we mark it as a pivot and create a cluster $S^{(t)}$ containing $w^{(t)}$ and all of its unclustered positive neighbors. Otherwise, we increment the counters for all unclustered positive neighbors of $w^{(t)}$. Subsequently, vertices whose counters reach the value of $k$ are assigned to singleton clusters. At the end of iteration $t$, we remove $w^{(t)}$ from $F^{(t)}$ and remove all vertices clustered in this iteration from $U^{(t)}$. Then the algorithm proceeds to the next iteration. Finally, we output all pivot clusters and singleton clusters.
The pseudocode is given in \cref{alg:3-approx}.

\begin{algorithm}[t]
\begin{algorithmic}[1]
\Require Complete graph $G=(V,E=E^+ \cup E^-)$, integer $k$
\Ensure Partition of vertices into disjoint sets

\State Let $F^{(1)} \leftarrow V$ be the set of fresh vertices. 
\State Let $U^{(1)} \leftarrow V$ be the set of unclustered vertices.  
\State For each vertex $u\in V$, initialize a counter $K^{(1)}(u) \leftarrow 0$.
\State Let $t\leftarrow 1$.
\While{$F^{(t)} \neq \emptyset$}

\State Choose a vertex $w^{(t)} \in F^{(t)}$ uniformly at random. 
\If{$w^{(t)} \in U^{(t)}$}
\State Mark $w^{(t)}$ as a pivot. Initialize a new \emph{pivot cluster} $S^{(t)} \leftarrow \{w^{(t)}\}$.

\State For each vertex $v\in N^+(w^{(t)})\cap U^{(t)}$, add $v$ to $S^{(t)}$.

\Else
\State For each vertex $v\in N^+(w^{(t)})\cap U^{(t)}$, let $K^{(t+1)}(v) \leftarrow K^{(t)}(v)+1$. Subsequently, all vertices $v$ with $K^{(t+1)}(v) = k$ are put into \emph{singleton clusters}.
\EndIf
\State Let $F^{(t+1)} \leftarrow F^{(t)} \setminus \{w^{(t)}\}$ and remove all vertices clustered at this iteration from $U^{(t)}$. 
\State Let $t \leftarrow t+1$.
\EndWhile
\State \Return the final clustering $\mathcal{C}$, which contains all pivot clusters and singleton clusters
\end{algorithmic}
\caption{\textsc{CM-Pivot}($G,k$)
}
\label{alg:3-approx}
\end{algorithm}

\textbf{Algorithm \textsc{PairwiseDiss2}.} 
This algorithm has oracle access to a $\beta$-level predictor $\Pi: \tbinom{V}{2} \rightarrow [0,1]$. This algorithm closely resembles Algorithm \textsc{CM-Pivot}, differing in the following two aspects: (1) If $w^{(t)}\in U^{(t)}$, then we create a cluster $S^{(t)}$ containing $w^{(t)}$ and add all unclustered vertices $v$ to $S^{(t)}$ with probability $(1-p_{vw^{(t)}})$ independently, where $p_{vw^{(t)}}=f(d_{vw^{(t)}})$ and $d_{vw^{(t)}} = \Pi(v,w^{(t)})$. (2) If $w^{(t)}\notin U^{(t)}$, we increment the counters for all unclustered vertices $v$ with probability $(1-p_{vw^{(t)}})$.
The pseudocode is given in \cref{alg:2.06-approx}.

\begin{algorithm}[t]
\begin{algorithmic}[1]
\Require Complete graph $G=(V,E=E^+ \cup E^-)$, oracle access to a $\beta$-level predictor $\Pi$, integer $k$
\Ensure Partition of vertices into disjoint sets

\State Let $F^{(1)} \leftarrow V$ be the set of fresh vertices. 
\State Let $U^{(1)} \leftarrow V$ be the set of unclustered vertices.  
\State For each vertex $u\in V$, initialize a counter $K^{(1)}(u) \leftarrow 0$.
\State For any $u,v\in V$, $d_{uv} = \Pi(u,v)$.
\State For any $u,v\in V$, define $p_{uv}:=f(d_{uv})$.
\label{line:function}
\State Let $t\leftarrow 1$.
\While{$F^{(t)} \neq \emptyset$}

\State Choose a vertex $w^{(t)} \in F^{(t)}$ uniformly at random. 
\If{$w^{(t)} \in U^{(t)}$}
\State Mark $w^{(t)}$ as a pivot. Initialize a new \emph{pivot cluster} $S^{(t)} \leftarrow \{w^{(t)}\}$.

\State For each vertex $v\in U^{(t)}$, add $v$ to $S^{(t)}$ with probability $(1-p_{vw^{(t)}})$ independently.

\Else
\State For each vertex $v\in U^{(t)}$, let $K^{(t+1)}(v) \leftarrow K^{(t)}(v)+1$ with probability $(1-p_{vw^{(t)}})$ independently. Subsequently, all vertices $v$ with $K^{(t+1)}(v) = k$ are put into \emph{singleton clusters}.
\EndIf
\State Let $F^{(t+1)} \leftarrow F^{(t)} \setminus \{w^{(t)}\}$ and remove all vertices clustered at this iteration from $U^{(t)}$. 
\State Let $t \leftarrow t+1$.
\EndWhile
 \State \Return the final clustering $\mathcal{C}$, which contains all pivot clusters and singleton clusters
\end{algorithmic}
\caption{\textsc{PairwiseDiss2}($G,\Pi,k$)}
\label{alg:2.06-approx}
\end{algorithm}
\subsubsection{\cref{alg:insertion-only} as a combination of Algorithms \textsc{CM-Pivot} and \textsc{PairwiseDiss2}}
We define a permutation $\pi$ for Algorithms \textsc{CM-Pivot} and \textsc{PairwiseDiss2} as $\pi: w^{(t)} \mapsto t$, where $w^{(t)}$ is the vertex picked at iteration $t$ of Algorithms \textsc{CM-Pivot} and \textsc{PairwiseDiss2}. Obviously, $\pi$ is a uniformly random permutation over $V$. 
Therefore, we can also view Algorithms \textsc{CM-Pivot} and \textsc{PairwiseDiss2} from an equivalent perspective: at the beginning of each iteration $t$, choose a vertex $w^{(t)}$ in the order of $\pi$.
We have the following lemmas.

\begin{lemma}[Lemma~2.1 in \cite{CM23}]
\label{lem:eq-3}
    If \cref{alg:insertion-only} and Algorithm~\textsc{CM-Pivot} use the same permutation $\pi$, then Algorithm~\textsc{CM-Pivot} and Line~\ref{line:cluster-A} of \cref{alg:insertion-only} output the same clustering of $V$. 
\end{lemma}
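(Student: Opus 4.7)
The plan is to prove the equivalence by strong induction on the rank $\pi_u$ under the shared permutation $\pi$. The induction hypothesis asserts that for every vertex $w$ with $\pi_w<\pi_u$, both algorithms have already assigned $w$ the identical role: pivot, member of a specific pivot's cluster, or singleton. Because the fate of $u$ in each algorithm depends only on the already-decided status of its positive neighbors of rank strictly smaller than $\pi_u$, the inductive step reduces to a deterministic case analysis. The base case $\pi_u=1$ is trivial: both algorithms make the smallest-rank vertex a pivot.

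Before the induction I would record a short deterministic characterization of the priority queue: after the stream ends, $A(u)$ equals the set of up to $k$ elements of $\{u\}\cup N^+(u)$ with smallest rank, which is immediate from the insertion-only dynamics and the eviction rule that discards the highest-ranked vertex once the size exceeds $k$. Writing $P_u(r):=\{w\in N^+(u):\pi_w<r\}$, this gives in particular that $u\in A(u)$ iff $|P_u(\pi_u)|<k$, and when $u\in A(u)$ the queue contains all of $P_u(\pi_u)$. With this in hand, the inductive step splits into three cases.
Case (a): some vertex in $P_u(\pi_u)$ is a pivot, and $v$ is its lowest-rank pivot. In \textsc{CM-Pivot}, every positive neighbor of $u$ of rank smaller than $\pi_v$ is necessarily a non-pivot (any such pivot would have claimed $u$ earlier than $v$), and each such neighbor contributes $+1$ to the counter $K^{(\cdot)}(u)$; for $v$ to actually claim $u$ at iteration $\pi_v$, we need $u\in U^{(\pi_v)}$, so $K^{(\pi_v)}(u)<k$, which forces $|P_u(\pi_v)|<k$, placing $v$ among the $k$ smallest-rank positive neighbors of $u$ and thus inside $A(u)$. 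Hence \cref{alg:clustering} selects the same $v$ as the lowest-rank pivot-or-$u$ candidate in $A(u)$.
Case (b): $P_u(\pi_u)$ contains no pivot and $|P_u(\pi_u)|<k$. Then $u$'s counter never reaches $k$, so \textsc{CM-Pivot} makes $u$ a pivot; correspondingly $u\in A(u)$ with no lower-rank pivot candidate present, so \cref{alg:clustering} also marks $u$ a pivot.
Case (c): $P_u(\pi_u)$ contains no pivot and $|P_u(\pi_u)|\ge k$. Then every processed positive neighbor of $u$ of rank smaller than $\pi_u$ is a non-pivot and increments $u$'s counter, which reaches $k$ strictly before iteration $\pi_u$, making $u$ a singleton. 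Correspondingly $u\notin A(u)$, every element of $A(u)$ is a non-pivot by the induction hypothesis, and \cref{alg:clustering} places $u$ in a singleton cluster.

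The main obstacle is case (a): bridging the time-sensitive counter dynamics of \textsc{CM-Pivot} (``is $u$ still in $U^{(\pi_v)}$ when $v$ is processed?'') with the static top-$k$ property captured by $A(u)$ (``is $v$ among the $k$ smallest-rank positive neighbors of $u$?''). The key identity powering this bridge is $K^{(\pi_v)}(u)=|P_u(\pi_v)|$, which holds because in case (a) no positive neighbor of $u$ of rank $<\pi_v$ is a pivot, so every processed lower-rank positive neighbor contributes exactly once to $u$'s counter. Consequently, $u$ being unclustered at iteration $\pi_v$ is exactly the statement $|P_u(\pi_v)|<k$, which is exactly what forces $v$ into $A(u)$. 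Once this equivalence is pinned down, the verifications in all three cases become mechanical and the induction closes.
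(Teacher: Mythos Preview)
Your approach is correct and mirrors the paper's: the paper cites this lemma from \cite{CM23} without proof but establishes the analogous \cref{lem:eq-2.06} by the same strong induction on $\pi_u$ with essentially the same case split. One small gap to patch in case~(a): as written you only treat the sub-case $|P_u(\pi_v)|<k$, where both algorithms assign $u$ to $v$; when $|P_u(\pi_v)|\ge k$, \textsc{CM-Pivot} singletonizes $u$ before iteration $\pi_v$, and correspondingly $A(u)\subseteq P_u(\pi_v)$ consists entirely of non-pivots with $u\notin A(u)$, so \textsc{Cluster} singletonizes $u$ as well. Your closing biconditional ``$u\in U^{(\pi_v)}\Leftrightarrow |P_u(\pi_v)|<k\Leftrightarrow v\in A(u)$'' is exactly the ingredient that dispatches both sub-cases, so simply state the second one explicitly.
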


\begin{lemma}
\label{lem:eq-2.06}
    If \cref{alg:insertion-only} and Algorithm~\textsc{PairwiseDiss2} use the same permutation $\pi$ and predictions $\{d_{uv}\}_{u,v\in V}$, then Algorithm~\textsc{PairwiseDiss2} and Line~\ref{line:cluster-B} of \cref{alg:insertion-only} output the same clustering of $V$ with the same probability.
\end{lemma}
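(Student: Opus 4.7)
My plan is to follow the coupling approach used in the analogous \cref{lem:eq-3}. For every unordered pair $\{u,v\}$, I introduce a single Bernoulli random variable $X_{uv}\in\{0,1\}$ with $\Pr[X_{uv}=1]=1-p_{uv}$ and drive both algorithms with the same permutation $\pi$ together with the same collection $\{X_{uv}\}$. In \cref{alg:insertion-only}, $X_{uv}=1$ triggers the joint insertion of $u$ into $B(v)$ and of $v$ into $B(u)$ at the moment the edge $(u,v)$ arrives; in \textsc{PairwiseDiss2}, the same bit $X_{uv}$ determines, at the iteration when the lower-ranked endpoint is processed as $w^{(t)}$, whether the higher-ranked endpoint either joins the pivot cluster of $w^{(t)}$ (if $w^{(t)}$ is a pivot) or has its counter incremented (otherwise). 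Because both algorithms use mutually independent Bernoulli trials with the prescribed marginals, this coupling reproduces the joint distribution of the outputs, so it suffices to establish a \emph{deterministic} equivalence under it.

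Under the coupling, $B(u)$ is simply the set of $k$ lowest-ranked vertices in $\{u\}\cup\{v:X_{uv}=1\}$, independent of the stream order. Let $v_1,\ldots,v_m$ list, in increasing rank order, the vertices $v$ with $\pi_v<\pi_u$ and $X_{uv}=1$. I plan to induct on $\pi_u$: assuming the pivot/cluster status of every smaller-rank vertex already agrees across the two algorithms, I argue that $u$'s status also agrees. The case analysis splits on whether $m\ge k$ or $m<k$. When $m\ge k$, we have $B(u)=\{v_1,\ldots,v_k\}$ and $u\notin B(u)$, so \textsc{Cluster} assigns $u$ to the first pivot among $v_1,\ldots,v_k$ or makes $u$ a singleton if none exists; this matches \textsc{PairwiseDiss2}, where at iterations $\pi_{v_1}<\cdots<\pi_{v_k}$ the vertex $u$ is either absorbed by the first pivot it encounters or accumulates $k$ counter increments and becomes a singleton. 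When $m<k$, $B(u)$ contains $u$ together with $v_1,\ldots,v_m$ (plus possibly some vertices of rank $>\pi_u$, which are dominated by $u$); since $\pi_{v_1}<\cdots<\pi_{v_m}<\pi_u$, \textsc{Cluster} picks the first pivot among $v_1,\ldots,v_m$ if one exists and otherwise falls back to $u$ itself, exactly as \textsc{PairwiseDiss2} does (where $u$ survives unclustered until iteration $\pi_u$ and is then made a pivot).

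The main subtlety to guard against is the asymmetry that in \textsc{PairwiseDiss2} the trial $X_{uv}$ is formally consumed only while the higher-ranked endpoint is still unclustered at the processing of the lower-ranked endpoint, whereas in \cref{alg:insertion-only} every trial is drawn unconditionally upon edge arrival. I resolve this by pre-drawing all of the $X_{uv}$'s: the trials with $\pi_v>\pi_u$ do not affect $u$'s own fate in either algorithm (they only enter the $B(v)$ of the higher-ranked $v$, which is precisely how \textsc{PairwiseDiss2} uses them at iteration $\pi_v$), while the trials with $\pi_v<\pi_u$ are exactly the ones that decide $u$'s fate in both algorithms. Verifying this bookkeeping is the most delicate step of the argument, but it is orthogonal to the inductive case analysis above.
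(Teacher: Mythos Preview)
Your proposal is correct and follows essentially the same approach as the paper: both arguments proceed by induction on $\pi_u$, couple the random Bernoulli trials across the two algorithms, and verify that $u$'s cluster assignment agrees. Your treatment is in fact more explicit than the paper's—you spell out the coupling via the variables $X_{uv}$ and organize the case analysis by whether $m\ge k$ or $m<k$, whereas the paper splits on whether $u$ ends up a singleton or in a pivot cluster and leaves the phrase ``with the same probability'' informal—but the underlying logic is identical.
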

\begin{proof}
    The proof is similar to that of \cref{lem:eq-3}. 
    Suppose that \cref{alg:insertion-only} and Algorithm~\textsc{PairwiseDiss2} use the same permutation $\pi$ and predictions $\{d_{uv}\}_{u,v\in V}$, we want to prove that for each vertex $u\in V$, with the same probability, in both clusterings returned by Algorithm~\textsc{PairwiseDiss2} and Line~\ref{line:cluster-B} of \cref{alg:insertion-only}, $u$ is either assigned to the same pivot, or $u$ is placed into a singleton cluster.
    
    We prove by induction on the rank $\pi_u$. Suppose that all vertices $v$ with $\pi_v < \pi_u$ are clustered in the same way with the same probability. If $u$ is put into a singleton cluster in the clustering returned by Line~\ref{line:cluster-B} of \cref{alg:insertion-only}, then there must exist $k$ vertices added to $B(u)$ probabilistically, and their ranks are lower than $\pi_u$. None of the vertices in $B(u)$ are pivots. Since both algorithms use the same $\pi$ and $\{d_{uv}\}_{u,v\in V}$, in Algorithm~\textsc{PairwiseDiss2}, these $k$ vertices will cause the counter of $u$ to increment $k$ times probabilistically. Therefore, $u$ is also placed in a singleton cluster in the clustering returned by Algorithm~\textsc{PairwiseDiss2}. And vice versa.

    In \cref{alg:insertion-only}, if there are any pivots in $B(u)$ (or $u$ itself), then $u$ will be assigned to the pivot with the lowest rank (denoted as $v$). We have $\pi_v \leq \pi_u$ and $v$ has been added to $B(u)$ probabilistically. In Algorithm~\textsc{PairwiseDiss2}, with the same probability, $v$ is marked as a pivot and $u$ is added to the cluster of $v$. And vice versa.

    Therefore, Algorithm~\textsc{PairwiseDiss2} and Line~\ref{line:cluster-B} of \cref{alg:insertion-only} cluster $u$ in the same way with the same probability.
\end{proof}

\subsubsection{The approximation ratios of \textsc{CM-Pivot} and \textsc{PairwiseDiss2}}
In order to analyze the approximation ratio of \cref{alg:insertion-only}, it suffices to analyze Algorithms~\textsc{CM-Pivot} and~\textsc{PairwiseDiss2} respectively. 
We follow the analysis framework in the work of Chakrabarty and Makarychev~\cite{CM23}. We categorize all iterations into \emph{pivot iterations} and \emph{singleton iterations}.
Both iterations create some clusters.
Consider iteration $t$ of both algorithms.
If $w^{(t)}\in U^{(t)}$, then iteration $t$ is a pivot iteration; otherwise, it is a singleton iteration. 
We say that an edge $(u,v)$ is \emph{decided} at iteration $t$ if both $u$ and $v$ were not clustered at the beginning of iteration $t$ (i.e., $u,v\in U^{(t)}$) but at least one of them was clustered at iteration $t$. 
Once an edge $(u,v)$ is decided, we can determine whether it contributes to the cost of the algorithm (i.e., the number of disagreements). 
Specifically, if $(u,v)\in E^+$, then it contributes to the cost of the algorithm if exactly one of $u$ and $v$ is assigned to the newly created cluster $S^{(t)}$;
if $(u,v)\in E^-$, then it contributes to the cost of the algorithm if both $u$ and $v$ are assigned to the newly created cluster $S^{(t)}$. 

Let $E^{(t)}$ denote the set of decided edges at pivot iteration $t$. Specifically, $E^{(t)}=\{(u,v)\in E: u,v\in U^{(t)}; u\in S^{(t)} \text{ or } v\in S^{(t)}\}$.
Let $P^{(t)}$ denote the cost of decided edges at pivot iteration $t$. 
We call the clusters created in pivot iterations \emph{pivot clusters}.
Let $P$ denote  
the cost of all pivot clusters. 
Therefore, $P=\sum_{\text{$t$ is a pivot iteration}} P^{(t)}$.
Let $S$ denote the cost of all singleton clusters. 
Therefore, the cost of the algorithm is equal to $P+S$.

\textbf{Analysis of Algorithm~\textsc{CM-Pivot}.}
We have the following guarantees of Algorithm~\textsc{CM-Pivot}.
\begin{lemma}[\cite{CM23}]
\label{lem:offline-cost-pivot-3}
Let $P_1$ denote the cost of pivot clusters returned by Algorithm~\textsc{CM-Pivot}, then $\E[P_1]\leq 3\cdot \mathrm{OPT}$, where $\mathrm{OPT}$ is the cost of the optimal solution on $G$.
\end{lemma}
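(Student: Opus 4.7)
The plan is to adapt the classical triangle-charging analysis of Ailon, Charikar, and Newman for the standard \textsc{Pivot} algorithm to the truncated variant \textsc{CM-Pivot}. The central observation is that, since $w^{(t)}$ is chosen uniformly at random from $F^{(t)} = V \setminus \{w^{(1)}, \ldots, w^{(t-1)}\}$, the sequence $(w^{(1)}, \ldots, w^{(n)})$ is a uniformly random permutation $\pi$ of $V$. Restricted to \emph{pivot iterations} (those with $w^{(t)} \in U^{(t)}$), the algorithm matches classical \textsc{Pivot}: the $t$-th vertex in $\pi$ becomes a pivot iff it is still unclustered at step $t$, and in that case it gathers all its currently unclustered positive neighbors. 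The only difference is that some vertices may have been previously removed from $U^{(t)}$ by becoming singletons via the counter mechanism, which can only \emph{reduce} the set of edges whose cost is charged to $P_1$.

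The second step is to charge each disagreement counted in $P_1$ to a \emph{bad triangle}, i.e., a triple $\{u,v,w\}$ with exactly two positive edges and one negative edge. Concretely, whenever a positive edge $(u,v)$ is split across clusters by a pivot $w$ at a pivot iteration, we must have $(u,w) \in E^+$ and $(v,w) \in E^-$ (or vice versa), forming a bad triangle. Whenever a negative edge $(u,v)$ is placed inside a cluster created by pivot $w$, we must have $(u,w), (v,w) \in E^+$, again forming a bad triangle. This gives an injective charging of each unit of cost in $P_1$ to a bad triangle resolved by some pivot at an iteration at which all three vertices of the triangle are unclustered.

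The third step applies the standard ACN packing bound. For each bad triangle $T$, let $q_T$ be the probability that one of its three vertices is chosen as a pivot while $T$ is entirely unclustered; by the previous step,
\begin{equation*}
\E[P_1] \le \sum_{T \text{ bad}} q_T.
\end{equation*}
On the other hand, any feasible clustering (in particular an optimal one) must pay at least one unit of disagreement for every bad triangle it fully resolves, and each edge of $G$ lies in at most a bounded number of bad triangles relevant to this charging; the ACN argument distributes the LP/OPT value over triangles and shows $\sum_T q_T \le 3\cdot \mathrm{OPT}$, giving $\E[P_1] \le 3\cdot \mathrm{OPT}$.

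The main obstacle I anticipate is handling the discrepancy introduced by \textsc{CM-Pivot}'s counter mechanism, which prematurely removes vertices from $U^{(t)}$ as singletons. One must verify that these early removals do not \emph{increase} $P_1$ relative to the classical \textsc{Pivot} execution on the same permutation $\pi$; since removed vertices can no longer join or avoid a pivot cluster, every edge incident to them is excluded from the pivot-iteration bookkeeping, so any coupling should show that the expected cost of pivot clusters is dominated by that of the unmodified ACN process. Once this domination is established, the classical $3\cdot \mathrm{OPT}$ bound transfers without loss.
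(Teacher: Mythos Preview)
Your first three steps are exactly right and match both the approach in \cite{CM23} and the paper's own proofs of the parallel statements (\cref{lem:seq-pairwisediss-pivot} and \cref{lem:pivot-cost}): once you observe that, conditioned on $w^{(t)}\in U^{(t)}$, the pivot is uniform in $U^{(t)}$, the ACN triangle-charging analysis applies verbatim to the pivot iterations of \textsc{CM-Pivot}. In particular, for every bad triangle $T$ one still has $\Pr[A_{T,w}]=q_T/3$ for each $w\in T$ (since $U^{(t)}$ is determined by the history and $w^{(t)}$ is uniform in $F^{(t)}\supseteq U^{(t)}$), so $\{q_T/3\}$ remains a valid fractional packing and $\sum_T q_T\le 3\cdot\mathrm{OPT}$ follows exactly as in ACN.

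Your ``main obstacle'' paragraph, however, points in the wrong direction. You propose to couple \textsc{CM-Pivot} with classical \textsc{Pivot} on the same $\pi$ and argue that $P_1$ is pointwise dominated by the classical cost. This is both unnecessary and fragile: once a vertex is removed early as a singleton, the downstream pivot sets in the two processes can diverge (a vertex that would have joined that singleton's cluster in classical \textsc{Pivot} may itself become a pivot in \textsc{CM-Pivot}), so $U^{(t)}_{\textsc{CM}}\subseteq U^{(t)}_{\textsc{ACN}}$ need not hold and a clean pointwise comparison is not available. The actual resolution is that no comparison is needed: the ACN bound is proved \emph{per pivot iteration}, conditioning on $U^{(t)}$, and only uses the uniformity you already established in step~1. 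The singleton mechanism merely changes which iterations are pivot iterations and what $U^{(t)}$ is; it does not touch the inequality $\E[P^{(t)}\mid U^{(t)}]\le 3\,\E[L^{(t)}\mid U^{(t)}]$, and summing over pivot iterations with $\sum_t L^{(t)}\le \mathrm{OPT}$ finishes the proof. Drop the coupling and your argument is complete.
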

\begin{lemma}[\cite{CM23}]
\label{lem:offline-cost-singleton-3}
Let $S_1$ denote the cost of singleton clusters returned by Algorithm~\textsc{CM-Pivot}, then $\E[S_1]\leq \frac{6}{k-1}\cdot \mathrm{OPT}$.
\end{lemma}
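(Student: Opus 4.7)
My plan is to bound $\E[S_1]$ by a charging argument against \emph{bad triangles}---triples $(a, b, c)$ with $(a, b), (b, c) \in E^+$ and $(a, c) \in E^-$. Each bad triangle contributes at least $1$ to the LP relaxation of correlation clustering, and hence to $\mathrm{OPT}$ (after appropriate fractional accounting), since the triangle-inequality constraint $x_{ab} + x_{bc} \geq x_{ac}$ combined with the LP objective implies $x_{ab} + x_{bc} + (1 - x_{ac}) \geq 1$. As a starting point, I would use the inequality $S_1 \leq \sum_{v \in V_{\mathrm{sin}}} \deg^+(v)$, since every positive edge contributing to $S_1$ has at least one singleton endpoint.

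The key structural observation is that when a vertex $v$ becomes a singleton its counter has reached $k$, so there are $k$ distinct iterations $t_1 < \cdots < t_k$ at each of which some positive neighbor $w_i \in N^+(v)$ is chosen while $v$ remains unclustered and $w_i$ is already clustered. In the primary case where $w_i$ lies in a pivot cluster headed by pivot $p_i$, the fact that $v$ was unclustered at $p_i$'s own iteration forces $(v, p_i) \in E^-$ (otherwise $p_i$ would have absorbed $v$ into its cluster), so $(v, w_i, p_i)$ forms a bad triangle. I would then set up a fractional charging, assigning weight $\Theta(1/(k-1))$ to each bad triangle of the form $(v, w_i, p_i)$ produced by singleton $v$, and argue by amortization over the $\approx k$ triangles per singleton that the total charge covers the positive-edge contribution of $v$ to $S_1$. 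The constant $6$ in the numerator would decompose as a factor of $3$ from the standard Pivot triangle-based charging times a factor of $2$ from the possible double accounting of positive edges whose both endpoints are singletons.

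The main obstacle I anticipate is the secondary case in which $w_i$ is itself a singleton at iteration $t_i$, so no pivot $p_i$ is directly available to form a bad triangle. I would handle this either by (a) arguing directly that at most one of the $k$ increments can be of this form while still leaving $k - 1$ usable bad triangles---this is exactly what produces the ``$-1$'' in the denominator of $\frac{6}{k-1}$---or (b) via a recursive unpacking where the counter history of $w_i$ is itself traced back to $k$ earlier iterations, ultimately bottoming out in bad-triangle witnesses. Finally, I would verify that no bad triangle is over-charged across distinct singletons: each bad triangle $(v, w, p)$ can only be ``claimed'' by $v$---the unique vertex incident to both legs of the negative-positive asymmetry---so cross-singleton collisions are avoided and the total charge stays within $\frac{O(1)}{k-1} \cdot \mathrm{OPT}$.
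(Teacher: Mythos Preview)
This lemma is cited from \cite{CM23} and not proved in the present paper; the paper does, however, describe the technique used there: the analysis ``designs a potential function and shows that it is a submartingale.'' That is a genuinely different mechanism from the deterministic bad-triangle charging you outline, and the difference is essential.

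Your plan has a real quantitative gap. You correctly observe that each singleton $v$ comes with $k$ witnesses $w_1,\ldots,w_k$ and, in the primary case, $k$ bad triangles $(v,w_i,p_i)$. You then want these $\approx k$ triangles, each charged $\Theta(1/(k-1))$, to ``cover the positive-edge contribution of $v$ to $S_1$.'' But that contribution is $d_v$, the number of positive neighbors of $v$ still unclustered when $v$ turns singleton, and $d_v$ can be arbitrarily large relative to $k$; your starting bound $S_1\le\sum_{v\in V_{\mathrm{sin}}}\deg^+(v)$ is even looser. A total charge of $\Theta(1)$ per singleton cannot pay for $d_v$. A second obstruction is that the step from ``each bad triangle satisfies $x_{ab}+x_{bc}+(1-x_{ac})\ge 1$'' to a bound by $\mathrm{OPT}$ requires the triangles to be (fractionally) edge-disjoint, and yours are not: distinct $w_i$ may share the same pivot $p_i$, reusing both $(w_i,p_i)$ and the negative edge $(v,p_i)$. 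Finally, your handling of the secondary case is also off: nothing prevents many of the $w_i$ from themselves being singletons, so the ``at most one'' claim in option~(a) is false, and the recursion in option~(b) has no termination argument.

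The missing ingredient is the randomness of $\pi$. The link between $d_v$ and the $k$ counter increments is probabilistic, not combinatorial: the $k$ increments are $k$ random draws (in the order of $\pi$) from a pool that mixes already-clustered neighbors with the $d_v$ still-unclustered ones, and it is the \emph{expectation} over $\pi$ that forces $d_v$ to be comparable, on average, to the number of disagreement edges already paid for in $P_1$. This is exactly what the potential/submartingale argument in \cite{CM23} captures, and it is what a purely deterministic triangle charge cannot supply.
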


Therefore, we can bound the cost of the clustering returned by Line~\ref{line:cluster-A} of \cref{alg:insertion-only}.
\begin{lemma}
\label{cor:streaming-cost-3}
Let $P_1$ and $S_1$ denote the costs of pivot clusters and singleton clusters, respectively, returned by Algorithm~\textsc{CM-Pivot}. Let $\mathcal{C}_1$ denote the clustering returned by Line~\ref{line:cluster-A} of \cref{alg:insertion-only}.
Then $\E[\mathrm{cost}_G(\mathcal{C}_1)]=\E[P_1+S_1]\leq ( 3+ \frac{6}{k-1}) \cdot \mathrm{OPT}$.
\end{lemma}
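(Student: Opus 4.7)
The plan is to observe that this lemma is essentially a combination-style corollary, obtained by coupling \cref{alg:insertion-only} with the offline Algorithm \textsc{CM-Pivot} and then invoking the two separate cost bounds from \cref{lem:offline-cost-pivot-3} and \cref{lem:offline-cost-singleton-3}. First, I would run both algorithms with the same random permutation $\pi$ (the one sampled by \cref{alg:insertion-only}, viewed as the order in which \textsc{CM-Pivot} picks $w^{(t)}$). By \cref{lem:eq-3}, under this coupling the clustering $\mathcal{C}_1$ returned by Line~\ref{line:cluster-A} of \cref{alg:insertion-only} is identical to the clustering produced by \textsc{CM-Pivot}.

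Next I would decompose the cost of the common clustering according to its cluster types. Since every edge contributing to $\mathrm{cost}_G(\mathcal{C}_1)$ is decided at the iteration when one of its endpoints is first assigned (either to a pivot cluster or to a singleton cluster created once its counter hits $k$), every disagreement is attributed uniquely to either a pivot iteration or a singleton iteration. Therefore $\mathrm{cost}_G(\mathcal{C}_1) = P_1 + S_1$ deterministically under the coupling.

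Finally, applying linearity of expectation and plugging in \cref{lem:offline-cost-pivot-3} and \cref{lem:offline-cost-singleton-3} gives
\begin{equation*}
\E[\mathrm{cost}_G(\mathcal{C}_1)] = \E[P_1] + \E[S_1] \le 3 \cdot \mathrm{OPT} + \frac{6}{k-1}\cdot \mathrm{OPT} = \left(3 + \frac{6}{k-1}\right) \cdot \mathrm{OPT},
\end{equation*}
which is the desired bound.

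There is no real obstacle here: the lemma is a straightforward corollary once the coupling with \textsc{CM-Pivot} (already provided by \cref{lem:eq-3}) is in hand. The only subtle point worth stating carefully is the clean charging of every disagreement to exactly one iteration of \textsc{CM-Pivot}, so that the cost splits additively as $P_1 + S_1$; this ensures the two independently-proven bounds can be summed without double counting.
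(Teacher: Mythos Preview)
Your proposal is correct and matches the paper's own proof essentially verbatim: the paper simply states that \cref{cor:streaming-cost-3} follows from \cref{lem:eq-3}, \cref{lem:offline-cost-pivot-3}, and \cref{lem:offline-cost-singleton-3}. Your additional remark about the clean additive split $\mathrm{cost}_G(\mathcal{C}_1)=P_1+S_1$ is a nice clarification but not strictly needed beyond what the cited lemmas already give.
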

\begin{proof}
    \cref{cor:streaming-cost-3} follows from \cref{lem:eq-3}, \cref{lem:offline-cost-pivot-3} and \cref{lem:offline-cost-singleton-3}.
\end{proof}

\textbf{Analysis of Algorithm~\textsc{PairwiseDiss2}.}
Next, we analyze the approximation ratio of Algorithm~\textsc{PairwiseDiss2}. We first bound the cost of pivot clusters.
\begin{lemma}
\label{lem:pivot-cost}
Let $P_2$ denote the cost of pivot clusters returned by Algorithm~\textsc{PairwiseDiss2},
then $\E[P_2]\leq 2.06\beta \cdot \mathrm{OPT}$.
\end{lemma}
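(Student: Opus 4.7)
The plan is to mirror the proof of \cref{lem:seq-pairwisediss-pivot} and reduce the analysis of $P_2$ to the triangle-based analysis of the $2.06$-approximation LP rounding algorithm of Chawla et al.~\cite{CMSY15}. The key observation is that although \textsc{PairwiseDiss2} picks $w^{(t)}$ uniformly from the fresh set $F^{(t)}$ (not from the unclustered set $U^{(t)}$), the relevant iterations for $P_2$ are only the pivot iterations, i.e., those with $w^{(t)} \in U^{(t)}$. Conditioned on the event $\{w^{(t)} \in U^{(t)}\}$, since $w^{(t)}$ is drawn uniformly from $F^{(t)} \supseteq U^{(t)}$, the conditional distribution of $w^{(t)}$ is uniform on $U^{(t)}$. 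Moreover, given that $w^{(t)}$ is the chosen pivot, the newly created cluster $S^{(t)}$ consists of $w^{(t)}$ together with each unclustered vertex $v$ added independently with probability $1 - p_{vw^{(t)}} = 1 - f(d_{vw^{(t)}})$. This is precisely the per-iteration distribution induced by the CMSY15 LP rounding algorithm on the subgraph $G[U^{(t)}]$ with LP values $\{d_{uv}\}$.

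Given this equivalence, I would invoke the triangle-based charging argument of~\cite{CMSY15} directly. Let $E^{(t)} = \{(u,v)\in E : u,v\in U^{(t)},\ u\in S^{(t)} \text{ or } v\in S^{(t)}\}$ be the set of edges decided at pivot iteration $t$, and set
\[
L^{(t)} := \sum_{(u,v)\in E^+ \cap E^{(t)}} d_{uv} + \sum_{(u,v)\in E^- \cap E^{(t)}} (1 - d_{uv}).
\]
The CMSY15 analysis, which requires only (i) the triangle inequality on $\{d_{uv}\}$ and (ii) the uniform-over-unclustered distribution of the pivot together with the independent inclusion via $f(\cdot)$, yields
\[
\mathbb{E}\bigl[P_2^{(t)} \,\big|\, w^{(t)} \in U^{(t)}\bigr] \le 2.06 \cdot \mathbb{E}\bigl[L^{(t)} \,\big|\, w^{(t)} \in U^{(t)}\bigr].
\]
Both conditions hold here: (i) follows from \cref{def:beta-level-predictor}\ref{item:beta-predictor} (triangle inequality is part of the $\beta$-level predictor definition), and (ii) is exactly the equivalence established above.

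I would then conclude by linearity of expectation, summing only over pivot iterations and noting that the decided edge sets across distinct pivot iterations are disjoint subsets of $E$:
\[
\mathbb{E}[P_2] = \sum_{t \text{ is a pivot iteration}} \mathbb{E}[P_2^{(t)}] \le 2.06 \cdot \sum_{t \text{ is a pivot iteration}} \mathbb{E}[L^{(t)}] \le 2.06 \cdot L,
\]
where $L := \sum_{(u,v)\in E^+} d_{uv} + \sum_{(u,v)\in E^-}(1-d_{uv})$. Finally, the $\beta$-level predictor guarantee gives $L \le \beta \cdot \mathrm{OPT}$, so $\mathbb{E}[P_2] \le 2.06\beta \cdot \mathrm{OPT}$.

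The main technical point to verify carefully is the equivalence claim: specifically, that the additional presence of the ``fresh'' set $F^{(t)}$ and the counter-based singleton mechanism do not disturb the per-iteration distribution of $S^{(t)}$ given that $t$ is a pivot iteration. This is a conditioning argument: the pivot iteration outcome $S^{(t)}$ depends only on the random coins used to include each $v \in U^{(t)}$ (Bernoulli$(1-p_{vw^{(t)}})$), which are independent of both the choice of $w^{(t)}$ within $F^{(t)}$ and the prior history of counter updates. Once this is cleanly argued, the rest is a direct citation of the CMSY15 rounding analysis, exactly as in \cref{lem:seq-pairwisediss-pivot}.
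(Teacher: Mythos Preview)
Your proposal is correct and follows essentially the same approach as the paper's proof: observe that, conditioned on $w^{(t)}\in U^{(t)}$, the pivot is uniform over $U^{(t)}$ and the cluster is formed exactly as in the CMSY15 rounding, then apply the triangle-based analysis to get $\E[P_2^{(t)}]\le 2.06\,\E[L^{(t)}]$ and sum over pivot iterations to obtain $\E[P_2]\le 2.06\,L\le 2.06\beta\cdot\mathrm{OPT}$. Your extra care in explaining why the fresh set $F^{(t)}$ and the counter mechanism do not affect the conditional distribution is a welcome elaboration but does not change the argument.
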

\begin{proof}
The key observation is that the pivot iterations in Algorithm~\textsc{PairwiseDiss2} are equivalent to the iterations of $2.06$-approximation LP rounding algorithm by Chawla et al.~\cite{CMSY15}: given that $w^{(t)}$ is unclustered (i.e., $w^{(t)}\in U^{(t)}$), the conditional distribution of $w^{(t)}$ is uniformly distributed in $U^{(t)}$, and the cluster created during this iteration contains $w^{(t)}$ and all unclustered vertices $v$ added with probability $(1-p_{vw^{(t)}})$.
Therefore, we can directly apply the triangle-based analysis in the work of Chawla et al.~\cite{CMSY15}. Define $L:=\sum_{(u, v) \in E^{+}} d_{u v}+\sum_{(u, v) \in E^{-}}(1-d_{u v})$. Since the predictor is $\beta$-level, by \cref{def:beta-level-predictor}, we have that the predictions $\{d_{uv}\}_{u,v\in V}$ satisfy triangle inequality and $L \leq \beta \cdot \mathrm{OPT}$.
It follows that for all pivot iterations $t$, $\E [P_2^{(t)}]\leq 2.06 \cdot\E[L^{(t)}]$, where $P_2^{(t)}$ is the cost induced by the pivot cluster created at iteration $t$, and
$L^{(t)} := \sum_{(u, v) \in E^{+}\cap E^{(t)}} d_{u v}+\sum_{(u, v) \in E^{-}\cap E^{(t)}}(1-d_{u v})$.
By linearity of expectation, we have
$\E[P_2]=\E[\sum_{\text{$t$ is a pivot iteration}} P_2^{(t)}]
    =\sum_{\text{$t$ is a pivot iteration}} \E[P_2^{(t)}]
    \leq 2.06\cdot L 
    \leq 2.06\beta \cdot \mathrm{OPT}.$ 
\end{proof}

Then we bound the cost of singleton clusters returned by Algorithm~\textsc{PairwiseDiss2}, denoted as $S_2$.
We highlight that this part is non-trivial. 
Different from the analysis in the work of Chakrabarty and Makarychev~\cite{CM23} which designs a potential function and shows that it is a submartingale, we consider an algorithm equivalent to Algorithm~\textsc{PairwiseDiss2}. 
In this algorithm, we construct a random subgraph $G':=(V,E'^+\cup E'^-)$ where each edge $(u,v)\in E$ is added to $E'^+$ with probability $(1-p_{uv})$ and added to $E'^-$ with the remaining probability. Then we perform Algorithm~\textsc{CM-Pivot} on $G'$. 
In other words, we first preround the $\beta$-level predictions $\{d_{uv}\}_{u,v\in V}$ to obtain a new instance $G'$ and then run Algorithm~\textsc{CM-Pivot} on $G'$ where the positive edges are induced by the predictions.
The pseudocode is given in \cref{alg:2.06-approx-prerounding}.

\begin{algorithm}[t]
\begin{algorithmic}[1]
\Require Complete graph $G=(V,E=E^+ \cup E^-)$, oracle access to a $\beta$-level predictor $\Pi$, integer $k$
\Ensure Partition of vertices into disjoint sets

\State For any $u,v\in V$, $d_{uv} = \Pi(u,v)$.
\State For any $u,v\in V$, define $p_{uv}:=f(d_{uv})$.
\State $E'^+ \leftarrow \emptyset$.
\For{each edge $(u,v)\in E$ such that $p_{uv}<1$}
\State add $(u,v)$ to $E'^+$ with probability $(1-p_{uv})$.
\EndFor
\State $E'^- \leftarrow E \setminus E'^+$
\State $\mathcal{C}\leftarrow \textsc{CM-Pivot}(G':=(V,E'^+\cup E'^-),k)$
 \State \Return $\mathcal{C}$
\end{algorithmic}
\caption{\textsc{PairwiseDiss2WithPrerounding}($G,\Pi,k$)
}
\label{alg:2.06-approx-prerounding}
\end{algorithm}

We first show the equivalence of Algorithm~\textsc{PairwiseDiss2} and Algorithm \textsc{PairwiseDiss2WithPrerounding}.
\begin{lemma}
\label{cla:eq-prerounding}
    If Algorithm~\textsc{PairwiseDiss2} and Algorithm~\textsc{PairwiseDiss2WithPrerounding} use the same permutation $\pi$ and predictions $\{d_{uv}\}_{u,v\in V}$, then they produce the same clustering with the same probability.
\end{lemma}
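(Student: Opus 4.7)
The plan is to establish the distributional equivalence via a coupling of the randomness: introduce a single independent Bernoulli random variable $X_{uv}\sim\mathrm{Bernoulli}(1-p_{uv})$ for each unordered pair $\{u,v\}\subseteq V$ and use the same bits in both algorithms. In Algorithm~\textsc{PairwiseDiss2WithPrerounding} we set $(u,v)\in E'^{+}$ iff $X_{uv}=1$. In Algorithm~\textsc{PairwiseDiss2}, whenever the algorithm is about to perform a coin flip with success probability $1-p_{v,w^{(t)}}$ (either to decide whether $v$ joins the pivot cluster $S^{(t)}$, or to decide whether the counter of $v$ is incremented in a singleton iteration), we read the bit $X_{v,w^{(t)}}$. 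Since both algorithms share the same permutation $\pi$ and predictions $\{d_{uv}\}$ (hence the same $p_{uv}$), this is a valid coupling that preserves the correct marginal distribution on each side.

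The first key step is a structural claim: in \textsc{PairwiseDiss2}, each unordered pair $\{u,v\}$ is consulted in at most one iteration, namely $t^{\star}=\min(\pi_u,\pi_v)$, and only when both endpoints lie in $U^{(t^{\star})}$. Assuming WLOG $\pi_u<\pi_v$, a coin for $\{u,v\}$ can only be flipped when $w^{(t)}\in\{u,v\}$. At iteration $\pi_u$ the algorithm flips a coin for every $v'\in U^{(\pi_u)}$ regardless of whether $u\in U^{(\pi_u)}$ (pivot iteration) or $u\notin U^{(\pi_u)}$ (singleton iteration), so the pair is touched iff $v\in U^{(\pi_u)}$. By iteration $\pi_v$, the endpoint $u$ has already been removed from $U$, so the pair is never revisited. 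This guarantees that each bit $X_{uv}$ is consumed at most once during \textsc{PairwiseDiss2}, which is what makes the coupling well-defined.

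Given this, I would argue by induction on the iteration index $t$ that, under the coupling, the two executions share the same state at the start of every iteration: the set $U^{(t)}$ of unclustered vertices, the counters $K^{(t)}$, the set of pivots, and the clusters formed so far. The inductive step splits into the four sub-cases (pivot vs.\ singleton iteration, combined with $X_{v,w^{(t)}}=1$ vs.\ $0$). In each case the action taken is dictated by the same bit: in \textsc{CM-Pivot} run on $G'$, the condition $v\in N^{+}_{G'}(w^{(t)})$ is literally $X_{v,w^{(t)}}=1$ by definition of $E'^{+}$, while in \textsc{PairwiseDiss2} the direct coin flip is $X_{v,w^{(t)}}$ by construction. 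Because both algorithms process the same $w^{(t)}$ (they share $\pi$), the transitions are identical, and hence the final clusterings coincide with probability one under the coupling; this yields equality in distribution as claimed.

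The main obstacle I expect is the singleton-iteration case: in \textsc{PairwiseDiss2} the counter $K(v)$ is incremented by an active coin flip performed at iteration $t$, whereas in \textsc{CM-Pivot} on $G'$ the increment is governed by the pre-committed membership $(v,w^{(t)})\in E'^{+}$ chosen before the algorithm started. The coupling identifies these events, but correctness relies on the pair-uniqueness fact above to guarantee that the bit $X_{v,w^{(t)}}$ has not already been spent in an earlier iteration and will not be consulted again later; without that, the "fresh coin" needed by \textsc{PairwiseDiss2} would conflict with the "committed bit" used by the prerounded algorithm. Once this bookkeeping is in place, the claim follows.
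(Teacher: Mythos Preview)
Your proposal is correct and takes essentially the same approach as the paper: both argue that the two algorithms differ only in the order in which the per-pair Bernoulli randomness is revealed, and that this order is immaterial. The paper's proof is a brief high-level statement of this principle, whereas you make the coupling explicit via the variables $X_{uv}$ and, crucially, justify it with the pair-uniqueness observation (each $\{u,v\}$ is consulted at most once, namely at iteration $\min(\pi_u,\pi_v)$); this is the detail that turns the paper's ``order of randomness does not affect the output'' into a rigorous argument.
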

\begin{proof}
    The randomness in both algorithms comes from two sources: (1) the uniformly random permutation $\pi$ on vertices and (2) the probability that each vertex $v$ adjacent to  $w^{(t)}$ will join the cluster of $w^{(t)}$ or increment its counter. The main difference between the two algorithms lies in the order in which the two sources of randomness are revealed:
    Algorithm~\textsc{PairwiseDiss2} can be viewed as choosing $\pi$ at the beginning and then performing iterations, where the randomness of all edges incident to $w^{(t)}$ is revealed after $w^{(t)}$ is chosen. In contrast, Algorithm~\textsc{PairwiseDiss2WithPrerounding} reveals the randomness of edges at the beginning, uses this information to construct a new instance, and then performs Algorithm~\textsc{CM-Pivot} on the new instance, where the randomness for $\pi$ is revealed. Note that the order of randomness does not affect the output. 
    Therefore, if both algorithms use the same $\pi$ and $\{d_{uv}\}_{u,v\in V}$, then they will output the same clustering with the same probability.
\end{proof}

Therefore, we can directly apply \cref{lem:offline-cost-singleton-3} to $G'$. To this end, we first show that
$G'$ still well preserves the Correlation Clustering structure of $G$, by showing that
the optimal solution on $G'$ does not differ from the optimal solution on $G$ by a lot.
\begin{lemma}
\label{lem:opt-opt'}
    $\E[\mathrm{OPT}'] \leq (2\beta+1)\cdot \mathrm{OPT}$, where $\mathrm{OPT}$ and $\mathrm{OPT}'$ are the costs of the optimal solutions on $G$ and $G'$, respectively.
\end{lemma}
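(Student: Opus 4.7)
The plan is to bound $\mathrm{OPT}'$ by evaluating the optimal clustering $\mathcal{C}^*$ of $G$ on the random graph $G'$ and then taking expectation over the pre-rounding randomness. Since $\mathcal{C}^*$ is always a feasible clustering for $G'$, we have $\mathrm{OPT}' \leq \cost_{G'}(\mathcal{C}^*)$, hence $\E[\mathrm{OPT}'] \leq \E[\cost_{G'}(\mathcal{C}^*)]$.

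To bound the right-hand side, I would partition the edges of $G$ into four classes based on (i) their label in $G$ and (ii) whether their endpoints share a cluster in $\mathcal{C}^*$: write $E^+_s, E^+_d, E^-_s, E^-_d$ for the four sets, where $s$ and $d$ stand for ``same'' and ``different'' cluster respectively. By definition, $\mathrm{OPT} = |E^+_d| + |E^-_s|$. For each edge $(u,v)$, the probability that it contributes to $\cost_{G'}(\mathcal{C}^*)$ reads directly off the pre-rounding rule: a positive edge of $G$ flips to negative in $G'$ with probability $f^+(d_{uv})$, while a negative edge of $G$ flips to positive in $G'$ with probability $1-d_{uv}$. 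Summing via linearity of expectation gives
\begin{align*}
\E[\cost_{G'}(\mathcal{C}^*)] = \sum_{E^+_d}\bigl(1-f^+(d_{uv})\bigr) + \sum_{E^+_s} f^+(d_{uv}) + \sum_{E^-_d}(1-d_{uv}) + \sum_{E^-_s} d_{uv}.
\end{align*}

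Next, the first and fourth sums each admit the termwise bound of $1$, so their combined contribution is at most $|E^+_d| + |E^-_s| = \mathrm{OPT}$. For the second and third sums, the key ingredient is the pointwise inequality $f^+(x) \leq 2x$ for all $x \in [0,1]$. This follows by a short case analysis on the piecewise definition of $f^+$: it is immediate on $[0,a)$ and on $(b,1]$ (using $2b \approx 1.019 > 1$), and on $[a,b]$ one checks the endpoints $f^+(a)=0$ and $f^+(b)=1$ against $2a$ and $2b$, noting that the gap $2x - f^+(x)$ is concave on the interval so the minimum occurs at an endpoint. Plugging $f^+(d_{uv}) \leq 2d_{uv}$ into the second sum then yields
\begin{align*}
\sum_{E^+_s} f^+(d_{uv}) + \sum_{E^-_d}(1-d_{uv}) \;\leq\; 2\sum_{(u,v)\in E^+} d_{uv} + \sum_{(u,v)\in E^-}(1-d_{uv}) \;\leq\; 2\beta \cdot \mathrm{OPT},
\end{align*}
where the last step uses condition~\ref{item:beta-predictor} of the $\beta$-level predictor (absorbing the extra factor of $1$ in front of the negative-edge sum into the $2$). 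Combining the two bounds gives the desired $\E[\mathrm{OPT}'] \leq (2\beta+1)\cdot \mathrm{OPT}$.

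The only technical obstacle---and a mild one---is the pointwise inequality $f^+(x) \leq 2x$, which is essentially tight near $x = b$; the particular choice of constants $a = 0.19$ and $b = 0.5095$ is exactly what makes this work (any $b \leq 1/2$ would give a cleaner $f^+(x) \leq 2x$ on $[0,1]$ with room to spare, while $b > 1/2$ needs the explicit check). Everything else is a straightforward four-way case analysis combined with one invocation of the $\beta$-level predictor guarantee.
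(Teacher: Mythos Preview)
Your proof is correct and takes essentially the same approach as the paper: bound $\mathrm{OPT}'$ by $\E[\cost_{G'}(\mathcal{C}^*)]$, expand via linearity into the four cases, separate off an $\mathrm{OPT}$ term, and bound the remainder using $f^+(x)\le 2x$ (plus $f^-(x)=x$) together with the $\beta$-level predictor condition. The paper organizes the same computation using the indicator $x^*_{uv}\in\{0,1\}$ rather than your explicit four-way split, but the arguments are algebraically identical; if anything, you are more careful in justifying the pointwise inequality $f^+(x)\le 2x$, which the paper simply asserts.
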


\begin{proof}
    Let $\mathcal{C}^*$ be the optimal clustering on $G$ with cost $\mathrm{OPT}$. For any $u,v\in V$, let $x_{uv}^* \in \{0,1\}$ indicate whether $u$ and $v$ are in the same cluster or not in $\mathcal{C}^*$. Specifically, if $u$ and $v$ are in the same cluster in $\mathcal{C}^*$, then $x_{uv}^*=0$; otherwise, $x_{uv}^*=1$.
    Let $\mathcal{C}'^*$ be the optimal clustering on $G'$ with cost $\mathrm{OPT}'$.
    Then we have
    \begin{align*}
        &\E[\mathrm{OPT}'] =\E[\mathrm{cost}_{G'}(\mathcal{C'}^*)]
        \leq \E[\mathrm{cost}_{G'}(\mathcal{C}^*)]\\
        =~&\sum_{(u,v)\in E^+}[x_{uv}^*(1-p_{uv}) + (1-x_{uv}^*) p_{uv}] 
        + \sum_{(u,v)\in E^-} [x_{uv}^*(1-p_{uv}) + (1-x_{uv}^*) p_{uv}] \\
        =~& \sum_{(u,v)\in E^+}x_{uv}^* + \sum_{(u,v)\in E^-} (1-x_{uv}^*) + \sum_{(u,v)\in E^+} p_{uv}(1-2x_{uv}^*)
        + \sum_{(u,v)\in E^-} (1-p_{uv})(2x_{uv}^*-1)\\
        \leq~&  \mathrm{OPT} + \sum_{(u,v)\in E^+} p_{uv} + \sum_{(u,v)\in E^-} (1-p_{uv})\\
        \leq~&  \mathrm{OPT} + \sum_{(u,v)\in E^+} 2d_{uv} + \sum_{(u,v)\in E^-} (1-d_{uv})
        \leq  (1+2\beta)\cdot \mathrm{OPT},
    \end{align*}
    where the first step follows from $\mathrm{cost}_{G'}(\mathcal{C'}^*)=\mathrm{OPT}'$, 
    the second step follows from that $\mathcal{C}'^*$ is the optimal clustering on $G'$, 
    the third step follows from our construction of $G'$, 
    the fifth step follows from $\sum_{(u,v)\in E^+}x_{uv}^* + \sum_{(u,v)\in E^-} (1-x_{uv}^*)=\mathrm{OPT}$ and $\sum_{(u,v)\in E^+} p_{uv}(1-2x_{uv}^*)+\sum_{(u,v)\in E^-}  (1-p_{uv})(2x_{uv}^*-1) \leq \sum_{(u,v)\in E^+} p_{uv} + \sum_{(u,v)\in E^-} (1-p_{uv})$ since $1-2x_{uv}^*\in \{-1,1\}$, 
    the sixth step follows from our choice for $p_{uv}$, 
    and the last step follows from $\sum_{(u,v)\in E^+} 2d_{uv} + \sum_{(u,v)\in E^-} (1-d_{uv}) \leq 2(\sum_{(u,v)\in E^+} d_{uv} + \sum_{(u,v)\in E^-} (1-d_{uv}))\leq 2\beta \cdot \mathrm{OPT}$.
\end{proof}

Now we are ready to bound the cost of singleton clusters and, consequently, the final clustering returned by Algorithm~\textsc{PairwiseDiss2}.
\begin{lemma}
\label{lem:singleton-cost}
Let $S_2$ denote the cost of singleton clusters returned by Algorithm~\textsc{PairwiseDiss2}, then
$\E[S_2]\leq \frac{6(2\beta+1)}{k-1} \cdot \mathrm{OPT}$.
\end{lemma}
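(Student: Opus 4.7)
The key idea is to exploit the equivalence between Algorithm~\textsc{PairwiseDiss2} and Algorithm~\textsc{PairwiseDiss2WithPrerounding} (\cref{cla:eq-prerounding}), which converts the analysis on the original instance $G$ into an analysis of Algorithm~\textsc{CM-Pivot} on the random pre-rounded instance $G'=(V,E'^+\cup E'^-)$. This reduces a delicate, correlated analysis of counters on $G$ to a clean application of the known singleton-cost bound for \textsc{CM-Pivot} on a fixed instance, followed by taking expectation over the randomness used to generate $G'$.

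\textbf{Step 1.} By \cref{cla:eq-prerounding}, the clustering produced by Algorithm~\textsc{PairwiseDiss2} has the same distribution as the one produced by Algorithm~\textsc{PairwiseDiss2WithPrerounding}. In particular, the cost of singleton clusters $S_2$ has the same distribution in both algorithms, so it suffices to bound $\E[S_2]$ for the pre-rounding variant.

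\textbf{Step 2.} Condition on the random instance $G'$. Conditioned on $G'$, Algorithm~\textsc{PairwiseDiss2WithPrerounding} is exactly Algorithm~\textsc{CM-Pivot} applied to the fixed instance $G'$. Hence, by \cref{lem:offline-cost-singleton-3} applied to $G'$,
\begin{align*}
    \E[S_2 \mid G'] \leq \frac{6}{k-1}\cdot \mathrm{OPT}',
\end{align*}
where $\mathrm{OPT}'$ is the optimal Correlation Clustering cost on $G'$.

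\textbf{Step 3.} Take expectation over the randomness of $G'$ and apply \cref{lem:opt-opt'}:
\begin{align*}
    \E[S_2] = \E\bigl[\E[S_2 \mid G']\bigr] \leq \frac{6}{k-1}\cdot \E[\mathrm{OPT}'] \leq \frac{6(2\beta+1)}{k-1}\cdot \mathrm{OPT},
\end{align*}
which is the claimed bound.

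\textbf{Expected obstacle.} The only subtle point is making sure \cref{lem:offline-cost-singleton-3} is being applied correctly in the conditional setting: the lemma treats the input graph as arbitrary but fixed, while here $G'$ is itself random and its randomness is independent of the internal randomness of \textsc{CM-Pivot} (the permutation $\pi$). This independence is precisely what \cref{cla:eq-prerounding} guarantees, so the conditioning argument goes through without complications. Every other piece — the equivalence reduction, the singleton bound for \textsc{CM-Pivot}, and the $(2\beta+1)$ blow-up between $\mathrm{OPT}'$ and $\mathrm{OPT}$ — has already been established above, making the proof essentially a one-line composition.
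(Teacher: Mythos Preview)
Your proposal is correct and follows exactly the paper's approach: reduce to \textsc{CM-Pivot} on the pre-rounded instance $G'$ via \cref{cla:eq-prerounding}, apply \cref{lem:offline-cost-singleton-3} conditionally on $G'$, and then take expectation using \cref{lem:opt-opt'}. The paper's one-line proof invokes precisely these three lemmas in the same order; your write-up simply makes the conditioning step explicit.
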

\begin{proof}
By \cref{lem:offline-cost-singleton-3}, \cref{cla:eq-prerounding} and \cref{lem:opt-opt'}, we have $\E[S_2] \leq \frac{6}{k-1}\cdot \E[\mathrm{OPT}'] \leq \frac{6(2\beta+1)}{k-1} \cdot \mathrm{OPT}$.
\end{proof}

\begin{corollary}
\label{cor:streaming-cost-2.06}
    Let $\mathcal{C}_2$ denote the clustering returned by Line~\ref{line:cluster-B} of \cref{alg:insertion-only}, then $\E[\mathrm{cost}_G(\mathcal{C}_2)]=\E[P_2+S_2]\leq ( 2.06\beta+ \frac{6(2\beta+1)}{k-1})\cdot \mathrm{OPT}$.
\end{corollary}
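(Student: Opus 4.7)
The plan is to obtain the bound by simply assembling three already-established pieces: the equivalence between Line~\ref{line:cluster-B} of \cref{alg:insertion-only} and the offline Algorithm~\textsc{PairwiseDiss2} (\cref{lem:eq-2.06}), the pivot-cost bound (\cref{lem:pivot-cost}), and the singleton-cost bound (\cref{lem:singleton-cost}). Since all three lemmas are already proved earlier in the appendix, this final step is essentially just a combination argument, with no new technical obstacle.

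First I would invoke \cref{lem:eq-2.06} to reduce the analysis of the clustering $\mathcal{C}_2$ returned by \cref{alg:insertion-only} to the analysis of the clustering output by \textsc{PairwiseDiss2}, when both are coupled to use the same permutation $\pi$ and the same predictions $\{d_{uv}\}_{u,v\in V}$. Since the two clusterings are identically distributed under this coupling, their expected costs on $G$ are equal, so it suffices to bound $\E[\cost_G(\mathcal{C}_2)]$ in the offline model.

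Next I would use the decomposition of disagreements in \textsc{PairwiseDiss2} that is already set up in \cref{sec:omitted-details-insertion}: every decided edge is charged either to a pivot iteration or to a singleton iteration, and therefore the total cost satisfies the exact identity $\cost_G(\mathcal{C}_2) = P_2 + S_2$. By linearity of expectation, $\E[\cost_G(\mathcal{C}_2)] = \E[P_2] + \E[S_2]$. Plugging in $\E[P_2]\le 2.06\beta\cdot\opt$ from \cref{lem:pivot-cost} and $\E[S_2]\le \tfrac{6(2\beta+1)}{k-1}\cdot\opt$ from \cref{lem:singleton-cost} yields the claimed inequality
\[
\E[\cost_G(\mathcal{C}_2)] \le \Bigl(2.06\beta + \tfrac{6(2\beta+1)}{k-1}\Bigr)\cdot\opt.
\]

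There is no hard step here: the only thing to verify carefully is that the pivot/singleton decomposition $P_2+S_2$ indeed covers every disagreement exactly once, which follows from the fact that each edge $(u,v)$ is decided at exactly one iteration (the first at which one of its endpoints is removed from $U^{(t)}$), and that iteration is by definition either a pivot iteration (contributing to $P_2$) or a singleton iteration (contributing to $S_2$). Once that is noted, the corollary is immediate.
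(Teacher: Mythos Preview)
Your proposal is correct and matches the paper's own proof exactly: the paper also derives \cref{cor:streaming-cost-2.06} directly from \cref{lem:eq-2.06}, \cref{lem:pivot-cost}, and \cref{lem:singleton-cost}. Your extra sanity check that the decomposition $P_2+S_2$ accounts for every disagreement exactly once is fine and harmless, though the paper leaves it implicit.
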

\begin{proof}
\cref{cor:streaming-cost-2.06} follows from \cref{lem:eq-2.06}, \cref{lem:pivot-cost} and \cref{lem:singleton-cost}.
\end{proof}

\begin{proof}[Proof of \cref{thm:main-result-insertion}]
    \cref{thm:main-result-insertion} follows from \cref{cor:streaming-cost-3}, \cref{cor:streaming-cost-2.06} and \cref{lem:estimated-cost}.
\end{proof}
\textbf{Remark.} The reason our sampling-based approach works is mainly due to the fact that the rounding algorithm by Chawla et al.~\cite{CMSY15} is equivalent to the algorithm that first samples a subgraph $G'$ according to the prediction oracle and then runs the \textsc{Pivot} algorithm on $G'$. Therefore, if a Correlation Clustering algorithm $\mathcal{A}$ has a similar feature, i.e., can be viewed as a procedure that first obtains a core of the original graph (by using LP or other methods), and then applies the \textsc{Pivot} algorithm on the core, then we can get roughly the same approximation ratio as $\mathcal{A}$.

\section{Omitted details of \cref{sec:general}}
\label{sec:omitted-general}
\subsection{Proof of \cref{lem:general-pos-cost}}
    The positive cost of $\mathcal{C}_2$ on $H$ is
    \begin{align*}
    \cost_H^+(\mathcal{C}_2) = \cost_{H^+}(\mathcal{C}_2)
    &=\frac{1}{2}\sum_{C\in \mathcal{C}_2} \partial_{H^+}(C)\\
    &\le \frac{3}{2}\ln(n+1)\sum_{C\in \mathcal{C}_2} \vol_{H^+}(C)\\
    &\le\frac{3}{2}\ln(n+1)\left(\sum_{(u,v)\in E_H^+}w'_{uv}d_{uv}+ \sum_{C\in \mathcal{C}_2}\frac{V^*}{n}\right)\\
    &\le 3\ln(n+1)\cdot \sum_{(u,v)\in E_H^+}w'_{uv}d_{uv},
\end{align*}
where the second-to-last step follows from the triangle inequality, and the final step uses the fact that $V^*=\sum_{(u,v)\in E_H^+}w'_{uv}d_{uv}$ and $|\mathcal{C}_2|\le n$.

\subsection{Proof of \cref{lem:general-neg-cost}}
    Since
\begin{align*}
    \sum_{(u,v)\in E^-}(1-d_{uv}) \ge \sum_{C\in \mathcal{C}_2}\sum_{(u,v)\in E^-:u,v\in C}(1-d_{uv})
    &\ge \sum_{C\in \mathcal{C}_2}\sum_{(u,v)\in E^-:u,v\in C}\left(1-\frac{2}{3}\right)\\
    &= \frac{1}{3} \sum_{C\in \mathcal{C}_2}|(u,v)\in E^-:u,v\in C|,
\end{align*}
where the second step follows from $d_{uv} \le \frac{2}{3}$ by triangle inequality,
we have
\begin{align*}
    \cost_H^-(\mathcal{C}_2)=\sum_{C\in \mathcal{C}_2}|(u,v)\in E^-:u,v\in C| \le 3\sum_{(u,v)\in E^-}(1-d_{uv}).
\end{align*}

\subsection{Results under bounded degree graphs}
\label{subsec:d-regular}
If the input graph has bounded degree, then the adapted $\beta$-level predictor can be relaxed to the standard $\beta$-level predictor.
\begin{corollary}
\label{cor:general-reff}
    Let $\varepsilon\in (0,1/2)$ and $d,\beta \geq 1$.
    Given oracle access to a $\beta$-level predictor, there exists a single-pass streaming algorithm that, with high probability, achieves an $O(\log |E^-|+\beta d)$-approximation for Correlation Clustering on general graphs with maximum degree $d$ in dynamic streams. The algorithm uses $\tilde{O}(\varepsilon^{-2}n)$ words of space.
\end{corollary}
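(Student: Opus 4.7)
The plan is to show that Algorithm \ref{alg:general}, run without modification, already attains the stated approximation once we exploit the degree bound to translate a standard $\beta$-level predictor into something behaving like an adapted predictor. The corollary's additive split $O(\log|E^-|+\beta d)$ reflects the two branches of the algorithm contributing independently.

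\textbf{Step 1 (Branch 1, inherit $O(\log|E^-|)$).} If during the stream the space for negative edges never exceeds the budget, the algorithm falls back to the Ahn et al.\ post-processing. Lemma \ref{lem:icml15} directly yields $\cost_G(\mathcal{C}_1)\le O(\log|E^-|)\cdot\opt$, contributing the $O(\log|E^-|)$ summand and using no predictor at all.

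\textbf{Step 2 (bounded degree caps sparsifier weights).} In the ball-growing branch, the core quantity to control is $\sum_{(u,v)\in E_H^+} w'_{uv}d_{uv}$, which the adapted predictor would bound by $\beta\opt$. Without the adapted hypothesis, I would invoke Lemma \ref{lem:prop-eff}: on a graph of maximum degree $d$ every $(u,v)\in E^+$ satisfies $R_{G^+}(u,v)\ge 1/d$. The effective-resistance sampler underlying Theorem \ref{thm:dynamic-spectral} retains each edge with probability $p_e\ge\min\{1,\,c\log n/(\varepsilon^2 d)\}$ and sets $w'_e=1/p_e$, so every surviving weight is deterministically at most $O(d)$. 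Combining with $E_H^+\subseteq E^+$ and the standard $\beta$-level hypothesis gives
\begin{align*}
\sum_{(u,v)\in E_H^+} w'_{uv}\,d_{uv} \;\le\; O(d)\sum_{(u,v)\in E^+} d_{uv} \;\le\; O(\beta d)\cdot\opt,
\end{align*}
while $\sum_{(u,v)\in E^-}(1-d_{uv})\le\beta\opt$ is immediate. Hence the standard predictor effectively serves as an adapted $O(\beta d)$-level predictor inside the analysis of Lemma \ref{lem:general-cost}.

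\textbf{Step 3 (decoupling the product into a sum).} Plugging Step 2 into Lemma \ref{lem:general-cost}, together with the cut preservation of $H^+$, yields an intermediate bound $\cost_G(\mathcal{C}_2)\le O(\log|E^-|\cdot\beta d)\cdot\opt$ (after replacing the volume baseline $V^*/n$ in Algorithm \ref{alg:general} by $V^*/|E^-|$ so that the logarithmic factor sharpens from $\log n$ to $\log|E^-|$, in the spirit of Ahn et al.). To match the corollary's additive form, I would split the sparsifier edges into those with $w'_e=1$ (stored verbatim) and those with $w'_e>1$: the light-weight edges contribute the usual $O(\log|E^-|)\cdot\beta\opt$, while the heavy-weight edges—whose aggregate mass is controlled by the trace-preservation identity $\sum_{E_H^+} w'_e\le(1+\varepsilon)|E^+|=O(nd)$—contribute an additive $O(\beta d)\cdot\opt$ that sits outside the ball-growing logarithm. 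Alternatively, one can argue by a case split on whether $\beta d\le \log|E^-|$ (the multiplicative bound already gives $O(\log|E^-|)$) or $\beta d>\log|E^-|$ (a trivial fallback clustering based on connected components of $G^+$ attains $O(\beta d)\cdot\opt$ directly by charging each disagreement to a $d$-bounded neighborhood).

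\textbf{Main obstacle.} The hardest part is precisely this decoupling in Step 3: the naive analysis only produces the product $O(\beta d\log|E^-|)$, and converting it to the sum $O(\log|E^-|+\beta d)$ requires ensuring the $O(d)$ weight blow-up of Step 2 does not multiply through the ball-growing logarithm. Both routes sketched above—the heavy/light decomposition inside the ball-growing recurrence and the case-split fallback—depend crucially on the deterministic weight cap $w'_e=O(d)$ supplied by Lemma \ref{lem:prop-eff}, which is the essential new ingredient beyond the proof of Theorem \ref{thm:main-result-general}.
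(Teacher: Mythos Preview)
Your Steps 1 and 2 set up exactly the right ingredients, but you throw away the crucial part of your own bound and then spend Step 3 trying to repair damage that never had to occur. In Step 2 you write $p_e\ge\min\{1,c\log n/(\varepsilon^2 d)\}$; inverting this gives $w'_e\le \varepsilon^2 d/(c\log n)$, not merely $O(d)$. That extra $1/\log n$ is the whole point. Carrying it through the ball-growing bound from \cref{lem:general-pos-cost} (which contributes a $3\ln(n+1)$ factor), the two logarithms \emph{cancel}:
\[
\cost_G(\mathcal{C}_2)\;\le\;\frac{3\ln(n+1)}{1-\varepsilon}\cdot\frac{\varepsilon^2 d}{C\ln n}\sum_{(u,v)\in E_H^+}d_{uv}\;+\;3\sum_{(u,v)\in E^-}(1-d_{uv})\;\le\;O(d)\cdot\beta\,\opt,
\]
using $E_H^+\subseteq E^+$ and the standard $\beta$-level predictor. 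So the ball-growing branch already gives $O(\beta d)\cdot\opt$ with no logarithm attached, and combining with Step~1's $O(\log|E^-|)\cdot\opt$ yields the additive form immediately.

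Consequently Step 3 is unnecessary, and both routes you sketch there are problematic on their own terms. The heavy/light split does not avoid the logarithm on the heavy part: the ball-growing recurrence multiplies \emph{all} positive cut mass by $\ln(n+1)$, so separating edges by weight does not exempt the heavy ones from that factor. The connected-components fallback is also not obviously an $O(\beta d)$-approximation; the degree bound caps per-vertex disagreements, but charging them to $\opt$ still requires a structural argument you have not supplied. The actual fix is the one-line sharpening of your weight bound above, which is precisely how the paper proceeds.
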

\begin{proof}
    Recall that the $\varepsilon$-spectral sparsifier $H^+$ is constructed using the algorithm of \cref{thm:dynamic-spectral}, which is based on effective resistance-based sampling~\cite{KMMMNST20fast}. 
    Specifically, each edge $(u,v)$ in the original graph $G^+$ is sampled with probability $p_{uv}\ge \frac{C\ln n}{\varepsilon^2}\cdot R_G(u,v)$, where $C$ is a sufficiently large constant. For any edge $(u,v)\in E_H^+$, its weight is assigned by $w'_{uv} = \frac{1}{p_{uv}} \le \frac{\varepsilon^2}{C\ln n \cdot R_G(u,v)}$.
    Since the maximum degree of $G$ is $d$, by \cref{lem:prop-eff}, we have $R_G(u,v) \ge \frac{1}{2}(\frac{1}{\deg(u)}+\frac{1}{\deg(v)})\ge \frac{1}{d}$ for all $u,v\in V$.
    It follows that $w'_{uv} \le \frac{\varepsilon^2 d}{C\ln n }$ for all $(u,v)\in E_H^+$.
    
    Then it suffices to follow the proof of \cref{lem:general-cost} to obtain
    \begin{align*}
        \cost_G(\mathcal{C}_2)
        &\le \frac{3\ln(n+1)}{1-\varepsilon}\cdot \sum_{(u,v)\in E_H^+}w'_{uv}d_{uv} + 3\sum_{(u,v)\in E^-}(1-d_{uv})\\
        &\le \frac{3\varepsilon^2 d\ln(n+1)}{(1-\varepsilon)C\ln n}\cdot \sum_{(u,v)\in E_H^+}d_{uv} + 3\sum_{(u,v)\in E^-}(1-d_{uv})\\
        &\le \max\left\{\frac{(1+\varepsilon)d}{C}\left(1+\frac{1}{n\ln n}\right),3\right\}\cdot \left(\sum_{(u,v)\in E^+}d_{uv}+\sum_{(u,v)\in E^-}(1-d_{uv})\right)\\
        &= O(\beta d)\cdot \opt,
    \end{align*}
    where the third step uses the fact that $\frac{3\varepsilon^2}{1-\varepsilon}\le 1+\varepsilon$ for any $\varepsilon \in (0,1/2)$, that $\frac{\ln(n+1)}{\ln n} \le 1+\frac{1}{n\ln n}$ for sufficiently large $n$, and that $E_H^+ \subseteq E^+$; and the last step follows from the definition of the $\beta$-level predictor (\cref{def:beta-level-predictor}).

    Combining \cref{lem:icml15} with the above analysis yields \cref{cor:general-reff}.
\end{proof}

\section{Additional experiments}
\label{sec:additional-exp}
In this section, we provide detailed descriptions of the datasets and predictors used in the experiments. Additionally, we present further experimental results.

\subsection{Detailed descriptions of datasets}
\label{subsec:detailed-dataset}
In this subsection, we give a detailed description of the real-world datasets used in our experiments. Recall that 
we use \textsc{EmailCore} \citep{LKF07, YBLG17}, \textsc{Facebook} \citep{ML12},  \textsc{LastFM}~\citep{Rozemberczki20characteristic}, and \textsc{DBLP}~\citep{Yang15defining} from the Stanford Large Network Dataset Collection \citep{snapnets}. 
We provide basic statistics about these datasets in \cref{tab:datasets}. 

\begin{table}[t]

\caption{Statistics of real-world datasets. Here, $n$ and $m$ denote the number of vertices and edges in the original datasets.}
\label{tab:datasets}
\begin{center}
\begin{tabular}{crr}
\toprule
\textbf{Datasets} & $n$ & $m$ \\ \midrule
\textsc{EmailCore} & \numprint{1005} & \numprint{25571} \\ \midrule
\textsc{Facebook} & \numprint{4039} & \numprint{88324} \\ \midrule
\textsc{LastFM} & \numprint{7624} & \numprint{27806} \\ \midrule
\textsc{DBLP} & \numprint{317080} & \numprint{1049866} \\ 
\bottomrule
\end{tabular}
\end{center}
\end{table}

\textsc{EmailCore} is a directed network with \numprint{1005} vertices and \numprint{25571} edges. This network is constructed based on email exchange data from a large European research institution. Each vertex represents a person in the institution. There is a directed edge $(u, v)$ in the network if person $u$ has sent at least one email to person $v$.

\textsc{Facebook} is an undirected network with \numprint{4039} vertices and \numprint{88324} edges. This network consists of friend lists of users from Facebook. Each vertex represents a user in Facebook. There is an undirected edge $(u, v)$ in the network if $u$ and $v$ are friends.
Due to the computational bottleneck of solving the LP, we only use its three ego-networks: \textsc{FB 0} ($n=\numprint{333}, m=\numprint{5038}$), \textsc{FB 414} ($n=\numprint{150}, m=\numprint{3386}$), \textsc{FB 3980} ($n=\numprint{52}, m=\numprint{292}$). 

\textsc{LastFM} is an undirected network with \numprint{7624} vertices and \numprint{27806} edges. This network is a social network of LastFM users,  collected from the public API. Each vertex represents a LastFM user from an Asian country. There is an undirected edge $(u, v)$ in the network if $u$ and $v$ are mutual followers.

\textsc{DBLP} is an undirected co-authorship network with \numprint{317080} vertices and \numprint{1049866} edges.
Each vertex represents an author. There is an undirected edge $(u, v)$ in the network if $u$ and $v$ publish at least one paper together. Ground-truth communities are defined based on publication venues: authors who have published in the same journal or conference belong to the same community. For our experiments, we use a sampled subgraph consisting of \numprint{10000} vertices.

\textbf{Remark.}
We treat the original edges in the datasets as positive edges and non-edges as negative implicitly. (For datasets used in experiments where binary classifiers are employed as predictors, the interpretation of positive and negative edges differs slightly. See \cref{subsec:detailed-predictor} for details.)
For directed networks, we convert all directed edges into undirected edges.
\emph{We highlight that since we consider labeled complete graphs in the experiments, the number of edges scales quadratically w.r.t. the number of vertices, which leads to non-trivial instance sizes.}

\subsection{Detailed descriptions of predictors}
\label{subsec:detailed-predictor}
\textbf{Noisy predictor.} We use this predictor for datasets with available optimal clusterings. We form this predictor by performing perturbations on optimal clusterings. Specifically, for any two vertices $u,v\in V$, if $u$ and $v$ are in different clusters in the optimal clustering, then we set the prediction $d_{uv}$ to be $1-\varepsilon_0$, otherwise $\varepsilon_0$, where $\varepsilon_0 \in (0,0.5)$. 
For synthetic datasets with $p>0.9$, we can assume that the ground truths are also optimal solutions. For other datasets, we use the powerful LP solver Gurobi \citep{gurobi} to get the optimal clusterings.

\textbf{Spectral embedding.} 
We use this predictor for \textsc{EmailCore} and \textsc{LastFM}. It first maps all the vertices to a $d$-dimensional Euclidean space using the graph Laplacian, then clusters all the vertices based on their embeddings. 
For any two vertices $u,v\in V$, we form the prediction $d_{uv}$ to be $1-\frac{\left\langle \vx_u,\vx_v\right\rangle}{\norm{\vx_u}\norm{\vx_v}}$, where $\vx_u, \vx_v \in \mathbb{R}^d$ are spectral embeddings of $u$ and $v$, and $\left\langle \vx_u,\vx_v\right\rangle$ is the dot product of $\vx_u$ and $\vx_v$. Note that a larger $d$ indicates a higher-quality predictor.

\textbf{Binary classifier.} 
We use this predictor for datasets with available ground-truth communities. 
This predictor is constructed by training a binary classifier (based on an MLP model) to predict whether two vertices belong to the same cluster using node features. 
In this setting, the goal of Correlation Clustering  aligns with that of community detection by treating edges between two vertices in the same (ground-truth) community as positive edges and edges between two vertices in different communities as negative edges.
The predictions provided by the binary classifier (i.e., binary values in $\{0, 1\}$) are then used as the pairwise distances $d_{uv}$ in our algorithms.

\subsection{Additional results}

\subsubsection{Performance of \cref{alg:insertion-only} on real-world datasets}
In this subsection, we present the results of our algorithm in insertion-only streams (\cref{alg:insertion-only}) on real-world datasets, as shown in \cref{fig:insertion-results-realworld}. 
The results show that under good prediction quality, \cref{alg:insertion-only} consistently outperforms other baselines across all datasets used. For example, in Figure~\ref{fig:insertion-results-realworld}(\subref{fig:insertion-fb0-vary-beta}), when $\beta \approx 1.2$, 
the average cost of \cref{alg:insertion-only} is $13\%$ lower than that of \textsc{CLMNPT21} and $17\%$ lower than that of \textsc{CKLPU24}. 
Besides, in Figure~\ref{fig:insertion-results-realworld}(\subref{fig:insertion-email-vary-d}), \cref{alg:insertion-only} reduces the clustering cost by up to $14\%$ compared to \textsc{CLMNPT21}. 
Even if the prediction quality is poor, \cref{alg:insertion-only} does not perform worse than \textsc{CM23} and achieves comparable performance to \textsc{CLMNPT21} (on \textsc{Facebook} subgraphs).

\begin{figure*}[t]
	\centering
        \begin{subfigure}{0.244\textwidth}
		\includegraphics[width=\textwidth]{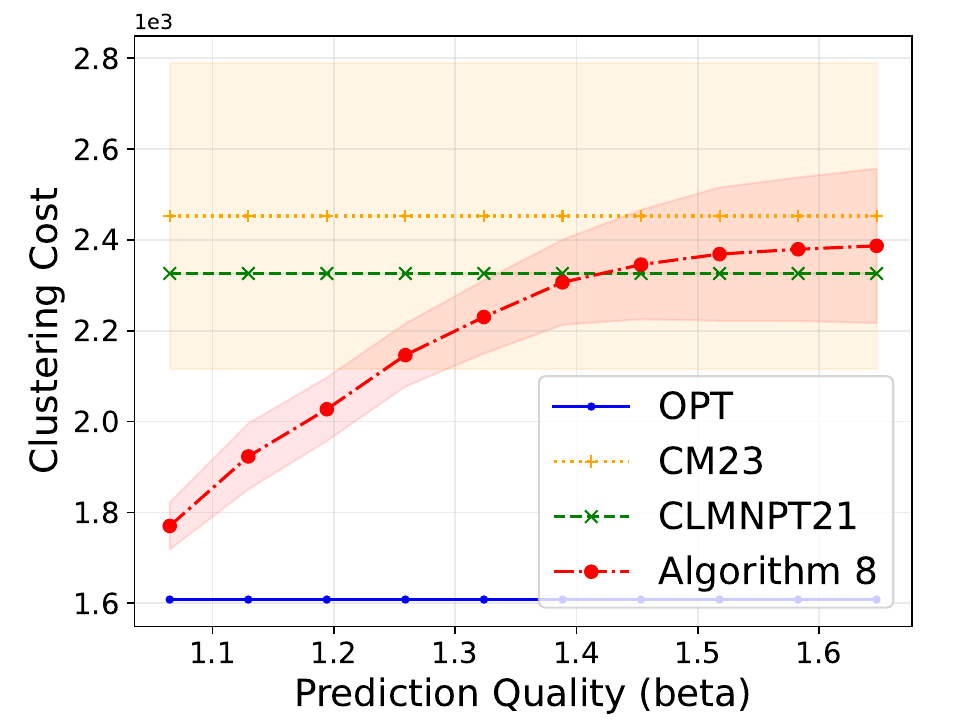}
		\caption{\textsc{FB 0}, vary $\beta$}
		\label{fig:insertion-fb0-vary-beta}
	\end{subfigure}
	\hfill
	\begin{subfigure}{0.244\textwidth}
		\includegraphics[width=\textwidth]{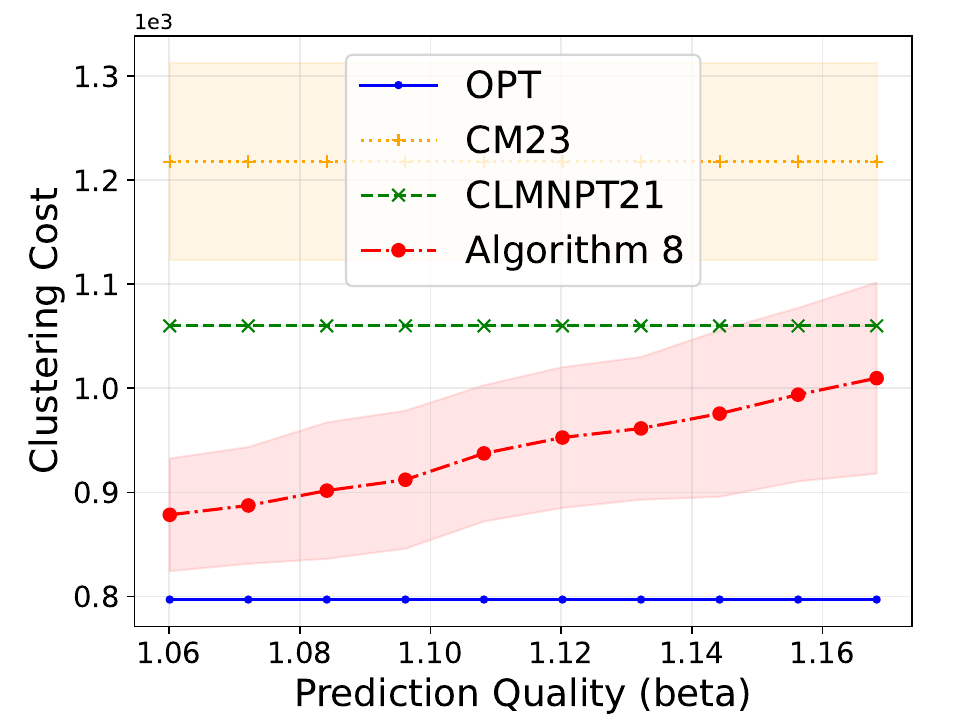}
		\caption{\textsc{FB 414}, vary $\beta$}
		\label{fig:insertion-fb414-vary-beta}
	\end{subfigure}
	\hfill
	\begin{subfigure}{0.244\textwidth}
		\includegraphics[width=\textwidth]{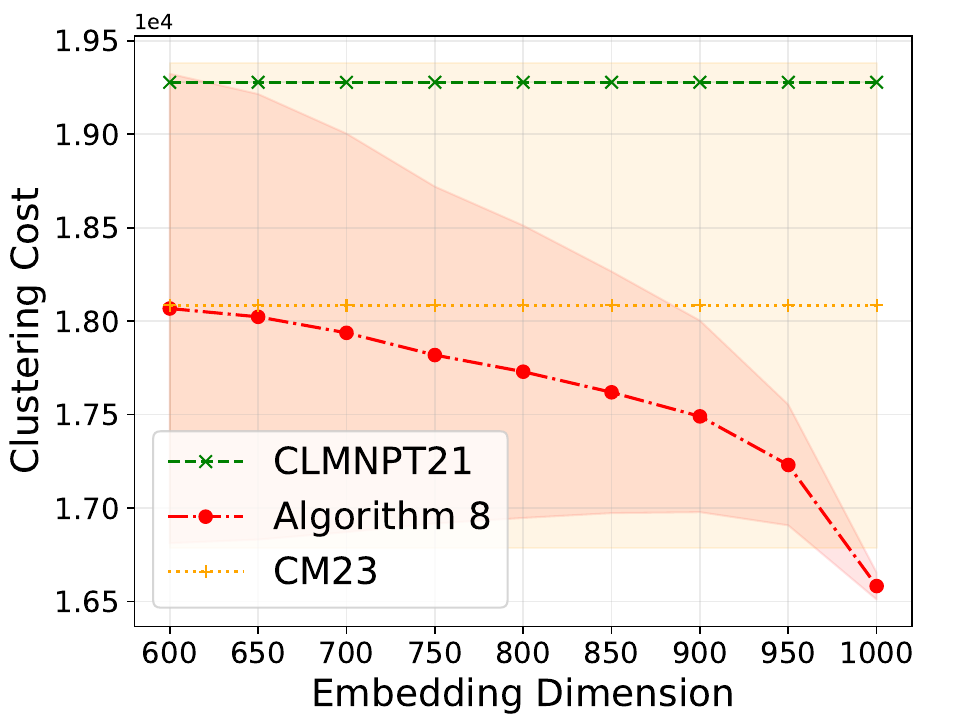}
		\caption{\textsc{EmailCore}, vary $d$}
		\label{fig:insertion-email-vary-d}
	\end{subfigure}
	\hfill
	\begin{subfigure}{0.244\textwidth}
		\includegraphics[width=\textwidth]{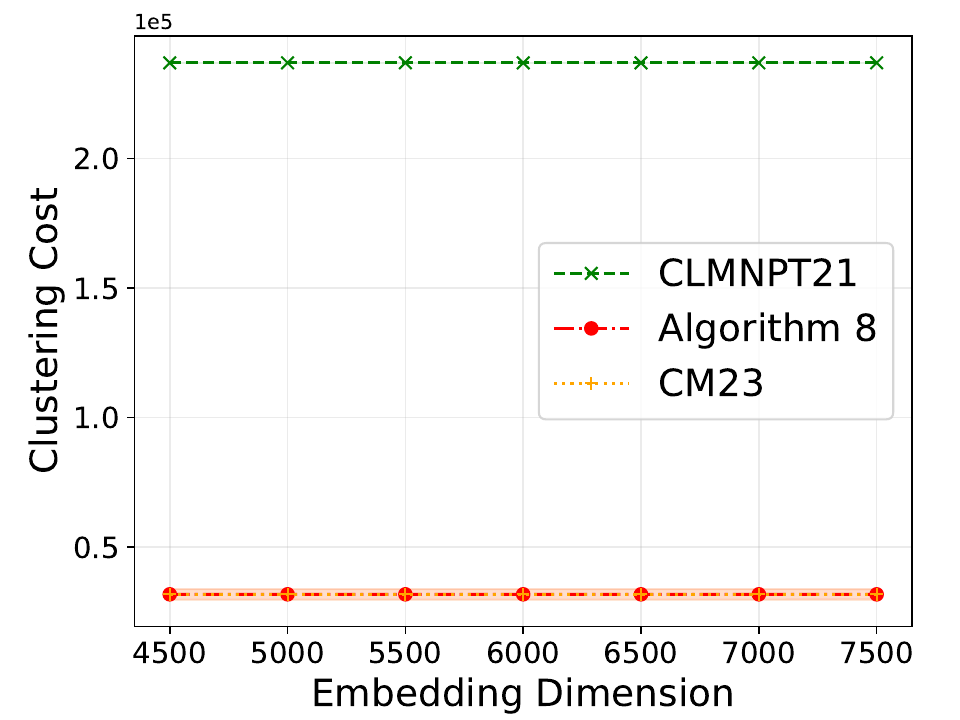}
		\caption{\textsc{LastFM}, vary $d$}
		\label{fig:insertion-lastfm-vary-d}
	\end{subfigure}
	\caption{Performance of \cref{alg:insertion-only} on real-world datasets. (\subref{fig:insertion-fb0-vary-beta})--(\subref{fig:insertion-fb414-vary-beta}) show the effect of prediction quality $\beta$ on two \textsc{Facebook} subgraphs, where we use noisy predictors. (\subref{fig:insertion-email-vary-d})--(\subref{fig:insertion-lastfm-vary-d}) examine the effect of the dimension $d$ of spectral embeddings on \textsc{EmailCore} and \textsc{LastFM}, where we use spectral embedding as the predictor. We set $k=25$ for (\subref{fig:insertion-fb0-vary-beta}), $k=15$ for (\subref{fig:insertion-fb414-vary-beta}), $k=10$ for (\subref{fig:insertion-email-vary-d}), and $k=50$ for (\subref{fig:insertion-lastfm-vary-d}).}
	\label{fig:insertion-results-realworld}
\end{figure*}

\begin{table}[t]

\caption{Clustering costs ($\times$ 1e3) of \cref{alg:insertion-only} with binary classifiers as predictors, compared to its non-learning counterpart. We set parameter $k=75$ across all datasets. The reported values are averaged over $5$ runs.}
\label{tab:binary-classifier-insertion}
\begin{center}
\begin{tabular}{crrrr}
\toprule
\diagbox{\textbf{Algorithm}}{\textbf{Dataset}} & \makecell{SBM\\($n=1200$)} & \makecell{SBM\\($n=2400$)} & \makecell{SBM\\($n=3600$)} & \textsc{DBLP}\\ \midrule
\textsc{CM23}    & \numprint{99.3}  & \numprint{385.7}   & \numprint{901.6}  & \numprint{236.4} \\
\textbf{\cref{alg:insertion-only}}   & \numprint{35.9} & \numprint{155.3} & \numprint{324.9}   & \numprint{214.9}  \\
\bottomrule
\end{tabular}
\end{center}
\end{table}

\subsubsection{Performance of \cref{alg:insertion-only} using binary classifiers as predictors}
In this subsection, we present the results of \cref{alg:insertion-only} using binary classifiers as predictors, as shown in \cref{tab:binary-classifier-insertion}. 
These experiments are performed on three SBM graphs with parameter $p=0.95$ and varying sizes, as well as the \textsc{DBLP} dataset (a sampled subgraph with \numprint{10000} vertices).
The results demonstrate that \cref{alg:insertion-only} consistently outperforms its non-learning counterpart across all datasets. For instance, on the SBM graphs with \numprint{1200} and \numprint{3600} vertices, \cref{alg:insertion-only} achieves a $64\%$ reduction in clustering cost compared to \textsc{CM23}.

\subsubsection{Running time of our algorithms}
In this subsection, we present the running time of our algorithms for complete graphs on three \textsc{Facebook} subgraphs, compared to their non-learning counterparts, as shown in \cref{tab:running-time-dynamic} (\cref{alg:dynamic-stream}) and \cref{tab:running-time-insertion} (\cref{alg:insertion-only}). The results show that our learning-augmented algorithms do not introduce significant time overheads. The slight increase in running time is due to the additional steps of querying the oracles and calculating the costs of two clusterings. These steps are both reasonable and acceptable. Moreover, in the streaming setting, space efficiency is typically the primary focus.

\begin{table}[t]

\caption{Running time (ms) of \cref{alg:dynamic-stream} (for dynamic streams) on \textsc{Facebook} subgraphs, compared to its non-learning counterpart. For \textsc{FB 0}, we set $\beta = 1.19$. For \textsc{FB 414}, we set $\beta = 1.12$. For \textsc{FB 3980}, we set $\beta = 1.19$.}
\label{tab:running-time-dynamic}
\begin{center}
\begin{tabular}{crrr}
\toprule
\diagbox{\textbf{Algorithm}}{\textbf{Dataset}}        & \textsc{FB 0} & \textsc{FB 414} & \textsc{FB 3980} \\ \midrule
\textsc{CKLPU24}  & \numprint{1738.16} & \numprint{165.55} & \numprint{7.32}        \\
\textbf{\cref{alg:dynamic-stream}} & \numprint{1639.22} & \numprint{163.35} & \numprint{7.69}        \\ 
\bottomrule
\end{tabular}
\end{center}
\end{table}

\begin{table}[t]

\caption{Running time (ms) of \cref{alg:insertion-only} (for insertion-only streams) on \textsc{Facebook} subgraphs, compared to its non-learning counterpart. For \textsc{FB 0}, we set $\beta = 1.19$. For \textsc{FB 414}, we set $\beta = 1.12$. For \textsc{FB 3980}, we set $\beta = 1.19$.}
\label{tab:running-time-insertion}
\begin{center}
\begin{tabular}{crrr}
\toprule
\diagbox{\textbf{Algorithm}}{\textbf{Dataset}}       &  \textsc{FB 0} & \textsc{FB 414} & \textsc{FB 3980} \\ \midrule
\textsc{CM23} & \numprint{30.65} & \numprint{6.67} & \numprint{0.97}         \\
\textbf{\cref{alg:insertion-only}}   & \numprint{81.31} & \numprint{16.58} & \numprint{2.12}         \\ 
\bottomrule
\end{tabular}
\end{center}
\end{table}

\end{document}